\newtheorem{corollary}{Corollary}[section]
\theoremstyle{definition} 
\newtheorem{definition}{Definition}[section]
\theoremstyle{plain} 
\newtheorem{lemma}{Lemma}[section]
\newtheorem{proposition}{Proposition}[section]
\newtheorem{remark}{Remark}[section]
\newtheorem{theorem}{Theorem}[section]
\Crefname{observation}{Observation}{Observations}
\tikzstyle{vertex} = [circle,fill=black!0,minimum size=4pt,inner sep=1pt]
\tikzstyle{every node}=[transform shape]
\newcommand{\basisthreecycle}{
  \foreach \pos/\name in {80/a1, 100/a2, 200/b1, 220/b2, 320/c1, 340/c2}
    \path ( \pos:2.1cm) node[draw,vertex] (\name) {}; 

  \draw[gray] (a1) -- (a2);
  \draw[gray] (b1) -- (b2);
  \draw[gray] (c1) -- (c2);

  \foreach \pos/\name in {20/X1, 40/X2, 140/Y1, 160/Y2, 260/Z1, 280/Z2}
    \path ( \pos:2.1cm) node[draw,vertex] (\name) {}; 

  \draw[gray] (X1) -- (X2);
  \draw[gray] (Y1) -- (Y2);
  \draw[gray] (Z1) -- (Z2);
  \draw (X1) to [bend left=45] (X2);
  \draw (Y1) to [bend left=45] (Y2);
  \draw (Z1) to [bend left=45] (Z2);

  \foreach \u/\v in {a2/Y1, Y2/b1, b2/Z1, Z2/c1, c2/X1, X2/a1}
    \draw[dotted] (\u) -- (\v);
}
\newcommand{\basispairtwocycles}{
  \foreach \pos/\name in {35/a1, 55/a2, 125/b1, 145/b2, 215/c1, 235/c2, 305/d1, 325/d2}
    \path ( \pos:2.1cm) node[draw,vertex] (\name) {}; 

  \draw[gray] (a1) -- (a2);
  \draw[gray] (b1) -- (b2);
  \draw[gray] (c1) -- (c2);
  \draw[gray] (d1) -- (d2);

  \foreach \pos/\name in {-10/W1, 10/W2, 80/X1,100/X2,170/Y1,190/Y2,260/Z1,280/Z2}
    \path (\pos:2.1cm) node[draw,vertex] (\name) {};

  \draw[gray] (W1) -- (W2);
  \draw[gray] (X1) -- (X2);
  \draw[gray] (Y1) -- (Y2);
  \draw[gray] (Z1) -- (Z2);
  \draw (W1) to [bend left=45] (W2);
  \draw (X1) to [bend left=45] (X2);
  \draw (Y1) to [bend left=45] (Y2);
  \draw (Z1) to [bend left=45] (Z2);

  \draw[dotted] (a2) -- (X1);
  \draw[dotted] (X2) -- (b1);
  \draw[dotted] (b2) -- (Y1);
  \draw[dotted] (Y2) -- (c1);
  \draw[dotted] (c2) -- (Z1);
  \draw[dotted] (Z2) -- (d1);
  \draw[dotted] (d2) -- (W1);
  \draw[dotted] (W2) -- (a1);
}
\newcolumntype{S}{>{\centering\arraybackslash} m{.45\linewidth} }
\newcolumntype{U}{>{\centering\arraybackslash} m{.05\linewidth} }
\newcommand{\fallingfactorial}[2]{\ensuremath{{#1}^{\underline{#2}}}}
\newcommand{\firststirlingnumber}[2]{\ensuremath{\genfrac{[}{]}{0pt}{}{#1}{#2}}}
\newcommand{\hultmannumber}[2]{\ensuremath{\mathcal{S}_{H}(#1,#2)}}
\newcommand{\risingfactorial}[2]{\ensuremath{{#1}^{\overline{#2}}}}
\newcommand{\signedhultmannumber}[2]{\ensuremath{\mathcal{S}^{\pm}_{H}(#1,#2)}}
\newcommand{\oddhultmannumber}[2]{\ensuremath{\mathcal{S}_{H}^{odd}(#1,#2)}}
\title{The distribution of cycles in breakpoint graphs of signed permutations}
\author{Simona Grusea}
\ead{grusea@insa-toulouse.fr}
\address{Institut de Math\'ematiques de Toulouse, INSA de Toulouse, Universit\'e de Toulouse.}
\author{Anthony Labarre}
\ead{Anthony.Labarre@cs.kuleuven.be}
\address{Department of Computer Science, K. U. Leuven, Celestijnenlaan 200A - bus 2402, 3001 Heverlee, Belgium.}
\begin{document}

\begin{abstract}
\emph{Breakpoint graphs} are ubiquitous structures in the field of genome rearrangements. Their cycle decomposition has proved useful in computing and bounding many measures of (dis)similarity between genomes,
and studying the distribution of those cycles is therefore critical to gaining insight on the distributions of the genomic distances that rely on it. We extend here the work initiated by \citet{doignon-hultman}, who enumerated unsigned permutations whose breakpoint graph contains $k$ cycles, to \emph{signed} permutations, and prove explicit formulas for computing the expected value and the variance of the corresponding distributions, both in the unsigned case and in the signed case. We also show how our results can be used to derive simpler proofs of other previously known results. Finally, we compare the distribution of the number of cycles in breakpoint graphs of unsigned and signed permutations to the distributions of several well-studied genomic distances, emphasising the cases where approximations obtained in this way stand out.
\end{abstract}

\begin{keyword}
Genome rearrangements\sep Hultman numbers\sep Permutations
\end{keyword}

\maketitle


\section{Introduction}

The field of comparative genomics is concerned with quantifying
similarity or divergence between organisms. Several measures have
been proposed to that end, including pattern matching based
approaches or edit distances relying on a given set of
biologically relevant operations. A standard example of such a
method, and a \emph{de facto} standard in phylogenetics, is
the approach based on \emph{sequence alignment}, which is motivated by the observation that
genomes evolve by point mutations and aims at explaining evolution
by replacements, insertions or deletions of single nucleotides
(see e.g.\ \citet{li-survey} for a recent account of sequence
alignment techniques and their uses).

However, genomes also evolve by large-scale mutations that act on whole segments of the genome, as opposed to point mutations. Examples of such mutations include \emph{reversals}, which reverse the order of elements along a segment, \emph{transpositions}, which move segments to another location, and \emph{translocations}, which exchange segments that belong to different chromosomes.
Many models have been proposed for studying those \emph{genome rearrangements},
which vary according to the kinds of mutations one wants to take into account,
how these should be weighted, or which objects are best suited for representing
genomes (see e.g.\ \citet{fertin-combinatorics} for an extensive survey).
Nonetheless, a striking similarity between all these models is how heavily they
rely on variants of a graph first introduced by \citet{bafna-genome-journal},
known as the \emph{breakpoint graph}, and its decomposition into edge- or
vertex-disjoint cycles, which has proved most useful in obtaining extremely
tight bounds on many genome rearrangement distances, as well as formulas for
computing the exact distance in several cases. The link between several genomic distances and the
number of cycles in breakpoint graphs will be discussed in more detail in \Cref{sec:distributions}.

Many mathematical questions arise when studying genome
rearrangement distances, particularly concerning their
distributions, as well as related statistical parameters. Since
quite a few such distances can be computed or approximated using the cycle
decomposition of the breakpoint graph, investigating the
distribution of such cycles appears as a natural, general and effective
starting point to answering those questions. We will restrict our attention in this paper to the permutation model, which can be used when all genomes under comparison consist of exactly the same genes (but in a different order) without duplications. Breakpoint graphs can
be associated to permutations,
and the distribution of cycles in
this case was first characterised by \citet{doignon-hultman},
which later led \citet{bona-average} to prove a very simple
expression for the expected value of the \emph{block-interchange
distance} originally introduced by \citet{christie-block}.

However, it has often been argued that \emph{signed permutations}
provide a more realistic model of evolution, since signs can be
used to represent on which strand a given DNA segment is located.
Using this model, \citet{szekely-expectation} obtained bounds for the expectation
and the variance of the number of cycles in the breakpoint graph
of a random signed permutation. Using the finite Markov chain
embedding technique, \citet{grusea-random} obtained the
distribution of the number of cycles in the breakpoint graph of a
random signed permutation in the form of a product of transition
probability matrices of a certain finite Markov chain.
Her method allows to derive recurrence formulas and to compute this distribution numerically, but the
computational complexity is quite high and limits the practical
applications.

In this work, we obtain a new expression for computing the number
of unsigned permutations whose breakpoint graph contains a given
number of cycles, as well as what is to the best of our knowledge
the first analytic expression for computing the number of
\emph{signed} permutations whose breakpoint graph contains a given
number of cycles. The formula obtained in the signed case is
complicated, but we obtain simpler formulas for a couple of
restricted cases. We also use our results to derive elementary proofs of
previously known results, including a binomial identity and the
distribution of the number of cycles in the breakpoint graph of an unsigned
permutation.
We prove formulas for computing the expected value and the variance of the distribution
of those cycles, both in the unsigned case and in the signed case. Finally, we also discuss how the results we obtain relate
to a number of widely-studied genome rearrangement distances, and
in particular, how the distribution of cycles in breakpoint graphs
can be used to approximate (and in some cases, to recover exactly)
the distribution of those distances.


\section{Notations and definitions}

We recall here a few notions that will be used throughout the
paper. We assume the reader is familiar with graph theory (if not,
see e.g.\ \citet{diestel-graph}), but nevertheless review a few
useful definitions, if only to agree on notation. We will work
with \emph{non-simple} graphs, i.e.\ graphs that may contain
\emph{loops} (edges connecting a vertex to itself)
as well as parallel edges. We will also
work with both undirected and directed graphs, using $\{u,v\}$ to
denote edges in the former case and $(u,v)$ to denote arcs in the
latter.

\begin{definition}
 A \emph{matching} $M$ in a graph $G=(V,E)$ is a subset of pairwise vertex-disjoint edges of $E$. It is a
\emph{perfect matching}
of $U\subseteq V$ if every vertex in $U$ is incident to an edge in $M$.
\end{definition}

\begin{definition}
 A graph is \emph{$k$-regular} if each of its vertices has degree $k$.
\end{definition}

In particular, if $G$ is a $2$-regular graph, then it decomposes
in a unique way into a collection of edge- and vertex-disjoint
cycles, up to the ordering of cycles and to rotations of elements
within each cycle (i.e., $(a,b,c,d)=(b,c,d,a)$), as well as
directions in which cycles are traversed if $G$ is undirected (i.e.,
$(a,b,c,d)=(d,c,b,a)$). This allows us to denote unambiguously
$c(G)$ the number of cycles in $G$. The \emph{length} of a
cycle\label{def:length-of-a-cycle} is the number of vertices it
contains, and a \emph{$k$-cycle} in $G$ is a cycle of length $k$.

\begin{definition}
 A graph is \emph{hamiltonian} if it contains a cycle visiting every vertex exactly once.
\end{definition}

We now recall a few basic notions about permutations (for more details, see e.g.\ \citet{bjorner-combinatorics} and \citet{wielandt-finite}).

\begin{definition}\label{def:permutation}
A \emph{permutation} of $\{1, 2, \ldots, n\}$ is a bijective application of $\{1, 2$, $\ldots$, $n\}$ onto itself.
\end{definition}

The \emph{symmetric group} $S_n$ is the set of all permutations of $\{1, 2, \ldots, n\}$, together with the usual function composition $\circ$, applied from right to left. We use lower case Greek letters to denote permutations, typically $\pi=\langle\pi_1\ \pi_2\ \cdots\ \pi_n\rangle$, with $\pi_i=\pi(i)$, and in particular write the \emph{identity permutation} as $\iota=\langle 1\ 2\ \cdots\ n\rangle$.

\begin{definition}
The \emph{graph} $\Gamma(\pi)$ of a permutation $\pi\in S_n$ has vertex set $\{1,2,\ldots$, $n\}$, and  contains an arc $(i,j)$ whenever $\pi_i=j$.
\end{definition}

\Cref{def:permutation} implies that $\Gamma(\pi)$ is $2$-regular and as such decomposes in a unique way into disjoint cycles (up to the ordering of cycles and to rotations of elements within each cycle),
which we refer to as the \emph{disjoint cycle decomposition} of $\pi$.
It is also common to refer to a permutation as a $k$-cycle, if the only cycle of length greater than $1$
that its graph contains has length $k$. \Cref{fig:exemple-graph-of-a-permutation} shows an example of such a decomposition. To lighten the presentation, we will shorten the notation $c(\Gamma(\pi))$ into $c(\pi)$, for a given permutation $\pi$.

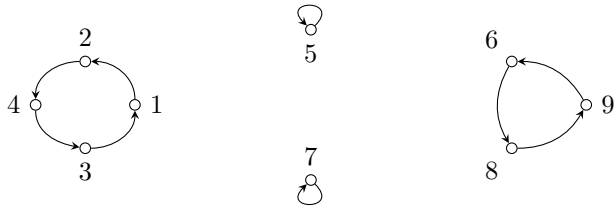
\begin{figure}[htbp]
  \centering
    \begin{tikzpicture}[scale=1,>=stealth,bend angle=45]
      \tikzstyle{arc} = [draw,->]
      \tikzstyle{every loop} = []
        \node[vertex,draw] (1) at (.66,0) [label=right:$1$] {};
        \node[vertex,draw] (2) at (0,.577) [label=above:$2$] {};
        \node[vertex,draw] (3) at (-.66,0) [label=left:$4$] {};
        \node[vertex,draw] (4) at (0,-.577) [label=below:$3$] {};
        \draw[arc] (1) to [out=90,in=0] (2);
        \draw[arc] (2) to [out=180,in=90] (3);
        \draw[arc] (3) to [out=270,in=-190] (4);
        \draw[arc] (4) to [out=0,in=-90] (1);
      \begin{scope}[xshift=3cm]
        \node[vertex,draw] (5) at (0,1) [label=below:$5$] {};
        \draw[arc] (5) to [out=45,in=135,loop] (5);
        \node[vertex,draw] (7) at (0,-1) [label=above:$7$] {};
        \draw[arc] (7) to [out=-45,in=-135,loop] (7);
      \end{scope}
      \begin{scope}[xshift=6cm]
        \node[vertex,draw] (U3) at (.66,0) [label=right:$9$] {};
        \node[vertex,draw] (U6) at (-.33,.577) [label=above left:$6$] {};
        \node[vertex,draw] (U7) at (-.33,-.577) [label=below left:$8$] {};
        \draw[arc] (U3) to [out=120,in=0] (U6);
        \draw[arc] (U6) to [out=-120,in=120] (U7);
        \draw[arc] (U7) to [out=0,in=-120] (U3);
      \end{scope}
    \end{tikzpicture}
  \caption{The graph of the permutation $\pi=\langle 2\ 4\ 1\ 3\ 5\ 8\ 7\ 9\ 6\rangle$.}
  \label{fig:exemple-graph-of-a-permutation}
\end{figure}

\begin{definition}
The \emph{conjugate} of a permutation $\pi$ by a permutation $\sigma$, both in $S_n$, is the permutation $\sigma\circ\pi\circ\sigma^{-1}$, and can be obtained by replacing every element $i$ in the disjoint cycle decomposition of $\pi$ with $\sigma_i$.
\end{definition}

\begin{definition}
A \emph{signed permutation} is a permutation of $\{1, 2, \ldots, n\}$ where each element has an additional ``$+$'' or ``$-$'' sign.
\end{definition}

The \emph{hyperoctahedral group} $S^{\pm}_n$ is the set of all signed permutations of $n$ elements, together with the usual function composition $\circ$, applied from right to left.
It is not mandatory for a signed permutation to have negative elements, so $S_n\subset S^{\pm}_n$ since each permutation in $S_n$ can be viewed as a signed permutation without negative elements. To lighten the presentation, we will conform to the tradition of omitting ``$+$'' signs for positive elements.

Finally, we
recall the definition of
the following graph introduced by \citet{bafna-genome-journal}, which turned out to be an extremely useful tool for studying and solving genome rearrangement problems and which will be central to our discussions.

\begin{definition}\label{def:BG}
Given a signed permutation $\pi$ in $S^\pm_n$, transform it into an unsigned permutation $\pi'$ in $S_{2n}$
by mapping $\pi_i$ onto the sequence $(2\pi_i-1,2\pi_i)$ if $\pi_i>0$, or $(2|\pi_i|,2|\pi_i|-1)$ if $\pi_i<0$, for $1\leq i\leq n$. The \emph{breakpoint graph} of $\pi$ is the undirected bicoloured graph $BG(\pi)$ with ordered vertex set $(\pi'_0=0,\pi'_1, \pi'_2, \ldots, \pi'_{2n},\pi'_{2n+1}=2n+1)$ and whose edge set is the union of the following two
perfect matchings
of $V(BG(\pi))$:
\begin{itemize}
\item black edges $\delta_B(\pi)=\{\{\pi'_{2i}, \pi'_{2i+1}\}\ |\ 0\leq i\leq n\}$;
\item grey edges $\delta_G=\{\{\pi'_{2i}, \pi'_{2i}+1\}\ |\ 0\leq i\leq n\}=\{\{2i, 2i+1\}\ |\ 0\leq i\leq n\}$.
\end{itemize}
We will often use the notation $BG(\pi)=\delta_B(\pi) \cup \delta_G$ to denote breakpoint graphs.
\end{definition}

Genome rearrangement problems usually involve computing edit
distances, i.e.\ the smallest number of moves needed to transform
a genome into another one using only operations specified by a
given set $S$. In the case of permutations, those distances are
usually \emph{left-invariant}, which intuitively means that genes
can be relabelled so that either genome becomes $\iota$ without
affecting the value of the distance to compute. Under this assumption, the
pairwise genome rearrangement problem in $S_n^\pm$ can be viewed
as a constrained sorting problem, and the intuition behind the
breakpoint graph construction is that black edges are meant to
represent the current situation (i.e.\ the ordering provided by
$\pi$), while grey edges are meant to represent the target
situation (i.e.\ the ordering provided by $\iota$).
\Cref{fig:breakpoint-graph-example} shows an example of a
breakpoint graph. By definition, such a graph is a collection of
even-length cycles that alternate black and grey edges. It can be
easily seen that the example shown in
\Cref{fig:breakpoint-graph-example} decomposes into two such
cycles.

\begin{figure}[htbp]
\centering
\begin{tikzpicture}
  \foreach \position/\name in {3/0, 2/10, 1/9, 0/1, 15/2, 14/3, 13/4, 12/7, 11/8, 10/14, 9/13, 8/6, 7/5, 6/11, 5/12, 4/15}
     \path ( \position*22.5:2.5cm) node {\small $\name$};

  \foreach \position/\name in {3/0, 2/1, 1/2, 0/3, 15/4, 14/5, 13/6, 12/7, 11/8, 10/9, 9/10, 8/11, 7/12, 6/13, 5/14, 4/15}
     \path ( \position*22.5:3cm) node[color=gray] {\small $\pi'_{\name}$};

  \foreach \position/\name in {3/v0, 2/v1, 1/v2, 0/v3, 15/v4, 14/v5, 13/v6, 12/v7, 11/v8, 10/v9, 9/v10, 8/v11, 7/v12, 6/v13, 5/v14, 4/v15} {
      \path (\position*22.5:2.1cm) node[draw,vertex] (\name) {}; 
  }
  \foreach \u/\v in {v0/v1, v2/v3, v4/v5, v6/v7, v8/v9, v10/v11, v12/v13, v14/v15}
      \draw (\u) -- (\v);
  \draw[gray] (v0) to [bend right=45] (v3);
  \draw[gray] (v4) to [bend right=45] (v5);
  \draw[gray] (v6) to [bend right=45] (v12);
  \draw[gray] (v7) to [bend right=45] (v11);
  \draw[gray] (v2) to [bend right=25] (v8);
  \draw[gray] (v1) -- (v13);
  \draw[gray] (v9) to [bend right=10] (v15);
  \draw[gray] (v10) to [bend right=10] (v14);
\end{tikzpicture}
\caption{The breakpoint graph of $\langle -5\ 1\ 2\ 4\ -7\ -3\ 6\rangle$.}
\label{fig:breakpoint-graph-example}
\end{figure}
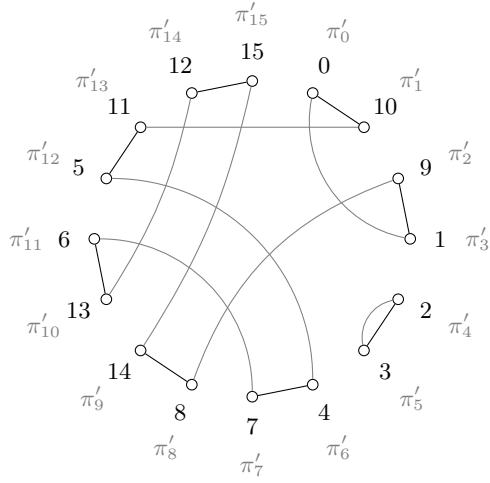

The \emph{length} of a cycle in a breakpoint graph differs from the traditional graph-theoretical definition that we mentioned on page~\pageref{def:length-of-a-cycle}:
it is \emph{half} the number of edges the cycle contains. Nevertheless, we will keep the terminology \emph{$k$-cycle} to designate a cycle of length $k$, keeping in mind that its length is measured differently in the context of breakpoint graphs.

\section{Cycle statistics}

As is well-known (see e.g.\ \citet{graham-concrete}),
the \emph{unsigned Stirling number of the first kind}
$\firststirlingnumber{n}{k}$ counts the number of permutations in
$S_n$ which decompose into $k$ disjoint cycles:
$$
\firststirlingnumber{n}{k}=|\{\pi\in S_n\ |\ c(\pi)=k\}|.
$$
Recall also that those numbers arise as coefficients in the series expansion of the \emph{rising factorial}
\begin{equation}\label{risingfactorial}
\risingfactorial{x}{n}=x(x+1)\cdots(x+n-1)=\sum_{k=0}^n\firststirlingnumber{n}{k}x^k
\end{equation}
and of the \emph{falling factorial}
\begin{equation}\label{fallingfactorial}
\fallingfactorial{x}{n}=x(x-1)\cdots(x-n+1)=\sum_{k=0}^n(-1)^{n-k}\firststirlingnumber{n}{k}x^k.
\end{equation}
Signing the elements of a permutation does not change its disjoint cycle decomposition, so the number of \emph{signed} permutations that decompose into $k$ disjoint cycles is $2^n\firststirlingnumber{n}{k}$.
We are interested in the following analogues of the Stirling number of the first kind, based on the cycle decomposition of the breakpoint graph.

\begin{definition}
The \emph{Hultman number} $\hultmannumber{n}{k}$ counts the number of permutations in $S_n$ whose breakpoint graph decomposes into $k$ cycles:
$$
\hultmannumber{n}{k}=|\{\pi\in S_n\ |\ c(BG(\pi))=k\}|.
$$
The \emph{signed Hultman number} $\signedhultmannumber{n}{k}$ counts the number of permutations in $S^{\pm}_n$ whose breakpoint graph decomposes into $k$ cycles:
$$
\signedhultmannumber{n}{k}=|\{\pi\in S^{\pm}_n\ |\ c(BG(\pi))=k\}|.
$$
\end{definition}

It is clear from \Cref{def:BG} that the number of cycles in any breakpoint graph is at least one and at most $n+1$. Hultman numbers were so named by \citet{doignon-hultman} after Axel Hultman, who first raised the question of computing those numbers~\cite{hultman-toric}. The authors obtained an explicit but complicated formula for computing $\hultmannumber{n}{k}$, as well as formulas for enumerating permutations with a given ``Hultman class'' (the analogue of conjugacy classes of $S_n$ based on the breakpoint graph).
\citet{bona-average} later
observed that they can be computed using
the following much simpler expression:
\begin{equation}\label{eqn:hultman-numbers-bona-flynn}
\hultmannumber{n}{k}=\left\{
\begin{array}{ll}
\firststirlingnumber{n+2}{k}/\binom{n+2}{2} & \mbox{if $n-k$ is odd,}\\
0 & \mbox{otherwise,}
\end{array}
\right.
\end{equation}
based on a formula first obtained by \citet{kwak-genus}.

In
the next section,
we present another way of
obtaining an explicit formula for the unsigned Hultman numbers,
which we will use in \Cref{sec:simpler-proofs} to derive a new and simple proof of
\Cref{eqn:hultman-numbers-bona-flynn}.
In \Cref{sec:formula-signed-hultman}, we will prove the first explicit formula for computing the \emph{signed} Hultman numbers.


\section{A new formula for $\hultmannumber{n}{k}$}\label{sec:new-formula-hultman}

We will need the following results obtained by
\citet{hanlon-spectra}, whose notation we follow. For any fixed $n$ in $\mathbb{N}_0$, let
$$
Q_{n}^{\mathbb{C}}(h,\ell)=\mathbb{E}(\mathrm{Re}(\mathrm{tr}((VV^{\mathrm{t}})^{n}))),
$$
where $V$ is a random $h\times \ell$ matrix with independent
standard complex normal entries, $\mathbb{E}$ denotes expectation,
$\mathrm{Re}$ denotes real part, $\mathrm{tr}$ denotes trace and
$^{\mathrm{t}}$ denotes matrix transposition. For the definition
and the properties of the complex normal distribution, see for
example \citet{Goodman}.

\citet{hanlon-spectra} give two formulas for computing
$Q_{n}^{\mathbb{C}}(h,\ell)$, both of which we will need. The
first formula\footnote{See Corollary 2.4  p.~158 of
\citet{hanlon-spectra}.} is:
\begin{equation}
Q_{n}^{\mathbb{C}}(h,\ell)=\sum_{\omega\in
S_{n}}h^{c(\omega)}\ell^{c(\omega
\circ\omega_{(n)})}, \label{formula1}%
\end{equation}
where $\omega_{(n)}$ is a fixed $n$-cycle in $S_n$.
The second formula\footnote{See Theorem 2.5 p.~158 of \citet{hanlon-spectra}.} is:
\begin{equation}
Q_{n}^{\mathbb{C}}(h,\ell)=\frac{1}{n}\sum_{i=1}^{n}(-1)^{i-1}\frac
{\fallingfactorial{(h+n-i)}{n}\fallingfactorial{(\ell+n-i)}{n}}{(n-i)!(i-1)!}. \label{formula2}%
\end{equation}

The link between the Hultman numbers and the previous results of
\citet{hanlon-spectra} is obtained using the following result of
\citet{doignon-hultman}.

\begin{corollary}\label{unsigned-hultman-counts-factorisations}\cite{doignon-hultman}
$\hultmannumber{n}{k}$ counts the number of factorisations of a
fixed $(n+1)$-cycle $\beta$ into the product $\rho\circ\omega$,
where $\rho$ is an $(n+1)$-cycle and $\omega$ a permutation in
$S_{n+1}$ with $c(\omega)=k$.
\end{corollary}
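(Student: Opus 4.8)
The plan is to exhibit an explicit bijection between $S_n$ and the set of $(n+1)$-cycles of $S_{n+1}$ that carries the statistic $c(BG(\pi))$ onto the statistic $c(\omega)$, where $\omega=\rho^{-1}\circ\beta$ is the cofactor determined by $\rho$ in the factorisation $\beta=\rho\circ\omega$. Since the number of such factorisations does not depend on which $(n+1)$-cycle is chosen for $\beta$ (any two are conjugate), I would fix the convenient long cycle $\beta=(0\ 1\ \cdots\ n)$ acting on $\{0,1,\ldots,n\}$. Writing, for $\pi\in S_n$, $\bar\pi$ for its extension to $S_{n+1}$ that fixes $0$ and agrees with $\pi$ on $\{1,\ldots,n\}$, the bijection I would use is $\pi\mapsto\rho:=\bar\pi^{-1}\circ\beta\circ\bar\pi$; the heart of the matter is then to show that $c(BG(\pi))=c(\rho^{-1}\circ\beta)$ and that this map is a bijection onto the $(n+1)$-cycles.

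For the statistic, I would first contract the fixed grey matching $\delta_G$ of $BG(\pi)$, merging each pair $\{2i,2i+1\}$ into a single node $i$, so that the $n+1$ grey edges become the $n+1$ vertices $\{0,1,\ldots,n\}$. Every vertex of $BG(\pi)$ carries exactly one black edge, so after contraction the black edges $\delta_B(\pi)$ form a $2$-regular multigraph $H$ on $\{0,1,\ldots,n\}$ whose cycles are in bijection with the alternating cycles of $BG(\pi)$; hence $c(BG(\pi))=c(H)$. Reading off the endpoints of the black edges from \Cref{def:BG} with all signs positive, and indexing the $n+1$ black edges by their position $i$, I would check that each node occurs exactly once as a ``left'' endpoint and once as a ``right'' endpoint, with the left-endpoint map equal to $\bar\pi$ and the right-endpoint map equal to $\beta^{-1}\circ\bar\pi\circ\beta$. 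Tracing the cycles of $H$ by always leaving each node reached along the black edge on which that node is a left endpoint then shows $c(H)=c(\bar\pi^{-1}\circ\beta^{-1}\circ\bar\pi\circ\beta)$, and since $\rho^{-1}\circ\beta=\bar\pi^{-1}\circ\beta^{-1}\circ\bar\pi\circ\beta$ this is exactly $c(\rho^{-1}\circ\beta)=c(\omega)$ for the factorisation $\beta=\rho\circ\omega$.

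It remains to see that $\pi\mapsto\rho$ is a bijection onto the $(n+1)$-cycles. Each image $\rho=\bar\pi^{-1}\circ\beta\circ\bar\pi$ is a conjugate of $\beta$, hence an $(n+1)$-cycle. For injectivity I would use that the centraliser of $\beta$ in $S_{n+1}$ is the cyclic group $\langle\beta\rangle$: if two permutations of $S_n$ produced the same $\rho$, their ``ratio'' $\bar\sigma\circ\bar\pi^{-1}$ would centralise $\beta$ and fix $0$, and the only power of the full cycle $\beta$ fixing $0$ is the identity, forcing $\pi=\sigma$. Since $|S_n|=n!$ equals the number $(n+1)!/(n+1)=n!$ of $(n+1)$-cycles, injectivity upgrades to a bijection, and restricting the count to those $\pi$ with $c(BG(\pi))=k$ gives the claim. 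The delicate point I expect to have to argue most carefully is the cycle bookkeeping in the second step: contracting a perfect matching and choosing a consistent ``left-to-right'' orientation is precisely what avoids the spurious factor of two that would otherwise appear when passing from the alternating cycles of an undirected graph to the cycles of a product of permutations.
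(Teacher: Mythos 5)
Your argument is correct, but note that the paper does not prove this statement at all: it is imported verbatim from \citet{doignon-hultman} as a cited corollary, so there is no in-paper proof to compare against. What you have written is a self-contained reconstruction of the underlying bijection, and it checks out. Fixing $\beta=(0\ 1\ \cdots\ n)$ is legitimate since conjugation preserves cycle type and hence the factorisation count is independent of the choice of long cycle. Contracting the grey matching of $BG(\pi)$ (with all signs positive, so $\pi'_{2i-1}=2\pi_i-1$, $\pi'_{2i}=2\pi_i$) sends the black edge indexed by $i$ to the pair $\{\bar\pi(i),\,\beta^{-1}(\bar\pi(\beta(i)))\}$, each node occurs once as a left and once as a right endpoint, and the ``leave along the left-endpoint edge'' rule turns the cycles of the contracted multigraph into the orbits of $\beta^{-1}\circ\bar\pi\circ\beta\circ\bar\pi^{-1}$, which is conjugate to $\omega=\rho^{-1}\circ\beta$; this correctly handles loops and parallel edges (the $1$- and $2$-cycles of $BG(\pi)$) and, as you note, kills the factor of two one would otherwise fear when orienting undirected cycles. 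The bijectivity argument via the centraliser $\langle\beta\rangle$ and the count $n!=(n+1)!/(n+1)$ is also sound. The only stylistic remark is that the paper's later use of this corollary (in \Cref{new-formula-for-hultman-numbers}) rewrites the factorisation as one of $\rho^{-1}$ into $\omega\circ\beta^{-1}$ to match Hanlon--Stanley--Stembridge's \Cref{formula1}; your bijection is compatible with that step, so nothing is lost.
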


For a polynomial $P(x)$, let $[x^{k}]P(x)$ denote the coefficient of the monomial $x^{k}$ in $P(x)$. We derive the following new expression for computing $\hultmannumber{n}{k}$.

\begin{theorem}\label{new-formula-for-hultman-numbers} For all $n$ in $\mathbb{N}_0$, for all $k$ in $\{1, 2,\ldots, n+1\}$:
\begin{equation}
\hultmannumber{n}{k}=\frac{1}{n+1}\sum_{i=1}^{n+1}[h^{k}]\fallingfactorial{(h+n-i+1)}{n+1}.\label{formulaH}%
\end{equation}
\end{theorem}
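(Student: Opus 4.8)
The plan is to combine the combinatorial interpretation of \Cref{unsigned-hultman-counts-factorisations} with the two expressions \eqref{formula1} and \eqref{formula2} for $Q_n^{\mathbb{C}}(h,\ell)$, reducing the whole statement to the extraction of a single bivariate coefficient. Write $m=n+1$ and fix an $m$-cycle $\beta$ in $S_m$. By \Cref{unsigned-hultman-counts-factorisations}, $\hultmannumber{n}{k}$ counts the pairs $(\rho,\omega)$ with $\rho\circ\omega=\beta$, where $\rho$ is an $m$-cycle and $c(\omega)=k$. Parametrising such pairs by $\omega$ (so that $\rho=\beta\circ\omega^{-1}$) and putting $\tau=\omega^{-1}$, this is the number of $\tau\in S_m$ with $c(\tau)=k$ and $\beta\circ\tau$ an $m$-cycle, i.e.\ $c(\beta\circ\tau)=1$. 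Since $\beta\circ\tau$ and $\tau\circ\beta$ are conjugate and hence have the same number of cycles, I may replace $c(\beta\circ\tau)$ by $c(\tau\circ\beta)$. Taking the fixed $m$-cycle $\omega_{(m)}$ in \eqref{formula1} to be $\beta$, the term $h^{c(\tau)}\ell^{c(\tau\circ\beta)}$ contributes to the monomial $h^k\ell$ precisely when $c(\tau)=k$ and $\tau\circ\beta$ is an $m$-cycle, whence
$$\hultmannumber{n}{k}=[h^k][\ell^1]\,Q_m^{\mathbb{C}}(h,\ell),$$
the coefficient of $h^k\ell^1$ in $Q_m^{\mathbb{C}}$.

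It then remains to read off this coefficient from the closed form \eqref{formula2}, which amounts to first extracting $[\ell^1]$ from each factor $\fallingfactorial{(\ell+m-i)}{m}$. The key observation --- and essentially the only computation in the proof --- is that $\fallingfactorial{(\ell+m-i)}{m}=\prod_{j=0}^{m-1}(\ell+m-i-j)$ always contains the factor $\ell$ itself, occurring at $j=m-i$, which lies in $\{0,1,\ldots,m-1\}$ for every $i\in\{1,\ldots,m\}$. Consequently its constant term vanishes, and its $\ell^1$-coefficient is just the product of the remaining factors evaluated at $\ell=0$, namely the product of all nonzero integers in $\{1-i,2-i,\ldots,m-i\}$. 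Splitting this product into its positive part $(m-i)!$ and its negative part $(-1)^{i-1}(i-1)!$ gives
$$[\ell^1]\fallingfactorial{(\ell+m-i)}{m}=(-1)^{i-1}(m-i)!\,(i-1)!.$$

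Substituting this into \eqref{formula2}, the factor $(-1)^{i-1}(m-i)!\,(i-1)!$ cancels exactly against the leading $(-1)^{i-1}$ and the denominator $(m-i)!\,(i-1)!$, collapsing the doubly-indexed sum into
$$[\ell^1]\,Q_m^{\mathbb{C}}(h,\ell)=\frac{1}{m}\sum_{i=1}^{m}\fallingfactorial{(h+m-i)}{m}.$$
Extracting the coefficient of $h^k$ and substituting back $m=n+1$ (so that $m-i=n-i+1$) then yields \eqref{formulaH}. The steps I expect to require the most care are the conjugation bookkeeping that matches the corollary to \eqref{formula1}, and the sign tracking in the coefficient extraction; the clean cancellation in the last step is exactly what reduces the two-parameter formula \eqref{formula2} to the single sum in the statement.
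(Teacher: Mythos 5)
Your proposal is correct and follows essentially the same route as the paper: both reduce the statement, via \Cref{unsigned-hultman-counts-factorisations} and \Cref{formula1}, to extracting the coefficient of $h^k\ell$ in $Q_{n+1}^{\mathbb{C}}(h,\ell)$, and then compute $[\ell]\fallingfactorial{(\ell+n-i+1)}{n+1}=(-1)^{i-1}(n-i+1)!(i-1)!$ so that \Cref{formula2} collapses to the stated sum. The only (immaterial) difference is bookkeeping: you pass to $\tau=\omega^{-1}$ and invoke conjugacy of $\beta\circ\tau$ and $\tau\circ\beta$, whereas the paper inverts the factorisation equation and sets $\omega_{(n+1)}=\beta^{-1}$.
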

\begin{proof}
 By \Cref{unsigned-hultman-counts-factorisations}, $\hultmannumber{n}{k}$ counts the number of factorisations of a fixed $(n+1)$-cycle $\beta$ into the product $\rho\circ\omega$, with $c(\rho)=1$ and $c(\omega)=k$. This is clearly equivalent to enumerating factorisations of $\rho^{-1}$ into the product $\omega\circ\beta^{-1}$ under the same conditions; therefore, setting $\omega_{(n+1)}$ to $\beta^{-1}$ in \Cref{formula1}, we observe that $\hultmannumber{n}{k}$ is the coefficient of the monomial $h^{k}\ell$ in the polynomial $Q_{n+1}^{\mathbb{C}}(h,\ell)$, hence by \Cref{formula2} equals:
$$\hultmannumber{n}{k}=\frac{1}{n+1}\sum_{i=1}^{n+1}(-1)^{i-1}\frac
{[h^{k}]\fallingfactorial{(h+n-i+1)}{n+1}\times[\ell]\fallingfactorial{(\ell+n-i+1)}{n+1}}{(n-i+1)!(i-1)!}.
$$
Since for every $i$ in $\{1,2,\ldots,k+1\}$ we have
\begin{eqnarray*}
&& [\ell]\fallingfactorial{(\ell+n-i+1)}{n+1}\\
&=&[\ell](\ell+n-i+1)(\ell+n-i)\cdots(\ell+1)\ell(\ell-1)(\ell-2)\cdots(\ell-(i-1))\\
&=& (-1)^{i-1}(n-i+1)!(i-1)!,
\end{eqnarray*}
the above summation simplifies to the wanted expression, which completes the proof.
\end{proof}

Besides providing a new relation involving Hultman numbers, our
new formula will prove useful in obtaining simple proofs of known
results, as we will see in \Cref{sec:stats,sec:simpler-proofs}.
Moreover, we think that the interest of our formula also lies 
in the fact that the method used to prove it extends to the signed
case.

\section{An explicit formula for $\signedhultmannumber{n}{k}$}\label{sec:formula-signed-hultman}

We now turn our attention to the problem of computing \emph{signed} Hultman numbers, which we solve using ideas similar to those presented in the previous section.
The result is obtained by characterising the $2$-regular graphs that correspond to actual breakpoint graphs (\Cref{config-is-BG-iff-complement-hamiltonian} page~\pageref{config-is-BG-iff-complement-hamiltonian}), and then relating that characterisation to an enumeration result by \citet{hanlon-spectra}.

\subsection{Preliminaries}\label{sec:background-signed-hultman-numbers}

Following \citet{hanlon-spectra},
for some fixed $n$ in $\mathbb{N}_0$, let
$$
Q_{n}^{\mathbb{R}}(h,\ell)=\mathbb{E}(\mathrm{tr}((VV^{\mathrm{t}})^{n})),
$$
where $V$ is again a random $h\times \ell$ matrix, but this time with independent
standard \emph{real} normal entries.
\citet{hanlon-spectra} obtain two formulas for
$Q_{n}^{\mathbb{R}}(h,\ell)$.

Let $\mathcal{F}_{n}$ denote the set of
perfect matchings
of
$\{0,1,2,\ldots,2n-1\}$. In particular, let $\varepsilon\in
\mathcal{F}_{n}$ be the \textit{identity
perfect matching
} $\{\{i,n+i\}\
|\ 0\leq i\leq n-1\}$.
The first formula\footnote{See Corollary 3.6 of \citet{hanlon-spectra}.} for $Q_{n}^{\mathbb{R}}(h,\ell)$ is:
\begin{equation}
Q_{n}^{\mathbb{R}}(h,\ell)=\sum_{\delta\in\mathcal{F}_{n}}h^{c(\varepsilon\cup\delta)}%
\ell^{c(\delta\cup\delta_{(n)})},\label{1}
\end{equation}
where $\delta_{(n)}$ is a fixed
perfect matching
such that
$\varepsilon\cup\delta_{(n)}$ is
hamiltonian.

The second formula is based on partitions rather than on perfect matchings.

\begin{definition}\label{def:partition}
\cite{macdonald-symmetric} A (integer) \emph{partition}
$\lambda=(\lambda_1, \lambda_2, \ldots, \lambda_l)$ is a finite
sequence of integers called \emph{parts} such that
$\lambda_1\geq\lambda_2\geq\cdots\geq\lambda_l\geq 0$.
Its \emph{length}
is the number of non-zero parts it contains,
and if $\sum_{i=1}^{l}\lambda_i=n$, we call $\lambda$ a \emph{partition of} $n$, which we write as $\lambda\vdash n$.
\end{definition}

We consider any two partitions to be equivalent if we obtain the
same sequence when removing all parts that equal $0$. The notation
$\lambda=(1^{m_1}2^{m_2}\ldots$ $r^{m_r})$ is also frequently
used, and expresses the fact that exactly $m_i$ parts of $\lambda$
equal $i$. The reader must therefore bear in mind that when
working with partitions, the notation $a^b$ is more often to be
understood in the previous meaning, and not as ``$a$ to the power
$b$''.

The second formula\footnote{See Theorem 5.4 of \citet{hanlon-spectra}.} for $Q_{n}^{\mathbb{R}}(h,\ell)$ is:
\begin{equation}
Q_{n}^{\mathbb{R}}(h,\ell)=\sum_{\lambda}c_{\lambda}(2)F_{\lambda}(h)F_{\lambda}(\ell),\label{2}%
\end{equation}
where:
\begin{itemize}
 \item  $\lambda$ ranges over all partitions of $n$ of the form $(a,b,1^{n-a-b}%
)$, with either $a\geq b\geq 1$ or $a=n$ and $b=0$,

\item
 the function $F_{\lambda}:\mathbb{R} \to \mathbb{R}$
is defined as:
\begin{equation}\label{F_lambda}
F_{\lambda}(x)=2^{a-b}\fallingfactorial{(x/2+a-1)}{a-b}\fallingfactorial{(x+2b-2)}{n-a+b},
\end{equation}

\item
and the
coefficients $c_{\lambda}(2)$ are given as follows:
\begin{equation}\label{c_lambda1}
c_{\lambda}(2) =\frac{(-1)^{n+a-b+1}2^{a-b+1}n(2a-2b+1)(a-1)!}{\fallingfactorial{(n+a-b+1)}{2}%
\fallingfactorial{(n-a+b)}{2}(n-a-b)!(2a-1)!(b-1)!},
\end{equation}
if $\lambda=(a,b,1^{n-a-b})$, with $a\geq b\geq1$, and
\begin{equation}\label{c_lambda2}
c_{\lambda}(2) = \frac{2^{n}n!}{(2n)!},\mbox{ if } \lambda=(n).
\end{equation}

\end{itemize}

The numbers $c_{\lambda}(2)$ appear as coefficients in the
expansion of the $n^{\mbox{\tiny th}}$ power-sum function in terms of zonal
polynomials. For definitions and details, see for example
\citet{macdonald-symmetric}.

\subsection{Characterising valid breakpoint graphs}\label{sec:valid-breakpoint-graphs}

Recall that a breakpoint graph is a $2$-regular graph that is the
union of two
perfect matchings
of $\{0,1,\ldots, 2n+1\}$. We now make
the connection between signed Hultman numbers and the previously
mentioned results explicit.

\begin{definition}
 A \emph{configuration} is the union of two
perfect matchings
$\delta_B$ and $\delta_G$
of $\{0$, $1$, $\ldots$, $2n+1\}$, where $\delta_G=\{\{2i, 2i+1\}\ |\ 0\leq i\leq n\}$.
\end{definition}

Note that the above definition only slightly generalises \Cref{def:BG}, by allowing any choice of a perfect matching for $\delta_B$, whereas there are implicit constraints on the choice of $\delta_B$ in the definition of the breakpoint graph. By definition, every breakpoint graph is a configuration, but not every configuration is a breakpoint graph, as we will see below shortly. The following notion will help us characterise configurations that are breakpoint graphs.

\begin{definition}\label{def:complement}
The \emph{complement} of a
configuration
$C=\delta_B\cup\delta_G$,
denoted by
$\overline{C}=\delta_B\cup\overline{\delta_G}$,
is obtained
by replacing $\delta_G$ with $\overline{\delta_G}= \{\{2i-1,2i\}\
|\ 1\leq i\leq n\}\cup\{\{0,2n+1\}\}$.
\end{definition}

Before stating our characterisation of breakpoint graphs, we wish
to stress that \citet{elias-better-journal} previously used a
similar but different notion of complementation (they replace
$\delta_B$ with $\overline{\delta_B}$ -- whose definition we will omit here -- whereas we replace
$\delta_G$ with $\overline{\delta_G}$) to characterise valid
breakpoint graphs of \emph{unsigned} permutations. This is not
enough for our purpose, which is why we generalise their result
below to encompass \emph{signed} permutations as well.

\begin{lemma}\label{config-is-BG-iff-complement-hamiltonian}
A configuration $\delta_B\cup\delta_G$ is the breakpoint graph of
some signed permutation $\pi$ if and only if the complement
configuration $\delta_B\cup\overline{\delta_G}$ is hamiltonian.
\end{lemma}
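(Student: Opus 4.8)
The plan is to exploit the structure of the complement matching. By \Cref{def:complement}, $\overline{\delta_G}$ consists of two kinds of edges: the $n$ ``element edges'' $\{2i-1,2i\}$ for $1\le i\le n$, together with the single ``wraparound'' edge $\{0,2n+1\}$. The crux of the whole argument is the observation that, for a genuine breakpoint graph $BG(\pi)$, these element edges are exactly the pairs produced when each signed entry is expanded: the entry $\pi_j$ occupies positions $2j-1$ and $2j$ of $\pi'$ and contributes the value pair $\{2|\pi_j|-1,2|\pi_j|\}$. Since $\pi$ is a permutation, $|\pi_j|$ runs over $\{1,\ldots,n\}$, so the collection of element edges is precisely $\{\{2i-1,2i\}\mid 1\le i\le n\}$, that is, $\overline{\delta_G}$ with the wraparound edge removed.

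For the forward direction I would then argue as follows. In $BG(\pi)$ the black edges join the consecutive positions $(2i,2i+1)$ of $\pi'$, while the element edges join the consecutive positions $(2i-1,2i)$; adjoining the wraparound edge $\{0,2n+1\}=\{\pi'_0,\pi'_{2n+1}\}$ to close the two extremities, the union $\delta_B\cup\overline{\delta_G}$ visits the positions $0,1,2,\ldots,2n+1$ in their natural order and then returns to $0$. Read off through the bijection between positions and values given by $\pi'$, this is a single cycle through all $2n+2$ vertices, hence $\delta_B\cup\overline{\delta_G}$ is hamiltonian.

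For the converse I would reconstruct $\pi$ from the hamiltonian cycle. Since $\delta_B$ and $\overline{\delta_G}$ are both perfect matchings, their union is $2$-regular and, being hamiltonian, is a single cycle whose edges alternate between $\delta_B$ and $\overline{\delta_G}$. The vertex $0$ meets exactly one edge of each matching, and its unique $\overline{\delta_G}$-edge is the wraparound $\{0,2n+1\}$; so I start at $0$, leave along its black edge, and record the vertices visited as $\pi'_0=0,\pi'_1,\ldots,\pi'_{2n+1}$. By the alternation, the edges $\{\pi'_{2i},\pi'_{2i+1}\}$ are exactly the edges of $\delta_B$ and the edges $\{\pi'_{2i-1},\pi'_{2i}\}$ are edges of $\overline{\delta_G}$; the wraparound, being the only $\overline{\delta_G}$-edge incident to $0$, can only be used to close the cycle, so it is the last edge and $\pi'_{2n+1}=2n+1$. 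Each intermediate pair $\{\pi'_{2i-1},\pi'_{2i}\}$ is thus an element edge $\{2m-1,2m\}$, which I read as the signed value $+m$ or $-m$ according to the order in which its endpoints are met. As each of the $n$ element edges occurs exactly once, the assignment $i\mapsto m$ is a bijection, so the resulting $\pi$ is a bona fide signed permutation whose breakpoint graph is $\delta_B\cup\delta_G$ by construction.

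The main obstacle is the converse: one must verify that the greedy traversal is forced (the alternation of the two matchings, and the fact that $0$ is the unique place where the wraparound edge can be traversed), and that every reading of the element edges yields a legitimate signed permutation, i.e.\ that both orientations of an element edge are admissible and that each absolute value $m\in\{1,\ldots,n\}$ occurs exactly once. These are precisely the points where the hypothesis of a \emph{single} cycle, rather than several, is essential.
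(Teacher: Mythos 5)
Your proof is correct and follows essentially the same route as the paper's: the forward direction identifies $\delta_B\cup\overline{\delta_G}$ with the cycle $\pi'_0,\pi'_1,\ldots,\pi'_{2n+1},\pi'_0$, and the converse reconstructs $\pi'$ by walking the Hamiltonian cycle from $0$ starting along its black edge. If anything, you are more explicit than the paper about why the wraparound edge must be the one that closes the cycle (forcing $\pi'_{2n+1}=2n+1$) and why each absolute value in $\{1,\ldots,n\}$ occurs exactly once.
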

\begin{proof}
We can easily see that the complement $\overline{BG(\pi)}$ of a
breakpoint graph is hamiltonian, since its edges are
$\{\{\pi'_{i}, \pi'_{i+1}\}\ |\ 0\leq i\leq 2n\}
\cup\{\{0,2n+1\}\}$.

Reciprocally, if the complement $\delta_B\cup\overline{\delta_G}$
of a configuration is hamiltonian, then we can recover the
elements of an unsigned permutation $\pi'=\langle 0\ \pi'_1\ \pi'_2\ \cdots$ $\pi'_{2n}\ 2n+1\rangle$ by
visiting the vertices along the hamiltonian cycle as follows: take
$0=\pi'_0$ as starting point, and follow the edge in $\delta_B$ that is incident to
$0$, setting the value of $\pi'_1$ to the other endpoint of that
edge. We then keep following the cycle, assigning the label of the
$i^{\mbox{\tiny th}}$ encountered vertex to $\pi'_i$ as we go,
ending with $2n+1=\pi'_{2n+1}$. Note that for every $0\leq i\leq
n$, the edge $\{\pi'_{2i+1},\pi'_{2i+2}\}$ belongs to
$\overline{\delta_G}$, and therefore we have
$|\pi'_{2i+1}-\pi'_{2i+2}|=1$. From the unsigned permutation
$\pi'$, we can therefore easily recover the corresponding signed
permutation $\pi$ in $S^\pm_n$, whose breakpoint graph is
$\delta_B\cup\delta_G$.
\end{proof}

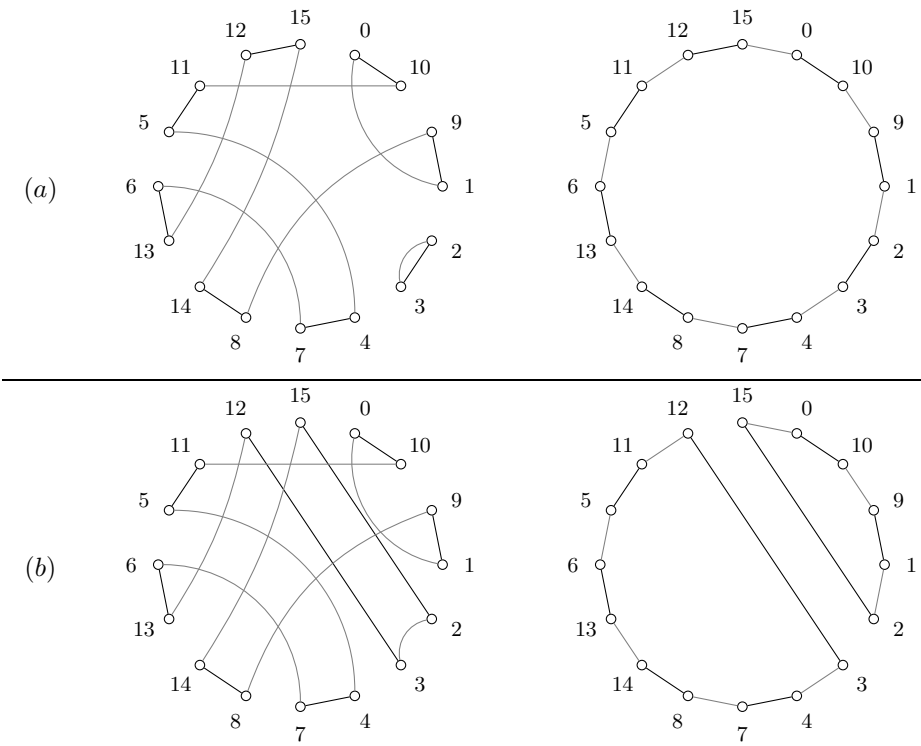
\begin{figure}[htbp]
\centering
\begin{tabular}{USS}
$(a)$ &
\begin{tikzpicture}[scale=.9]
  \foreach \position/\name in {3/0, 2/10, 1/9, 0/1, 15/2, 14/3, 13/4, 12/7, 11/8, 10/14, 9/13, 8/6, 7/5, 6/11, 5/12, 4/15}
     \path ( \position*22.5:2.5cm) node {\small $\name$};

  \foreach \position/\name in {3/v0, 2/v1, 1/v2, 0/v3, 15/v4, 14/v5, 13/v6, 12/v7, 11/v8, 10/v9, 9/v10, 8/v11, 7/v12, 6/v13, 5/v14, 4/v15} {
      \path (\position*22.5:2.1cm) node[draw,vertex] (\name) {}; 
  }
  \foreach \u/\v in {v0/v1, v2/v3, v4/v5, v6/v7, v8/v9, v10/v11, v12/v13, v14/v15}
      \draw (\u) -- (\v);
  \draw[gray] (v0) to [bend right=45] (v3);
  \draw[gray] (v4) to [bend right=45] (v5);
  \draw[gray] (v6) to [bend right=45] (v12);
  \draw[gray] (v7) to [bend right=45] (v11);
  \draw[gray] (v2) to [bend right=25] (v8);
  \draw[gray] (v1) -- (v13);
  \draw[gray] (v9) to [bend right=10] (v15);
  \draw[gray] (v10) to [bend right=10] (v14);
\end{tikzpicture}
&
\begin{tikzpicture}[scale=.9]
  \foreach \position/\name in {3/0, 2/10, 1/9, 0/1, 15/2, 14/3, 13/4, 12/7, 11/8, 10/14, 9/13, 8/6, 7/5, 6/11, 5/12, 4/15}
     \path ( \position*22.5:2.5cm) node {\small $\name$};

  \foreach \position/\name in {3/v0, 2/v1, 1/v2, 0/v3, 15/v4, 14/v5, 13/v6, 12/v7, 11/v8, 10/v9, 9/v10, 8/v11, 7/v12, 6/v13, 5/v14, 4/v15}
      \path (\position*22.5:2.1cm) node[draw,vertex] (\name) {}; 

  \foreach \u/\v in {v0/v1, v2/v3, v4/v5, v6/v7, v8/v9, v10/v11, v12/v13, v14/v15}
      \draw (\u) -- (\v);

  \foreach \u/\v in {v1/v2, v3/v4, v5/v6, v7/v8, v9/v10, v11/v12, v13/v14, v15/v0}
    \draw[gray] (\u) to 
                (\v);
\end{tikzpicture}
\\
\hline
$(b)$ &
\begin{tikzpicture}[scale=.9]
  \foreach \position/\name in {3/0, 2/10, 1/9, 0/1, 15/2, 14/3, 13/4, 12/7, 11/8, 10/14, 9/13, 8/6, 7/5, 6/11, 5/12, 4/15}
     \path ( \position*22.5:2.5cm) node {\small $\name$};

  \foreach \position/\name in {3/v0, 2/v1, 1/v2, 0/v3, 15/v4, 14/v5, 13/v6, 12/v7, 11/v8, 10/v9, 9/v10, 8/v11, 7/v12, 6/v13, 5/v14, 4/v15} {
      \path (\position*22.5:2.1cm) node[draw,vertex] (\name) {}; 
  }
  \foreach \u/\v in {v0/v1, v2/v3, v4/v15, v6/v7, v14/v5, v10/v11, v12/v13, v8/v9}
      \draw (\u) -- (\v);
  \draw[gray] (v0) to [bend right=45] (v3);
  \draw[gray] (v4) to [bend right=45] (v5);
  \draw[gray] (v6) to [bend right=45] (v12);
  \draw[gray] (v7) to [bend right=45] (v11);
  \draw[gray] (v2) to [bend right=25] (v8);
  \draw[gray] (v1) -- (v13);
  \draw[gray] (v9) to [bend right=10] (v15);
  \draw[gray] (v10) to [bend right=10] (v14);
\end{tikzpicture}
&
\begin{tikzpicture}[scale=.9]
  \foreach \position/\name in {3/0, 2/10, 1/9, 0/1, 15/2, 14/3, 13/4, 12/7, 11/8, 10/14, 9/13, 8/6, 7/5, 6/11, 5/12, 4/15}
     \path ( \position*22.5:2.5cm) node {\small $\name$};

  \foreach \position/\name in {3/v0, 2/v1, 1/v2, 0/v3, 15/v4, 14/v5, 13/v6, 12/v7, 11/v8, 10/v9, 9/v10, 8/v11, 7/v12, 6/v13, 5/v14, 4/v15}
      \path (\position*22.5:2.1cm) node[draw,vertex] (\name) {}; 

  \foreach \u/\v in {v0/v1, v2/v3, v4/v15, v6/v7, v14/v5, v10/v11, v12/v13, v8/v9}
      \draw (\u) -- (\v);

  \foreach \u/\v in {v1/v2, v3/v4, v5/v6, v7/v8, v9/v10, v11/v12, v13/v14, v15/v0}
    \draw[gray] (\u) to 
                (\v);

\end{tikzpicture}
\end{tabular}
\caption{$(a)$ The complement of the breakpoint graph from \Cref{fig:breakpoint-graph-example} is hamiltonian; $(b)$ a configuration whose complement is not hamiltonian.}
\label{fig:complement-breakpoint-graph}
\end{figure}

\Cref{fig:complement-breakpoint-graph}$(a)$ shows the complement of the breakpoint graph of \Cref{fig:breakpoint-graph-example} (page~\pageref{fig:breakpoint-graph-example}), which is hamiltonian. On the other hand, the complement of the configuration shown in \Cref{fig:complement-breakpoint-graph}$(b)$ is not hamiltonian.
We now show that \Cref{1} remains valid when replacing the identity
perfect matching
$\varepsilon$ with the
perfect matching
$\delta_{G}$ and choosing
$\overline{\delta_G}$ as the fixed
perfect matching
$\delta_{(n+1)}$, which clearly satisfies the condition that
$\delta_G\cup\overline{\delta_G}$ is hamiltonian as required. The proof can be
easily generalised to any choice of a
perfect matching
$\tau_{(n+1)}$ such
that $\delta_G\cup\tau_{(n+1)}$ is hamiltonian, but the following statement will be sufficient for our purposes.

\begin{lemma}\label{renaming-one-factors-is-fine}
For any $n$ in $\mathbb{N}_0$:
\begin{equation}
Q_{n+1}^{\mathbb{R}}(h,\ell)=\sum_{\tau\in\mathcal{F}_{n+1}}h^{c(\delta_G\cup\tau)}%
\ell^{c(\tau\cup\overline{\delta_G})}.
\end{equation}
\end{lemma}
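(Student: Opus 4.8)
The plan is to reduce the claimed identity to \Cref{1} (read with $n+1$ in place of $n$) by a single relabelling of the ground set $\{0,1,\ldots,2n+1\}$. Written for $n+1$, that formula reads
\[
Q_{n+1}^{\mathbb{R}}(h,\ell)=\sum_{\delta\in\mathcal{F}_{n+1}}h^{c(\varepsilon\cup\delta)}\ell^{c(\delta\cup\delta_{(n+1)})},
\]
so it suffices to exhibit a bijection $\sigma$ of the ground set that simultaneously carries the reference matching $\varepsilon$ to $\delta_G$ and the reference matching $\delta_{(n+1)}$ to $\overline{\delta_G}$, acting on matchings edgewise. Relabelling vertices is a graph isomorphism, hence preserves the number of cycles in the union of any two matchings, and it permutes $\mathcal{F}_{n+1}$; so such a $\sigma$ turns one sum into the other after a change of summation variable.

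First I would record that both $\varepsilon\cup\delta_{(n+1)}$ and $\delta_G\cup\overline{\delta_G}$ are Hamiltonian cycles on the $2n+2$ vertices: the former by the defining property of $\delta_{(n+1)}$ recalled after \Cref{1}, and the latter by tracing $0,1,2,\ldots,2n+1$ and back to $0$, where consecutive edges alternate between $\delta_G$ and $\overline{\delta_G}$ and the closing edge $\{0,2n+1\}\in\overline{\delta_G}$ completes the cycle. The essential point is that a Hamiltonian cycle together with a $2$-colouring of its edges whose colour classes are perfect matchings is unique up to relabelling: walking along either cycle, the edges alternate strictly between the two matchings, so mapping the vertices of $\varepsilon\cup\delta_{(n+1)}$ onto those of $\delta_G\cup\overline{\delta_G}$ in cyclic order, starting so that an $\varepsilon$-edge is sent onto a $\delta_G$-edge, yields a bijection $\sigma$ of $\{0,1,\ldots,2n+1\}$ with $\sigma(\varepsilon)=\delta_G$ and $\sigma(\delta_{(n+1)})=\overline{\delta_G}$.

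With $\sigma$ in hand, for each $\delta\in\mathcal{F}_{n+1}$ I would set $\tau=\sigma(\delta)$ and use that $\sigma$ is an isomorphism to write $c(\varepsilon\cup\delta)=c(\sigma(\varepsilon)\cup\sigma(\delta))=c(\delta_G\cup\tau)$ and likewise $c(\delta\cup\delta_{(n+1)})=c(\tau\cup\overline{\delta_G})$. Since $\tau=\sigma(\delta)$ runs bijectively over $\mathcal{F}_{n+1}$ as $\delta$ does, reindexing the displayed form of \Cref{1} by $\tau$ produces exactly the asserted expression, which completes the argument.

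The main obstacle is the clean construction of $\sigma$, that is, the uniqueness-up-to-relabelling of an edge-bicoloured Hamiltonian cycle: one must verify that reading off consecutive vertices along the two cycles genuinely sends each $\varepsilon$-edge to a $\delta_G$-edge and each $\delta_{(n+1)}$-edge to an $\overline{\delta_G}$-edge, which hinges on the strict alternation of the two matchings along a Hamiltonian cycle. Everything else is a routine change of summation variable, and, as the authors note, the same relabelling argument works verbatim for any fixed matching $\tau_{(n+1)}$ with $\delta_G\cup\tau_{(n+1)}$ Hamiltonian.
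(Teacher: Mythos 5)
Your proof is correct and follows essentially the same route as the paper's: both reduce the identity to \Cref{1} via a relabelling of the ground set $\{0,1,\ldots,2n+1\}$ that carries $\varepsilon$ to $\delta_G$ and $\delta_{(n+1)}$ to $\overline{\delta_G}$, and both then conclude by the invariance of cycle counts under vertex relabelling together with the fact that the relabelling permutes $\mathcal{F}_{n+1}$. The only difference is cosmetic: the paper exhibits the relabelling explicitly as the permutation $\mu$ with $\mu(i)=i/2$ for even $i$ and $\mu(i)=(i+2n+1)/2$ for odd $i$, and \emph{defines} $\delta_{(n+1)}=\mu\circ\overline{\delta_G}\circ\mu^{-1}$ (verifying the required hamiltonicity), whereas you keep $\delta_{(n+1)}$ arbitrary and obtain the relabelling abstractly from the uniqueness, up to vertex relabelling, of a Hamiltonian cycle alternating between two perfect matchings.
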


\begin{proof} First, let us note that every
perfect matching
$\phi$ in $\mathcal{F}_{n+1}$
can be seen as a fixed-point-free involution, i.e.\ a permutation of $\{0,1,2,\ldots, 2n+1\}$ that decomposes into a collection of $2$-cycles only, by viewing each
edge of $\phi$ as a $2$-cycle. Therefore, conjugating $\phi$ by any permutation
of the same number of elements
is
a well-defined operation that simply renames the endpoints of the
given edges.
Let $\mu$ be the permutation defined by
$$
\mu:\{0,1,\ldots,2n+1\}\rightarrow\{0,1,\ldots,2n+1\}:i\mapsto\mu(i)=\left\{
\begin{array}{ll}
 i/2 & \mbox{if $i$ is even},\\
 \frac{i+2n+1}{2} & \mbox{otherwise}.
\end{array}
\right.
$$
As the example in \Cref{fig:mapping-using-mu} shows, $\delta_G$ can be mapped onto $\varepsilon=\mu\circ\delta_G\circ\mu^{-1}$, and we fix $\delta_{(n+1)}= \mu\circ\overline{\delta_G}\circ\mu^{-1}$.
Finally, observe that given any two
perfect matchings
$\phi_1$ and $\phi_2$ in $\mathcal{F}_{n+1}$, the graphs
$\mu\circ\phi_1\circ\mu^{-1}\cup\mu\circ\phi_2\circ\mu^{-1}$ and
$\phi_1\cup\phi_2$ are isomorphic, and hence
$c(\mu\circ\phi_1\circ\mu^{-1}\cup\mu\circ\phi_2\circ\mu^{-1})=c(\phi_1\cup\phi_2)$.
Taking $\delta = \mu \circ \tau \circ \mu^{-1}$, the following relations hold:
\begin{itemize}
 \item $c(\varepsilon\cup\delta)=c(\mu\circ\delta_G\circ\mu^{-1}\cup\mu\circ\tau\circ\mu^{-1})=c(\delta_G\cup\tau)$,
 \item $c(\delta\cup\delta_{(n+1)})=c(\mu\circ\tau\circ\mu^{-1}\cup\mu\circ\overline{\delta_G}\circ\mu^{-1})=c(\tau\cup\overline{\delta_G})$,
 \item
 $c(\varepsilon\cup\delta_{(n+1)})=c(\mu\circ\delta_G\circ\mu^{-1}\cup\mu\circ\overline{\delta_G}\circ\mu^{-1})=c(\delta_G\cup\overline{\delta_G})=1$,
\end{itemize}
and the formula in the statement follows from the above
relations, the bijectivity of conjugation,  and \Cref{1}.
\end{proof}

\begin{figure}[htbp]
 \centering
\begin{tikzpicture}
    \node at (-1,-0.5) {$\delta_G$};
    \node[vertex,draw] (0) at (0,0) [label=above:$0$] {} ;
    \node[vertex,draw] (1) at (0,-1) [label=below:$1$] {} ;
    \node[vertex,draw] (2) at (1,0) [label=above:$2$] {} ;
    \node[vertex,draw] (3) at (1,-1) [label=below:$3$] {} ;
    \node[vertex,draw] (4) at (2,0) [label=above:$4$] {} ;
    \node[vertex,draw] (5) at (2,-1) [label=below:$5$] {} ;
    \node[vertex,draw] (6) at (3,0) [label=above:$6$] {} ;
    \node[vertex,draw] (7) at (3,-1) [label=below:$7$] {} ;
    \node[vertex,draw] (8) at (4,0) [label=above:$8$] {} ;
    \node[vertex,draw] (9) at (4,-1) [label=below:$9$] {} ;
    \foreach \p/\q in {0/1, 2/3, 4/5, 6/7, 8/9}
        \draw (\p) -- (\q);
    \draw[->] (4.4, -0.5) -- (5.2, -0.5);
\begin{scope}[xshift=160pt]
    \node[vertex,draw] (0) at (0,0) [label=above:$0$] {} ;
    \node[vertex,draw] (1) at (0,-1) [label=below:$5$] {} ;
    \node[vertex,draw] (2) at (1,0) [label=above:$1$] {} ;
    \node[vertex,draw] (3) at (1,-1) [label=below:$6$] {} ;
    \node[vertex,draw] (4) at (2,0) [label=above:$2$] {} ;
    \node[vertex,draw] (5) at (2,-1) [label=below:$7$] {} ;
    \node[vertex,draw] (6) at (3,0) [label=above:$3$] {} ;
    \node[vertex,draw] (7) at (3,-1) [label=below:$8$] {} ;
    \node[vertex,draw] (8) at (4,0) [label=above:$4$] {} ;
    \node[vertex,draw] (9) at (4,-1) [label=below:$9$] {} ;
    \foreach \p/\q in {0/1, 2/3, 4/5, 6/7, 8/9}
        \draw (\p) -- (\q);
    \node at (5,-0.5) {$\varepsilon$};
\end{scope}
\begin{scope}[yshift=-70pt]
    \node at (-1,-0.5) {$\overline{\delta_G}$};
    \node[vertex,draw] (0) at (0,0) [label=above:$0$] {} ;
    \node[vertex,draw] (1) at (0,-1) [label=below:$1$] {} ;
    \node[vertex,draw] (2) at (1,0) [label=above:$2$] {} ;
    \node[vertex,draw] (3) at (1,-1) [label=below:$3$] {} ;
    \node[vertex,draw] (4) at (2,0) [label=above:$4$] {} ;
    \node[vertex,draw] (5) at (2,-1) [label=below:$5$] {} ;
    \node[vertex,draw] (6) at (3,0) [label=above:$6$] {} ;
    \node[vertex,draw] (7) at (3,-1) [label=below:$7$] {} ;
    \node[vertex,draw] (8) at (4,0) [label=above:$8$] {} ;
    \node[vertex,draw] (9) at (4,-1) [label=below:$9$] {} ;
    \foreach \p/\q in {0/9, 1/2, 3/4, 5/6, 7/8}
        \draw (\p) -- (\q);
    \draw[->] (4.4, -0.5) -- (5.2, -0.5);
\begin{scope}[xshift=160pt]
    \node[vertex,draw] (0) at (0,0) [label=above:$0$] {} ;
    \node[vertex,draw] (1) at (0,-1) [label=below:$5$] {} ;
    \node[vertex,draw] (2) at (1,0) [label=above:$1$] {} ;
    \node[vertex,draw] (3) at (1,-1) [label=below:$6$] {} ;
    \node[vertex,draw] (4) at (2,0) [label=above:$2$] {} ;
    \node[vertex,draw] (5) at (2,-1) [label=below:$7$] {} ;
    \node[vertex,draw] (6) at (3,0) [label=above:$3$] {} ;
    \node[vertex,draw] (7) at (3,-1) [label=below:$8$] {} ;
    \node[vertex,draw] (8) at (4,0) [label=above:$4$] {} ;
    \node[vertex,draw] (9) at (4,-1) [label=below:$9$] {} ;
    \foreach \p/\q in {0/9, 1/2, 3/4, 5/6, 7/8}
        \draw (\p) -- (\q);
    \node at (5,-0.5) {$\delta_{(n+1)}$};
\end{scope}
\end{scope}
\end{tikzpicture}
\caption{Mapping $\delta_G$ (resp.\ $\overline{\delta_G}$) onto $\varepsilon$ (resp.\ $\delta_{(n+1)}$) by conjugating them by $\mu=\left\langle 0\ 5\ 1\ 6\ 2\ 7\ 3\ 8\ 4\ 9\right\rangle$.}
\label{fig:mapping-using-mu}
\end{figure}
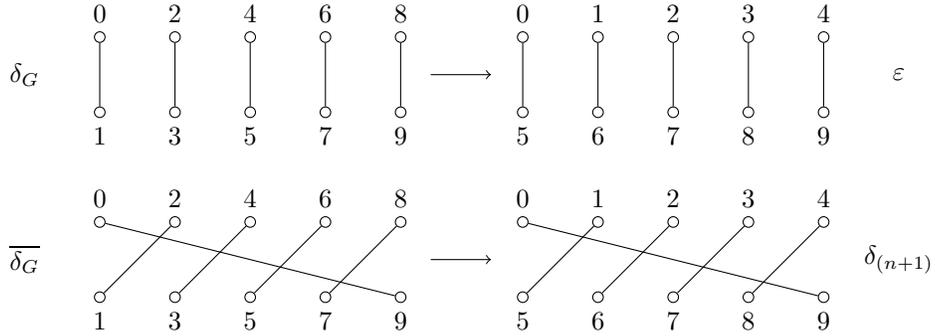

\subsection{Enumerating breakpoint graphs with $k$ cycles}

\Cref{config-is-BG-iff-complement-hamiltonian} implies that
enumerating signed permutations of $n$ elements whose breakpoint
graph decomposes into $k$ alternating cycles is equivalent to
enumerating
perfect matchings
$\tau$ in $\mathcal{F}_{n+1}$ verifying
$c(\delta_G \cup \tau)=k$ and $c(\tau\cup\overline{\delta_G})=1$,
where $\delta_G$ is defined in
\Cref{def:BG} page~\pageref{def:BG} and $\overline{\delta_G}$ is defined in
\Cref{def:complement} page~\pageref{def:complement}. Using
\Cref{renaming-one-factors-is-fine}, we thus obtain the following.

\begin{remark} For every
$k$ in $\{1,2,\ldots,n+1\}$, $\signedhultmannumber{n}{k}$ is the coefficient of the monomial
$h^{k}\ell$ in $Q_{n+1}^{\mathbb{R}}(h,$ $\ell)$.\label{rem}
\end{remark}

The second expression for $Q_{n+1}^{\mathbb{R}}(h,\ell)$ given in
\Cref{2} allows us to obtain the following explicit
formula for $\signedhultmannumber{n}{k}$.

\begin{theorem}\label{formula-for-signed-hultman-numbers} For all $n$ in $\mathbb{N}_0$, for all $k$ in $\{1, 2,\ldots, n+1\}$:
\begin{eqnarray}
\signedhultmannumber{n}{k}&=&\sum_{\lambda}c_{\lambda}(2)\times[h^{k}]F_{\lambda}(h)\nonumber\\
&&\times\frac{(-1)^{n-a-b}2^{a-b-1}(2b)!(a-1)!(n-a-b+2)!}{(2b-1)b!},
\end{eqnarray}
where $\lambda$ ranges over all partitions of $n+1$ of the form
$(a,b,1^{n-a-b+1})$, with $a\geq b\geq1$ or $a=n+1,b=0$, and where the
function $F_{\lambda}(\cdot)$ as well as the coefficients $c_{\lambda}(2)$ follow the definitions previously given in \Cref{sec:background-signed-hultman-numbers}\footnote{With the slight modification that $n$ needs to be replaced with $n+1$.}.
\end{theorem}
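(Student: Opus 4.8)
The plan is to combine \Cref{rem} with the partition-based expression \Cref{2}. By \Cref{rem}, $\signedhultmannumber{n}{k}$ equals the coefficient of the monomial $h^{k}\ell$ in $Q_{n+1}^{\mathbb{R}}(h,\ell)$, and substituting \Cref{2} (with $n$ replaced by $n+1$ everywhere, per the footnote) I would exploit the fact that each summand $c_{\lambda}(2)F_{\lambda}(h)F_{\lambda}(\ell)$ is a product of a polynomial in $h$ and a polynomial in $\ell$. Extraction of the $h^{k}\ell$ coefficient therefore factors through each variable separately, giving
\[
\signedhultmannumber{n}{k}=\sum_{\lambda}c_{\lambda}(2)\,[h^{k}]F_{\lambda}(h)\,[\ell]F_{\lambda}(\ell).
\]
Matching this against the claimed formula reduces the whole theorem to a single identity: that $[\ell]F_{\lambda}(\ell)$ equals the trailing factor displayed in the statement.

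To prove that identity in the generic case $a\geq b\geq1$, I would write $F_{\lambda}(\ell)$ from \Cref{F_lambda} as an explicit product of linear factors and locate those that vanish at $\ell=0$. The first falling factorial $\fallingfactorial{(\ell/2+a-1)}{a-b}$ has factors $\ell/2+a-1,\ldots,\ell/2+b$, none equal to $\ell/2$ because $b\geq1$; the second, $\fallingfactorial{(\ell+2b-2)}{n-a+b+1}$, runs through $\ell+2b-2,\ell+2b-3,\ldots,\ell+2b-2-(n-a+b)$, and its term of index $2b-2$ is exactly $\ell$. The index $2b-2$ lies in the admissible range $0\leq 2b-2\leq n-a+b$ precisely because $b\geq1$ and $a+b\leq n+1$. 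Hence $F_{\lambda}(\ell)$ is divisible by $\ell$ but not by $\ell^{2}$, its constant term vanishes (as it must, since $c(\tau\cup\overline{\delta_{G}})\geq1$ always), and $[\ell]F_{\lambda}(\ell)$ is $2^{a-b}$ times the product of all remaining factors evaluated at $\ell=0$.

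The core computation is to evaluate that product. The surviving factors of the first falling factorial give $\fallingfactorial{(a-1)}{a-b}=(a-1)!/(b-1)!$. For the second, I would split the remaining factors into the block of indices below $2b-2$, contributing $(2b-2)!$, and the block above, contributing $(-1)(-2)\cdots\bigl(-(n-a-b+2)\bigr)=(-1)^{n-a-b}(n-a-b+2)!$. Collecting these and simplifying by means of $(2b)!=2b(2b-1)(2b-2)!$ and $b!=b\,(b-1)!$ yields exactly the stated expression. I would then dispatch the boundary partition $\lambda=(n+1)$, where $a=n+1$, $b=0$ and $F_{\lambda}(\ell)=2^{n+1}\fallingfactorial{(\ell/2+n)}{n+1}$ is divisible by $\ell/2$, so that $[\ell]F_{\lambda}(\ell)=2^{n}n!$; a direct substitution of $a=n+1,b=0$ into the claimed trailing factor also gives $2^{n}n!$, so the single displayed formula covers this case as well.

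The most delicate part is expected to be the bookkeeping inside the second falling factorial: correctly identifying which of its linear factors equals $\ell$, tracking the sign contributed by the negative factors above index $2b-2$, and matching the resulting factorials to the $(2b)!/\bigl((2b-1)b!\bigr)$ shape demanded by the statement. An off-by-one error in the length $n-a+b+1$ of the falling factorial, or in the location of the vanishing factor, would corrupt both the sign and the factorial arguments, so verifying the index range against $a+b\leq n+1$ and cross-checking the two regimes at the boundary $a=n+1,b=0$ is where I would concentrate the care.
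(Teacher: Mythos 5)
Your proposal is correct and follows essentially the same route as the paper: extract the coefficient of $h^k\ell$ via \Cref{rem} and \Cref{2}, then compute $[\ell]F_{\lambda}(\ell)$ by locating the unique linear factor equal to $\ell$ in $\fallingfactorial{(\ell+2b-2)}{n+1-a+b}$ and evaluating the remaining product, treating $\lambda=(n+1)$ separately. All the factorial and sign bookkeeping you describe matches the paper's computation, including the check that the displayed trailing factor specialises correctly at $a=n+1$, $b=0$.
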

\begin{proof}
\Cref{rem} and \Cref{2} yield
\begin{equation}
\signedhultmannumber{n}{k}=\sum_{\lambda}c_{\lambda}(2)\times[h^{k}]F_{\lambda}(h)\times[\ell]F_{\lambda}(\ell),\label{form}
\end{equation}
where the sum over $\lambda$, the coefficients $c_{\lambda}(2)$
and the function $F_{\lambda}(\cdot)$ are as in the statement of
the present result. For a partition $\lambda$ of the form
$(a,b,1^{n-a-b+1})$, with $a\geq b\geq1$ or $a=n+1,b=0$, it is
easy to see that
\begin{equation}
[\ell]F_{\lambda}(\ell)=\frac{(-1)^{n-a-b}2^{a-b-1}(2b)!(a-1)!(n-a-b+2)!%
}{(2b-1)b!}.\label{coeff}
\end{equation}
Indeed:
\begin{enumerate}
 \item if $\lambda=(a,b,1^{n-a-b+1})$, with $a\geq
b\geq 1$, we have
\begin{eqnarray*}
F_{\lambda}(\ell)&= & 2^{a-b}(\ell/2+a-1)(\ell/2+a-2)\cdots(\ell/2+b)\\
& \times& (\ell+2b-2)(\ell+2b-3)\cdots(\ell+1)\\
&\times& \ell(\ell-1)\cdots(\ell-(n-a-b+2)).
\end{eqnarray*}

The coefficient of $\ell$ in the above expression equals
\begin{eqnarray*}
[\ell]F_{\lambda}(\ell) & =& 2^{a-b}\frac{(a-1)!}{(b-1)!}\times(2b-2)!(-1)^{n-a-b+2}(n-a-b+2)!\\
&=&\frac{(-1)^{n-a-b}2^{a-b-1}(2b)!(a-1)!(n-a-b+2)!}{(2b-1)b!}.
\end{eqnarray*}

\item if $\lambda=(n+1)$, i.e.\ $a=n+1$ and $b=0$, we have
\begin{eqnarray*}
F_{(n+1)}(\ell)&=&2^{n+1}\fallingfactorial{(\ell/2+n)}{n+1}\fallingfactorial{(\ell-2)}{0}\\
&=&2^{n+1}(\ell/2+n)(\ell/2+n-1)\cdots(\ell/2+1)\ell/2,
\end{eqnarray*}
so
$[\ell]F_{(n+1)}(\ell)=2^{n}n!$,
which
verifies \Cref{coeff}.
\end{enumerate}
The proof then follows from \Cref{form,coeff}.
\end{proof}

We conclude this section with \Cref{tab:signed-hultman-numbers}, which shows a few experimental values of the signed Hultman numbers. These values were previously obtained by the first author using the method described in a previous paper of hers~\cite{grusea-random}.

Note that for $k=1$, the sequence defined by $\signedhultmannumber{n}{1}$ for $n=1,2,\ldots$ corresponds to sequence A001171 in the On-Line Encyclopedia of Integer Sequences~\cite{sloane-oeis}.
As we will see in the next section, other known sequences also appear in that table.

\begin{table}
\centering \rotatebox{90} { \tiny
\begin{tabular}{|c|rrrrrrrrrrrr|}
\hline
 \backslashbox{\ \ \ \ \ $n$}{$k$} & $1$ & $2$ & $3$ & $4$ & $5$ & $6$ & $7$ & $8$  & $9$ & $10$ & $11$ & $12$\\
\hline
 1 &            1 &            1 &              &              &             &          &         &        &    &   & & \\
 2 &            4 &            3 &            1 &              &             &          &         &        &    &   & & \\
 3 &           20 &           21 &            6 &            1 &             &          &         &        &    &   & & \\
 4 &          148 &          160 &           65 &           10 &           1 &          &         &        &    &   & & \\
 5 &       1\,348 &       1\,620 &          701 &          155 &          15 &        1 &         &        &    &   & & \\
 6 &      15\,104 &      19\,068 &       9\,324 &       2\,247 &         315 &       21 &       1 &        &    &   & & \\
 7 &     198\,144 &     264\,420 &     138\,016 &      38\,029 &      5\,908 &      574 &      28 &      1 &    &   & & \\
 8 &  2\,998\,656 &  4\,166\,880 &  2\,325\,740 &     692\,088 &    124\,029 &  13\,524 &     966 &     36 &  1 &   & & \\
 9 & 51\,290\,496 & 74\,011\,488 & 43\,448\,940 & 13\,945\,700 & 2\,723\,469 & 344\,961 & 27\,930 & 1\,530 & 45 & 1 & & \\
10 & 979\,732\,224 & 1\,459\,381\,440 & 897\,020\,784 & 305\,142\,068 & 64\,711\,856 & 8\,996\,295 & 850\,905 & 53\,262 & 2\,310 & 55 & 1 & \\
11 & 20\,661\,458\,688 & 31\,674\,232\,128 & 20\,241\,273\,264 & 7\,255\,047\,116 & 1\,640\,552\,028 & 249\,029\,717 & 26\,004\,330 & 1\,910\,403 & 95\,304 & 3\,355 & 66 & 1 \\
\hline
\end{tabular}
\normalsize
}
\caption{A few values of
$\signedhultmannumber{n}{k}$} \label{tab:signed-hultman-numbers}
\end{table}

\section{Special cases}\label{sec:special-cases}

The expression obtained in \Cref{formula-for-signed-hultman-numbers} allows us to compute $\signedhultmannumber{n}{k}$ for all valid values of $n$ and $k$, but we must acknowledge that even though the formula is suited for practical use, it is unfortunately quite complicated and difficult to manipulate.
Simpler expressions do however exist for some particular cases, as we will show below. We will rely a lot on \Cref{config-is-BG-iff-complement-hamiltonian} in this section, and decide to use a slightly different layout for the breakpoint graph: labels are omitted for clarity, and grey edges rather than black edges are now laid out on a circle, so that computing the complement of a given configuration simply amounts to shifting grey edges sideways by one position. In order to make verifications easier for the reader, we also draw edges in the complement as dotted edges. The following particular cases are easy to verify:

\begin{enumerate}
 \item $\signedhultmannumber{n}{k}=0$ for all $k<1$ and all $k>n+1$ (trivial);
 \item $\signedhultmannumber{n}{n+1}=1$, since the only permutation whose breakpoint graph decomposes into $n+1$ cycles is $\iota$;
 \item $\signedhultmannumber{n}{n} = \binom{n+1}{2}$, since enumerating such permutations comes down to counting breakpoint graphs whose cycles all have length $1$, except for one that has length $2$. This in turn is equivalent to enumerating the ways in which one can connect any two of the $n+1$
grey
edges by black edges so as to obtain a valid configuration (with respect to \Cref{config-is-BG-iff-complement-hamiltonian}); as can be verified on \Cref{fig:all-single-two-cycles}, only one of the two possible choices of black edges (namely, configuration $(b)$) is valid, and the equality follows from the fact that there are $\binom{n+1}{2}$ possible ways to select two grey edges out of $n+1$.

\begin{figure}[htbp]
\centering
 \begin{tikzpicture}[scale=.5]
  \path ( 35:2.1cm) node[draw,vertex] (a1) {}; 
  \path ( 55:2.1cm) node[draw,vertex] (a2) {}; 
  \path (125:2.1cm) node[draw,vertex] (b1) {}; 
  \path (145:2.1cm) node[draw,vertex] (b2) {}; 
  \path (215:2.1cm) node[draw,vertex] (c1) {}; 
  \path (235:2.1cm) node[draw,vertex] (c2) {}; 
  \path (305:2.1cm) node[draw,vertex] (d1) {}; 
  \path (325:2.1cm) node[draw,vertex] (d2) {}; 

  \draw[gray] (a1) -- (a2);
  \draw[gray] (b1) -- (b2);
  \draw[gray] (c1) -- (c2);
  \draw[gray] (d1) -- (d2);

  \path (-10:2.1cm) node[vertex] (W1) {}; 
  \path ( 10:2.1cm) node[vertex] (W2) {}; 

  \path ( 80:2.1cm) node[draw,vertex] (X1) {}; 
  \path (100:2.1cm) node[draw,vertex] (X2) {}; 
  \path (170:2.1cm) node[vertex] (Y1) {}; 
  \path (190:2.1cm) node[vertex] (Y2) {}; 

  \path (260:2.1cm) node[draw,vertex] (Z1) {}; 
  \path (280:2.1cm) node[draw,vertex] (Z2) {}; 

  \draw[dotted] (W1) -- (W2);
  \draw[gray] (X1) -- (X2);
  \draw[dotted] (Y1) -- (Y2);
  \draw[gray] (Z1) -- (Z2);
  \draw (a1) to [bend left=45] (a2);
  \draw (b1) to [bend left=45] (b2);
  \draw (d1) to [bend left=45] (d2);
  \draw (c1) to [bend left=45] (c2);

\draw[dotted] (a2) -- (X1);
\draw[dotted] (X2) -- (b1);
\draw[dotted] (b2) -- (Y1);
\draw[dotted] (Y2) -- (c1);
\draw[dotted] (c2) -- (Z1);
\draw[dotted] (Z2) -- (d1);
\draw[dotted] (d2) -- (W1);
\draw[dotted] (W2) -- (a1);

\draw (X1) -- (Z2);
\draw (X2) -- (Z1);
    \node at (0,-2.5) {$(a)$};
\begin{scope}[xshift=185pt]
  \path ( 35:2.1cm) node[draw,vertex] (a1) {}; 
  \path ( 55:2.1cm) node[draw,vertex] (a2) {}; 
  \path (125:2.1cm) node[draw,vertex] (b1) {}; 
  \path (145:2.1cm) node[draw,vertex] (b2) {}; 
  \path (215:2.1cm) node[draw,vertex] (c1) {}; 
  \path (235:2.1cm) node[draw,vertex] (c2) {}; 
  \path (305:2.1cm) node[draw,vertex] (d1) {}; 
  \path (325:2.1cm) node[draw,vertex] (d2) {}; 

  \draw[gray] (a1) -- (a2);
  \draw[gray] (b1) -- (b2);
  \draw[gray] (c1) -- (c2);
  \draw[gray] (d1) -- (d2);

  \path (-10:2.1cm) node[vertex] (W1) {}; 
  \path ( 10:2.1cm) node[vertex] (W2) {}; 

  \path ( 80:2.1cm) node[draw,vertex] (X1) {}; 
  \path (100:2.1cm) node[draw,vertex] (X2) {}; 
  \path (170:2.1cm) node[vertex] (Y1) {}; 
  \path (190:2.1cm) node[vertex] (Y2) {}; 

  \path (260:2.1cm) node[draw,vertex] (Z1) {}; 
  \path (280:2.1cm) node[draw,vertex] (Z2) {}; 

  \draw[dotted] (W1) -- (W2);
  \draw[gray] (X1) -- (X2);
  \draw[dotted] (Y1) -- (Y2);
  \draw[gray] (Z1) -- (Z2);
  \draw (a1) to [bend left=45] (a2);
  \draw (b1) to [bend left=45] (b2);
  \draw (d1) to [bend left=45] (d2);
  \draw (c1) to [bend left=45] (c2);

\draw[dotted] (a2) -- (X1);
\draw[dotted] (X2) -- (b1);
\draw[dotted] (b2) -- (Y1);
\draw[dotted] (Y2) -- (c1);
\draw[dotted] (c2) -- (Z1);
\draw[dotted] (Z2) -- (d1);
\draw[dotted] (d2) -- (W1);
\draw[dotted] (W2) -- (a1);

\draw (X1) -- (Z1);
\draw (X2) -- (Z2);

    \node at (0,-2.5) {$(b)$};
\end{scope}
\end{tikzpicture}
\caption{The two forms of $2$-cycles that may arise in a breakpoint graph. Only four $1$-cycles are shown in each graph, but there can be any number of them.}
\label{fig:all-single-two-cycles}
\end{figure}
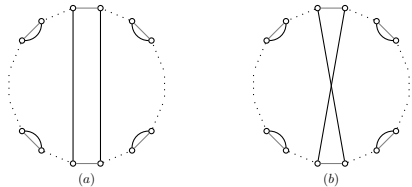
\end{enumerate}

We now show how one can obtain a simple and explicit formula for $\signedhultmannumber{n}{n-1}$. Although the formula is quite simple, we hope that the proof will convince the reader of the shortcomings of a case analysis in this setting.

\begin{proposition}
For all $n\geq 1$, we have $\signedhultmannumber{n}{n-1}=5\binom{n+1}{4}+4\binom{n+1}{3}$.
\end{proposition}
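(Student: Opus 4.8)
The plan is to rely on \Cref{config-is-BG-iff-complement-hamiltonian} and count, among all configurations whose cycle decomposition consists of exactly $n-1$ cycles, those whose complement is hamiltonian. Since the lengths of the cycles of a breakpoint graph built on $n+1$ black edges sum to $n+1$, decomposing into $n-1$ cycles forces the multiset of cycle lengths to be a partition of $n+1$ into $n-1$ parts, of which there are exactly two: $(3,1^{n-2})$ and $(2^2,1^{n-3})$. I would therefore split the count into configurations made of a single $3$-cycle together with $n-2$ trivial ($1$-)cycles, and configurations made of two $2$-cycles together with $n-3$ trivial cycles.

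The structural observation that makes the count tractable is that the $1$-cycles are \emph{transparent} with respect to hamiltonicity of the complement. Indeed, if a grey edge $\{2i,2i+1\}$ carries a black edge joining its two endpoints, then in the complement that black edge together with the two incident edges of $\overline{\delta_G}$ forms a path $2i-1,2i,2i+1,2i+2$ whose internal vertices have degree $2$ and which is merely traversed from one side to the other. Suppressing every such path, the complement of the whole configuration is a single cycle if and only if the complement of the \emph{reduced} configuration carried by the $3$ (resp.\ $4$) grey edges supporting the nontrivial cycle(s) is a single cycle; since $3$ (resp.\ $4$) points on the circle are all equivalent up to rotation, the reduced problem depends only on the number of grey edges involved and not on their positions. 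Consequently the contribution of each cycle type factorises as $\binom{n+1}{3}$ (resp.\ $\binom{n+1}{4}$), the number of ways of choosing the grey edges carrying the nontrivial cycle(s), times a fixed number $N_3$ (resp.\ $N_{22}$) of valid black-edge patterns on those grey edges.

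It then remains to determine $N_3$ and $N_{22}$ by a finite case analysis using the shifting picture described at the start of \Cref{sec:special-cases}. For $N_3$, the black edges on three grey edges can be completed into a single $3$-cycle in exactly $8$ ways (out of the $15$ black matchings on six vertices, one yields three $1$-cycles and six yield a $1$-cycle plus a $2$-cycle, leaving $8$); checking which of these yield a hamiltonian complement gives $N_3=4$. For $N_{22}$, one first pairs the four grey edges into two pairs ($3$ ways) and chooses one of the two admissible $2$-cycle shapes of \Cref{fig:all-single-two-cycles} on each pair, producing $12$ patterns to test.

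The main obstacle is precisely the evaluation of $N_{22}$: unlike the single-$2$-cycle situation analysed for $\signedhultmannumber{n}{n}$, hamiltonicity of the complement is here a global condition that does not decouple into independent local choices on the two cycles, because the two $2$-cycles may nest or interleave around the circle. Carrying out the $12$-case verification yields $N_{22}=5$ rather than any naive product such as $3\times 1$, which is exactly the phenomenon the statement warns about. Combining the two contributions gives $\signedhultmannumber{n}{n-1}=N_3\binom{n+1}{3}+N_{22}\binom{n+1}{4}=4\binom{n+1}{3}+5\binom{n+1}{4}$, as claimed; the formula stays correct for the small values of $n$ for which one or both of the relevant binomial coefficients vanish.
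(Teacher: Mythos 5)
Your proposal is correct and follows essentially the same route as the paper's proof: split $\signedhultmannumber{n}{n-1}$ by cycle type into one $3$-cycle versus two $2$-cycles, use \Cref{config-is-BG-iff-complement-hamiltonian} to test each of the $8$ (resp.\ $12$) candidate black-edge patterns for hamiltonicity of the complement, find $4$ (resp.\ $5$) valid ones, and multiply by $\binom{n+1}{3}$ (resp.\ $\binom{n+1}{4}$). Your explicit justification that the $1$-cycles are transparent to the hamiltonicity check (so only the number, not the positions, of the chosen grey edges matters) is a point the paper leaves implicit, and is a welcome addition rather than a deviation.
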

\begin{proof}
Note that $\signedhultmannumber{n}{n-1}$ is the number of permutations whose breakpoint graph contains either one $3$-cycle or
two $2$-cycles,
all other cycles having length $1$ in both cases:
\begin{enumerate}
\item the number of permutations satisfying the first condition is the number of ways to connect three
grey
edges in the breakpoint graph in such a way that the complement configuration is hamiltonian (see \Cref{config-is-BG-iff-complement-hamiltonian}). As \Cref{fig:all-three-cycles} shows, there are eight possible ways to create such a configuration, only four of which are valid (namely, configurations $(a)$, $(b)$, $(c)$ and $(d)$). The reader can easily verify that the other configurations are invalid by replacing
grey
edges with dotted edges.

\begin{figure}[htbp]
\centering
\begin{tikzpicture}[scale=.5]
\basisthreecycle
\draw (a2) to [bend right=30] (c1);
\draw (c2) to [bend right=30] (b2);
\draw (a1) to [bend left=30] (b1);

    \node at (0,-2.5) {$(a)$};
\begin{scope}[xshift=185pt]
\basisthreecycle
\draw (a1) to [bend right=30] (c1);
\draw (a2) to [bend left=30] (b2);
\draw (b1) to [bend left=30] (c2);

    \node at (0,-2.5) {$(b)$};

\end{scope}

\begin{scope}[xshift=375pt]

\basisthreecycle
\draw (a2) to [bend right=30] (c2);
\draw (a1) to [bend left=30] (b2);
\draw (b1) to [bend left=30] (c1);

    \node at (0,-2.5) {$(c)$};

\end{scope}
\begin{scope}[xshift=565pt]
\basisthreecycle
\draw (a2) to [bend right=30] (c1);
\draw (a1) to [bend left=30] (b2);
\draw (b1) to [bend left=30] (c2);

    \node at (0,-2.5) {$(d)$};

\end{scope}


\begin{scope}[yshift=-150pt]
\basisthreecycle

\draw (a1) to [bend right=45] (c1);
\draw (c2) to [bend right=45] (b2);
\draw (b1) to [bend right=45] (a2);

    \node at (0,-2.5) {$(e)$};

\begin{scope}[xshift=185pt]
\basisthreecycle
\draw (a2) to [bend right=45] (c2);
\draw (c1) to [bend right=45] (b2);
\draw (b1) to [bend right=45] (a1);

    \node at (0,-2.5) {$(f)$};

\end{scope}
\begin{scope}[xshift=375pt]
\basisthreecycle

\draw (a1) to [bend right=45] (c2);
\draw (c1) to [bend right=45] (b1);
\draw (b2) to [bend right=45] (a2);

    \node at (0,-2.5) {$(g)$};
\end{scope}

\begin{scope}[xshift=565pt]
\basisthreecycle

\draw (a1) to [bend right=45] (c2);
\draw (c1) to [bend right=45] (b2);
\draw (b1) to [bend right=45] (a2);

    \node at (0,-2.5) {$(h)$};

\end{scope}
\end{scope}

\end{tikzpicture}
\caption{All possible forms of $3$-cycles that may arise in a breakpoint graph. Only three $1$-cycles are shown in each graph, but there can be any number of them.}
\label{fig:all-three-cycles}
\end{figure}
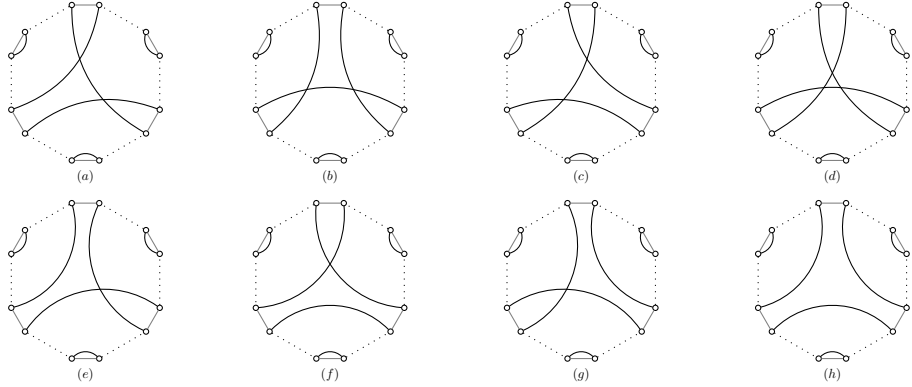

We obtain the rightmost term in the wanted expression by noting that only four of the eight possible $3$-cycles are valid, and there are $\binom{n+1}{3}$ ways to select three
grey
edges out of $n+1$.

\item the number of permutations satisfying the second condition can be constructed by choosing four
grey
edges, then connecting them by pairs while ensuring that the resulting configuration is valid.
\Cref{fig:all-two-cycles} shows all possible configurations with two cycles of length two.

\begin{figure}[htbp]
\centering
\begin{tikzpicture}[scale=.5]
\basispairtwocycles
\draw (a2) to [bend left=45] (b1);
\draw (a1) to [bend left=45] (b2);
\draw (c2) to [bend left=45] (d1);
\draw (d2) to [bend right=45] (c1);

    \node at (0,-2.5) {$(a)$};
\begin{scope}[xshift=185pt]
\basispairtwocycles

\draw (a2) to [bend left=45] (b2);
\draw (a1) to [bend left=45] (b1);
\draw (c2) to [bend left=45] (d2);
\draw (d1) to [bend right=45] (c1);
    \node at (0,-2.5) {$(b)$};

\end{scope}
\begin{scope}[xshift=375pt]
\basispairtwocycles
\draw (a2) to [bend left=45] (b2);
\draw (a1) to [bend left=45] (b1);
\draw (c1) to [bend left=45] (d2);
\draw (c2) to [bend left=45] (d1);
    \node at (0,-2.5) {$(c)$};

\end{scope}
\begin{scope}[xshift=565pt]
\basispairtwocycles
\draw (a2) to [bend left=45] (b1);
\draw (a1) to [bend left=45] (b2);
\draw (c2) to [bend left=45] (d2);
\draw (d1) to [bend right=45] (c1);
    \node at (0,-2.5) {$(d)$};

\end{scope}
\begin{scope}[yshift=-150pt]
\basispairtwocycles
\draw (a2) to [bend right=45] (d1);
\draw (a1) to [bend right=45] (d2);
\draw (b2) to [bend left=45] (c1);
\draw (b1) to [bend left=45] (c2);

    \node at (0,-2.5) {$(e)$};

\begin{scope}[xshift=185pt]
\basispairtwocycles
\draw (a2) to [bend right=45] (d2);
\draw (a1) to [bend right=45] (d1);
\draw (b2) to [bend left=45] (c2);
\draw (b1) to [bend left=45] (c1);

    \node at (0,-2.5) {$(f)$};

\end{scope}
\begin{scope}[xshift=375pt]
\basispairtwocycles
\draw (a2) to [bend right=45] (d2);
\draw (a1) to [bend right=45] (d1);
\draw (b2) to [bend left=45] (c1);
\draw (b1) to [bend left=45] (c2);

    \node at (0,-2.5) {$(g)$};

\end{scope}
\begin{scope}[xshift=565pt]
\basispairtwocycles
\draw (a2) to [bend right=45] (d1);
\draw (a1) to [bend right=45] (d2);
\draw (b2) to [bend left=45] (c2);
\draw (b1) to [bend left=45] (c1);
    \node at (0,-2.5) {$(h)$};

\end{scope}
\end{scope}

\begin{scope}[yshift=-300pt]
\basispairtwocycles
\draw (a2) to (c1);
\draw (a1) to (c2);
\draw (b2) to (d1);
\draw (b1) to (d2);

    \node at (0,-2.5) {$(i)$};

\begin{scope}[xshift=185pt]
\basispairtwocycles
\draw (a2) to (c2);
\draw (a1) to (c1);
\draw (b2) to (d2);
\draw (b1) to (d1);

    \node at (0,-2.5) {$(j)$};

\end{scope}
\begin{scope}[xshift=375pt]
\basispairtwocycles
\draw (a2) to (c1);
\draw (a1) to (c2);
\draw (b2) to (d2);
\draw (b1) to (d1);

    \node at (0,-2.5) {$(k)$};

\end{scope}
\begin{scope}[xshift=565pt]
\basispairtwocycles
\draw (a2) to (c2);
\draw (a1) to (c1);
\draw (b2) to (d1);
\draw (b1) to (d2);
    \node at (0,-2.5) {$(l)$};

\end{scope}
\end{scope}
\end{tikzpicture}
\caption{All possible pairs of $2$-cycles that may arise in a breakpoint graph. Only four $1$-cycles are shown in each graph, but there can be any number of them.}
\label{fig:all-two-cycles}
\end{figure}
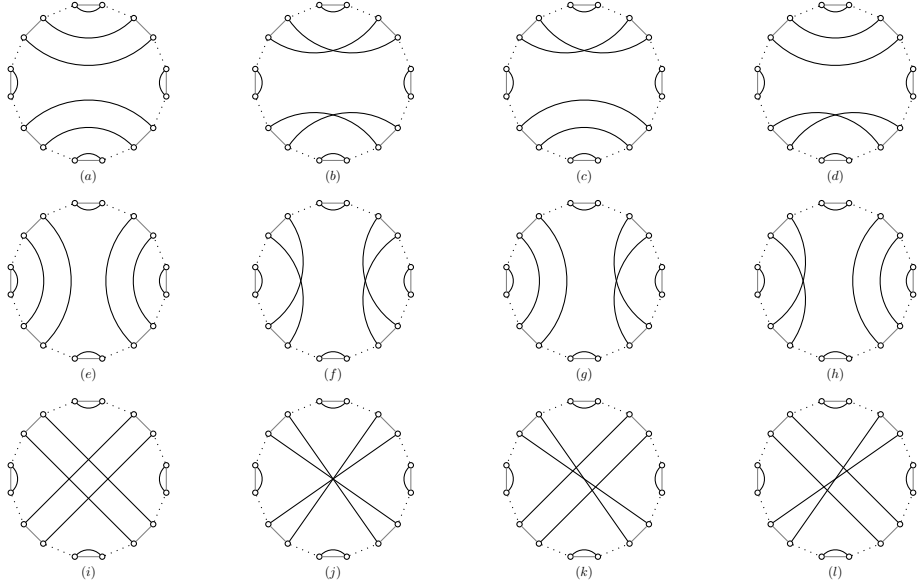

The reader can again easily verify the validity of all configurations by replacing
grey
edges with dotted edges.
Only five possible configurations with two $2$-cycles are valid (namely, configurations $(b)$, $(f)$, $(i)$, $(k)$ and $(l)$) out of the twelve shown in \Cref{fig:all-two-cycles}, and there are $\binom{n+1}{4}$ ways to select two pairs of
grey
edges out of $n+1$, which yields the leftmost term in the wanted expression and completes the proof.
\end{enumerate}
\end{proof}

\section{Simpler proofs of previous results}\label{sec:simpler-proofs}

\Cref{new-formula-for-hultman-numbers} allows us to obtain a new
proof of \citeauthor{bona-average}'s formula
(\Cref{eqn:hultman-numbers-bona-flynn} page~\pageref{eqn:hultman-numbers-bona-flynn}).

\begin{corollary}\label{corollary:new-proof-of-bonas-result}
\cite{bona-average} For all $n$ in $\mathbb{N}_0$:
\begin{equation*}
\hultmannumber{n}{k}=\left\{
\begin{array}{ll}
\firststirlingnumber{n+2}{k}/\binom{n+2}{2} & \mbox{if $n-k$ is odd,}\\
0 & \mbox{otherwise.}
\end{array}
\right.
\end{equation*}
\end{corollary}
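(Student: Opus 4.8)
The plan is to feed \Cref{new-formula-for-hultman-numbers} into the classical Stirling-number expansions \eqref{risingfactorial} and \eqref{fallingfactorial}, after first collapsing the sum over $i$ in \eqref{formulaH} into closed form. First I would reindex by setting $m=n-i+1$, so that as $i$ runs from $1$ to $n+1$ the quantity $m$ runs over $0,1,\ldots,n$, turning \eqref{formulaH} into
\[
\hultmannumber{n}{k}=\frac{1}{n+1}\sum_{m=0}^{n}[h^{k}]\fallingfactorial{(h+m)}{n+1}.
\]
Since taking the coefficient $[h^{k}]$ is a linear operation, it suffices to evaluate the single polynomial $\sum_{m=0}^{n}\fallingfactorial{(h+m)}{n+1}$ in closed form and then read off its $h^{k}$ coefficient at the end.

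The key observation is that this is a telescoping sum. Using the forward-difference identity $\fallingfactorial{(x+1)}{n+2}-\fallingfactorial{x}{n+2}=(n+2)\fallingfactorial{x}{n+1}$, which follows by factoring out the common block $\fallingfactorial{x}{n+1}=x(x-1)\cdots(x-n)$ from both terms, each summand $\fallingfactorial{(h+m)}{n+1}$ is recognised as a discrete difference, so summing over $m$ leaves only the endpoints:
\[
\sum_{m=0}^{n}\fallingfactorial{(h+m)}{n+1}=\frac{1}{n+2}\left(\fallingfactorial{(h+n+1)}{n+2}-\fallingfactorial{h}{n+2}\right).
\]
Next I would rewrite the shifted falling factorial as a rising factorial, noting that $\fallingfactorial{(h+n+1)}{n+2}=(h+n+1)(h+n)\cdots h=h(h+1)\cdots(h+n+1)=\risingfactorial{h}{n+2}$. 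Extracting the coefficient of $h^{k}$ via \eqref{risingfactorial} and \eqref{fallingfactorial} then yields $[h^{k}]\risingfactorial{h}{n+2}=\firststirlingnumber{n+2}{k}$ and $[h^{k}]\fallingfactorial{h}{n+2}=(-1)^{n+2-k}\firststirlingnumber{n+2}{k}=(-1)^{n-k}\firststirlingnumber{n+2}{k}$.

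Putting the pieces together gives $\hultmannumber{n}{k}=\frac{1}{(n+1)(n+2)}\firststirlingnumber{n+2}{k}\left(1-(-1)^{n-k}\right)$. Since $(n+1)(n+2)=2\binom{n+2}{2}$ and the factor $1-(-1)^{n-k}$ equals $0$ when $n-k$ is even and $2$ when $n-k$ is odd, this is exactly the claimed dichotomy: the two boundary contributions cancel in the even case, and reinforce in the odd case, the surviving factor of $2$ cancelling against the $2$ in the denominator. I expect the main obstacle to be the telescoping step, namely spotting that the sliding window of falling factorials is a discrete antiderivative and that the resulting boundary term $\fallingfactorial{(h+n+1)}{n+2}$ is precisely the rising factorial $\risingfactorial{h}{n+2}$; this is what makes the two endpoints produce the \emph{same} Stirling number up to the parity sign that drives the whole result. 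Everything after that is routine sign and coefficient bookkeeping.
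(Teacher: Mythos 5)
Your proof is correct and is essentially the paper's own argument: both telescope the sum in \eqref{formulaH} into $\frac{1}{(n+1)(n+2)}\bigl(\risingfactorial{h}{n+2}-\fallingfactorial{h}{n+2}\bigr)$ and then read off the Stirling coefficients via \eqref{risingfactorial} and \eqref{fallingfactorial}. The only cosmetic difference is that you phrase the discrete-difference identity in terms of falling factorials $\fallingfactorial{(h+m+1)}{n+2}-\fallingfactorial{(h+m)}{n+2}$ while the paper writes the same polynomials as rising factorials $\risingfactorial{(h-i+1)}{n+2}-\risingfactorial{(h-i)}{n+2}$.
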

\begin{proof}
The key idea of the proof is the fact that, for every
$i=1,2,\ldots,n+1$, we have
\begin{equation}
\fallingfactorial{(h+n-i+1)}{n+1}=\frac{1}{n+2}\left(\risingfactorial{(h-i+1)}{n+2}-\risingfactorial{(h-i)}{n+2}\right),\label{idea}
\end{equation}
since
\begin{eqnarray*}
&&\frac{1}{n+2}\left(\risingfactorial{(h-i+1)}{n+2}-\risingfactorial{(h-i)}{n+2}\right)\\
&=&\frac{1}{n+2}\left((h-i+1)\cdots(h+n-i+2)-(h-i)\cdots(h+n-i+1)\right)\\
&=&\frac{1}{n+2}(h-i+1)\cdots(h+n-i+1)\left((h+n-i+2)-(h-i)\right)\\
&=&\fallingfactorial{(h+n-i+1)}{n+1}.
\end{eqnarray*}

\noindent Summing over $i$ in \Cref{idea}, we obtain:
\begin{eqnarray*}
& &\frac{1}{n+1}\sum_{i=1}^{n+1}\fallingfactorial{(h+n-i+1)}{n+1}\\
& =&\frac{1}{(n+1)(n+2)}\sum_{i=1}^{n+1}\left(\risingfactorial{(h-i+1)}{n+2}-\risingfactorial{(h-i)}{n+2}\right)\\
& =&\frac{1}{(n+1)(n+2)}\left(\risingfactorial{h}{n+2}-\risingfactorial{(h-n-1)}{n+2}\right)\\
& =&\frac{1}{(n+1)(n+2)}\left(\risingfactorial{h}{n+2}-\fallingfactorial{h}{n+2}\right).
\end{eqnarray*}

By \Cref{fallingfactorial,risingfactorial}, the coefficient of $h^k$ in $\risingfactorial{h}{n+2}$ is
$\firststirlingnumber{n+2}{k}$ and the
coefficient of $h^k$ in $\fallingfactorial{h}{n+2}$ is
$(-1)^{n-k}\firststirlingnumber{n+2}{k}$.
Using \Cref{formulaH}, we conclude that
$$\hultmannumber{n}{k}=\left\{
\begin{array}{ll}
\frac{2}{(n+1)(n+2)}\firststirlingnumber{n+2}{k} & \mbox{if $n-k$ is odd,}\\
0 & \mbox{otherwise,}
\end{array}
\right.
$$
which completes the proof.
\end{proof}

\Cref{new-formula-for-hultman-numbers} also allows us to obtain a
simple proof of a binomial identity previously obtained by
\citet{Sury}.

\begin{corollary}\label{prop:new-proof-of-surys-result}
\cite{Sury} For all $n$ in $\mathbb{N}_0$:
$$\sum_{i=0}^{n}\frac{(-1)^{i}}{\binom{n}{i}}=(1+(-1)^n)\frac{n+1}{n+2}.$$
\end{corollary}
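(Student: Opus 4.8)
The plan is to specialise \Cref{new-formula-for-hultman-numbers} to the single value $k=1$, where the coefficient extraction is cheap, and to recognise the resulting sum as a rescaled copy of the left-hand side of Sury's identity; the right-hand side then comes essentially for free from \citeauthor{bona-average}'s formula (\Cref{corollary:new-proof-of-bonas-result}).

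First I would expand $\fallingfactorial{(h+n-i+1)}{n+1}=\prod_{t=-(i-1)}^{\,n-i+1}(h+t)$, a product over $n+1$ consecutive integer shifts. For every $i\in\{1,\ldots,n+1\}$ this range contains the shift $t=0$ exactly once, so $\fallingfactorial{(h+n-i+1)}{n+1}=h\prod_{t\neq 0}(h+t)$ and hence $[h^{1}]\fallingfactorial{(h+n-i+1)}{n+1}=\prod_{t\neq0}t$. Splitting the remaining shifts into the negative block $\{-(i-1),\ldots,-1\}$ and the positive block $\{1,\ldots,n-i+1\}$ gives
$$[h^{1}]\fallingfactorial{(h+n-i+1)}{n+1}=(-1)^{i-1}(i-1)!\,(n-i+1)!.$$

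Substituting this into \Cref{formulaH} with $k=1$ and reindexing with $j=i-1$ yields
$$\hultmannumber{n}{1}=\frac{1}{n+1}\sum_{j=0}^{n}(-1)^{j}\,j!\,(n-j)!=\frac{n!}{n+1}\sum_{j=0}^{n}\frac{(-1)^{j}}{\binom{n}{j}},$$
where the last equality uses $j!\,(n-j)!=n!/\binom{n}{j}$. It then remains to evaluate $\hultmannumber{n}{1}$ independently: by \Cref{corollary:new-proof-of-bonas-result} together with $\firststirlingnumber{n+2}{1}=(n+1)!$, we get $\hultmannumber{n}{1}=\frac{2}{(n+1)(n+2)}(n+1)!=\frac{2\,n!}{n+2}$ when $n-1$ is odd and $0$ otherwise, i.e.\ $\hultmannumber{n}{1}=(1+(-1)^{n})\frac{n!}{n+2}$ uniformly. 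Equating the two expressions for $\hultmannumber{n}{1}$ and cancelling the common factor $n!/(n+1)$ delivers the claimed identity. Alternatively, one may extract $[h^{1}]$ directly from the polynomial identity established inside the proof of \Cref{corollary:new-proof-of-bonas-result}, using $[h^{1}]\risingfactorial{h}{n+2}=\firststirlingnumber{n+2}{1}$ and $[h^{1}]\fallingfactorial{h}{n+2}=(-1)^{n+1}\firststirlingnumber{n+2}{1}$, which reaches the same value of $\hultmannumber{n}{1}$ without an explicit case split.

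The only real obstacle is the coefficient computation in the second step: one must notice that a single factor of the falling factorial vanishes to first order (the $t=0$ shift), so that $[h^{1}]$ collapses to a product of the remaining shifts, and then track the sign $(-1)^{i-1}$ contributed by the negative block. Everything after that is bookkeeping, and the parity factor $1+(-1)^{n}$ on the right-hand side of Sury's identity is exactly the parity gate already present in \Cref{corollary:new-proof-of-bonas-result}.
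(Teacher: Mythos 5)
Your proposal is correct and follows essentially the same route as the paper: specialise \Cref{formulaH} to $k=1$, recognise $[h^1]\fallingfactorial{(h+n-i+1)}{n+1}=(-1)^{i-1}(i-1)!(n-i+1)!$ (a computation already carried out inside the proof of \Cref{new-formula-for-hultman-numbers}), rewrite the sum as $\frac{n!}{n+1}\sum_i(-1)^i/\binom{n}{i}$, and equate with the known value of $\hultmannumber{n}{1}$. The paper cites this value from \citet{doignon-hultman} while noting in a footnote that it also follows from \Cref{eqn:hultman-numbers-bona-flynn}, which is exactly the derivation you give, so the two arguments coincide.
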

\begin{proof}
Setting $k$ to $1$ in \Cref{formulaH} (page~\pageref{formulaH})
yields
$$\hultmannumber{n}{1} =\frac{1}{n+1}\sum_{i=1}^{n+1}(-1)^{i-1}(n-i+1)!(i-1)! =\frac{n!}{n+1}\sum_{i=0}^{n}\frac{(-1)^{i}}{\binom{n}{i}}.$$
On the other hand, as previously observed\footnote{The result can
also be easily derived from
\Cref{eqn:hultman-numbers-bona-flynn}.} by
\citet{doignon-hultman}, we have:
$$
\hultmannumber{n}{1}=\left\{
\begin{array}{ll}
\frac{2n!}{n+2} & \mbox{if $n$ is even,}\\
0 & \mbox{otherwise,}
\end{array}
\right.
$$
which completes the proof.
\end{proof}


\section{Expected value and variance of the Hultman numbers}\label{sec:stats}

In order to gain more insight into the distribution of
the Hultman numbers,
we will now investigate the
question of computing the expected value and variance of the
number of cycles in breakpoint graphs, both for unsigned and for signed permutations.

It will also be interesting to see how these values compare to the
expected value and variance of the number of cycles in the usual
disjoint cycle decomposition of a uniform random unsigned
permutation $\pi$ in $S_n$. We recall here (see e.g.\
\citet{wilf-generatingfunctionology}) the exact values of
these quantities:
\begin{eqnarray*}
\mathbb{E}(c(\pi)) &=& H_n ,\\
\mathrm{Var}(c(\pi)) &=& H_n - \sum_{k=1}^n \frac{1}{k^2},
\end{eqnarray*}
as well as their asymptotic behaviour when $n \to \infty$:
\begin{eqnarray}
\mathbb{E}(c(\pi)) &=& \log(n) + \gamma + o(1)\label{eqn:mean-cycles-asymptotic} ,\\
\mathrm{Var}(c(\pi)) &=& \log(n) + \gamma - \frac{\pi^2}{6} +
o(1)\label{eqn:var-cycles-asymptotic},
\end{eqnarray}
where $H_n$ denotes the
$n^{\mbox{\tiny th}}$ harmonic number $H_n=\sum_{i=1}^n
\frac{1}{i}$ and $\gamma$ denotes the Euler-Mascheroni constant.
As usual, $o(1)$ denotes a quantity that converges to $0$ as $n
\to \infty$.

\subsection{The unsigned case}

\citet{bona-average} already proved a formula for computing the
expected number of cycles in the breakpoint graph of a uniform
random unsigned permutation. In this section we provide a new
proof of their result and also give an explicit formula for the
variance of this distribution.
We start by computing the generating function of the Hultman
numbers.

\begin{lemma}\label{generating-function-for-hultman-numbers}
For all $n\in\mathbb{N}_0$, we have:
$$
F(x) =
\sum_{k=0}^{n+1}\hultmannumber{n}{k} x^k =
\frac{\risingfactorial{x}{n+2}-\fallingfactorial{x}{n+2}}{2\binom{n+2}{2}}.
$$
\end{lemma}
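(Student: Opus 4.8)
The plan is to feed the explicit formula of \Cref{new-formula-for-hultman-numbers} into the definition of $F(x)$ and then reuse, essentially verbatim, the telescoping computation already carried out in the proof of \Cref{corollary:new-proof-of-bonas-result}. Concretely, I would multiply the identity
$$
\hultmannumber{n}{k}=\frac{1}{n+1}\sum_{i=1}^{n+1}[h^{k}]\fallingfactorial{(h+n-i+1)}{n+1}
$$
by $x^{k}$ and sum over $k$ from $0$ to $n+1$. Since each polynomial $\fallingfactorial{(h+n-i+1)}{n+1}$ has degree exactly $n+1$, the range $0\le k\le n+1$ captures all of its coefficients; interchanging the two finite sums and reassembling the monomials turns the inner ``extract the coefficient of $h^k$, then weight by $x^k$'' operation back into evaluation of the polynomial at $h=x$. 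This yields
$$
F(x)=\frac{1}{n+1}\sum_{i=1}^{n+1}\fallingfactorial{(x+n-i+1)}{n+1}.
$$

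The second step is purely the computation performed in \Cref{sec:simpler-proofs}. Applying the key identity \Cref{idea}, namely
$$
\fallingfactorial{(x+n-i+1)}{n+1}=\frac{1}{n+2}\left(\risingfactorial{(x-i+1)}{n+2}-\risingfactorial{(x-i)}{n+2}\right),
$$
the sum over $i$ telescopes, leaving only the extreme terms, so that
$$
F(x)=\frac{1}{(n+1)(n+2)}\left(\risingfactorial{x}{n+2}-\risingfactorial{(x-n-1)}{n+2}\right).
$$
Finally I would observe that $\risingfactorial{(x-n-1)}{n+2}=(x-n-1)(x-n)\cdots x=\fallingfactorial{x}{n+2}$, and that $(n+1)(n+2)=2\binom{n+2}{2}$, which matches the claimed denominator and completes the proof.

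There is no genuine obstacle here: the statement is really just the generating-function repackaging of the identity that was already extracted coefficient-by-coefficient in \Cref{corollary:new-proof-of-bonas-result}. The only step requiring the slightest care is the first interchange, where one must confirm that weighting the coefficients of a degree-$(n+1)$ polynomial by $x^k$ over the full range $k=0,\dots,n+1$ recovers the polynomial itself; this is immediate from the degree bound. Everything afterwards is a direct reuse of the telescoping argument, so no new ideas are needed.
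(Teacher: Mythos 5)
Your proof is correct, but it does not follow the paper's route. The paper's own proof is a three-line computation starting from \citeauthor{bona-average}'s formula \eqref{eqn:hultman-numbers-bona-flynn}: it encodes the parity condition ``$n-k$ odd'' as the factor $\bigl(1-(-1)^{n+2-k}\bigr)/2$ multiplying $\firststirlingnumber{n+2}{k}$, extends the summation range to $k=n+2$ (the added term vanishes), and then reads off $\risingfactorial{x}{n+2}$ and $\fallingfactorial{x}{n+2}$ directly from the defining expansions \eqref{risingfactorial} and \eqref{fallingfactorial}. You instead start from \Cref{new-formula-for-hultman-numbers} and perform the telescoping of \eqref{idea} at the level of the whole generating function rather than coefficient by coefficient. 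Both arguments are sound and ultimately rest on the same telescoping identity --- the paper simply carries it out earlier, inside the proof of \Cref{corollary:new-proof-of-bonas-result} --- but yours has the mild advantage of not needing \eqref{eqn:hultman-numbers-bona-flynn} as an input (indeed, extracting coefficients of $h^k$ at the end of your computation reproves \Cref{corollary:new-proof-of-bonas-result} as a byproduct), whereas the paper's is shorter once that formula is granted. One small point you should make explicit: \Cref{new-formula-for-hultman-numbers} is stated only for $k\in\{1,\ldots,n+1\}$, and your reassembly of the polynomial from its coefficients requires the $k=0$ term as well. This is harmless, since $\hultmannumber{n}{0}=0$ and the constant term of each $\fallingfactorial{(h+n-i+1)}{n+1}$ vanishes (for every $1\le i\le n+1$ one of its factors is $h+0=h$), but it deserves a sentence rather than being passed over silently.
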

\begin{proof}
The derivation is straightforward:
\begin{eqnarray*}
\sum_{k=0}^{n+1}\hultmannumber{n}{k} x^k &=& \frac{1}{\binom{n+2}{2}}\sum_{k=0}^{n+1}\frac{\firststirlingnumber{n+2}{k}-(-1)^{n+2-k}\firststirlingnumber{n+2}{k}}{2}x^k  \quad\mbox{(by  \Cref{eqn:hultman-numbers-bona-flynn})}\\
 &=& \frac{1}{2\binom{n+2}{2}}\left(\sum_{k=0}^{n+2}\firststirlingnumber{n+2}{k}x^k-\sum_{k=0}^{n+2}(-1)^{n+2-k}\firststirlingnumber{n+2}{k}x^k\right) \\
 &=& \frac{\risingfactorial{x}{n+2}-\fallingfactorial{x}{n+2}}{2\binom{n+2}{2}}. \quad\quad\quad\quad\quad\quad\quad\mbox{(by \Cref{risingfactorial,fallingfactorial}})
\end{eqnarray*}
\end{proof}

Knowing the generating function allows us to easily derive the
expected value and the variance of the number of cycles in the
breakpoint graph of a uniform random unsigned permutation. For
this purpose, we first need to compute some derivatives of the
generating function.

\begin{lemma}\label{derivatives-F}
For all $n\in\mathbb{N}_0$, we have:
\begin{eqnarray*}
F(1) &= &n!,\\
F'(1) &=&\frac{1}{2\binom{n+2}{2}}\left\{(n+2)!H_{n+2}+(-1)^{n-1}
n!\right\},\\
F''(1)&=&\frac{1}{2\binom{n+2}{2}}\left\{(n+2)!\left(H_{n+2}^2-\sum_{k=1}^{n+2}\frac{1}{k^2}\right)+2(-1)^n
n!(H_n-1)\right\}.
\end{eqnarray*}
\end{lemma}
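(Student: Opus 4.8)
The plan is to take the closed form
$F(x)=\bigl(\risingfactorial{x}{n+2}-\fallingfactorial{x}{n+2}\bigr)/\bigl(2\binom{n+2}{2}\bigr)$
supplied by \Cref{generating-function-for-hultman-numbers} and simply differentiate it twice, evaluating at $x=1$. Writing $R(x)=\risingfactorial{x}{n+2}=\prod_{j=0}^{n+1}(x+j)$ and $G(x)=\fallingfactorial{x}{n+2}=\prod_{j=0}^{n+1}(x-j)$, we have $F=(R-G)/(2\binom{n+2}{2})$, so by linearity it is enough to compute the value and the first two derivatives of $R$ and of $G$ at $x=1$, and then to substitute.

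For the rising factorial the computation is clean because $R(1)=(n+2)!\neq 0$, so logarithmic differentiation applies directly. First I would record $R'(x)/R(x)=\sum_{j=0}^{n+1}1/(x+j)$, which at $x=1$ equals $\sum_{m=1}^{n+2}1/m=H_{n+2}$, giving $R'(1)=(n+2)!\,H_{n+2}$. For the second derivative I would invoke the standard identity for a product of linear factors, $R''(x)=R(x)\bigl[(\sum_j 1/(x+j))^2-\sum_j 1/(x+j)^2\bigr]$, which at $x=1$ yields $R''(1)=(n+2)!\,\bigl(H_{n+2}^2-\sum_{k=1}^{n+2}1/k^2\bigr)$.

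The only real subtlety lies with $G$, since the factor $(x-1)$ vanishes at $x=1$: indeed $G(1)=0$ (consistent with the known fact that $\fallingfactorial{1}{n+2}=0$), and naive logarithmic differentiation is invalid there. I would instead factor out the offending term, writing $G(x)=(x-1)\,H(x)$ with $H(x)=x\prod_{j=2}^{n+1}(x-j)$, so that $G'(1)=H(1)$ and $G''(1)=2H'(1)$. Evaluating the surviving product gives $H(1)=\prod_{j\neq 1}(1-j)=(-1)^{n}n!$, hence $G'(1)=(-1)^{n}n!=-(-1)^{n-1}n!$. Since $H$ does not vanish at $1$, logarithmic differentiation is now legitimate for it: $H'(1)=H(1)\sum_{j\neq 1}1/(1-j)=(-1)^{n}n!\,(1-H_n)$, because the inner sum telescopes to $1-H_n$; therefore $G''(1)=2(-1)^{n}n!\,(1-H_n)$.

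Substituting these six quantities into $F=(R-G)/(2\binom{n+2}{2})$ and simplifying then gives the three claimed formulas: $F(1)=(n+2)!/(2\binom{n+2}{2})=n!$ using $\binom{n+2}{2}=(n+1)(n+2)/2$; $F'(1)$ follows from $R'(1)-G'(1)=(n+2)!\,H_{n+2}+(-1)^{n-1}n!$; and $F''(1)$ follows from $R''(1)-G''(1)=(n+2)!\bigl(H_{n+2}^2-\sum_{k=1}^{n+2}1/k^2\bigr)+2(-1)^{n}n!\,(H_n-1)$. The main obstacle is entirely concentrated in correctly handling the vanishing linear factor of the falling factorial; once that term is pulled out, the remaining work is routine product-rule bookkeeping and sign tracking.
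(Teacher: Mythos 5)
Your proposal is correct and follows essentially the same route as the paper: the paper likewise factors $\fallingfactorial{x}{n+2}=(x-1)g(x)$ with $g(x)=x\prod_{i=2}^{n+1}(x-i)$ to isolate the vanishing linear factor, uses the same logarithmic-differentiation identities for the rising factorial and for $g$, and arrives at the same intermediate values $g(1)=(-1)^n n!$ and $g'(1)=(-1)^n n!(1-H_n)$. The only cosmetic differences are that the paper obtains $F(1)=n!$ combinatorially (as the total count of permutations) rather than from the closed form, and that your description of $\sum_{j}1/(1-j)$ as ``telescoping'' is a misnomer for a plain evaluation to $1-H_n$.
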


\begin{proof}
We obtain the three expressions separately.
\begin{enumerate}
 \item
For the first expression, note that, by definition,
$F(1)=\sum_{k=1}^{n+1} \hultmannumber{n}{k}$, which is simply the
total number of permutations of $n$ elements and therefore equals
$n!$.

\item We simplify the computation of $F'(x)$ by writing
$\fallingfactorial{x}{n+2}=(x-1)g(x)$, with
$$g(x)=x \prod_{i=2}^{n+1} (x-i).$$
With this notation we have
$$
F(x) = \frac{\risingfactorial{x}{n+2}-(x-1)g(x)}{2\binom{n+2}{2}}.
$$
We thus obtain
$$F'(x)=\frac{1}{2\binom{n+2}{2}}\left(\risingfactorial{x}{n+2}\sum_{i=0}^{n+1}\frac{1}{x+i}-g(x)-(x-1)g'(x)\right).$$
At $x=1$ we have $\risingfactorial{1}{n+2}=(n+2)!$ and
$g(1)=(-1)^n n!$, and hence the stated formula for $F'(1)$
follows.

\item Finally, the second derivative of $F$ is given by
\begin{equation*}
F''(x)=\frac{1}{2\binom{n+2}{2}}\left(\risingfactorial{x}{n+2}\!
\!\!\!\!\! \sum_{0\leq i \neq j \leq n+1} \! \frac{1}{(x+i)(x+j)}
-2g'(x)- (x-1)g''(x)\right).
\end{equation*}
The above sum evaluated at $x=1$ equals
\begin{align*}
\sum_{0\leq i \neq j \leq n+1} \frac{1}{(1+i)(1+j)} &= \sum_{i, j
= 0}^{n+1} \frac{1}{(1+i)(1+j)} - \sum_{i = 0}^{n+1} \frac{1}{(1+i)^2}\\
&= \left(\sum_{i=0}^{n+1} \frac{1}{1+i}\right)^2 - \sum_{i = 0}^{n+1} \frac{1}{(1+i)^2}\\
& = H_{n+2}^2-\sum_{k=1}^{n+2}\frac{1}{k^2}.
\end{align*}
We also have
$$g'(x) = g(x)\left(\frac{1}{x}+ \sum_{i=2}^{n+1} \frac{1}{x-i} \right),$$
and thus
$$
g'(1) = g(1)\left(1- \sum_{i=2}^{n+1} \frac{1}{i-1} \right)=
(-1)^n n! (1-H_n).
$$
Using these expressions in the formula for $F''(x)$ above,
evaluated at $x=1$, gives the formula in the statement.
\end{enumerate}
\end{proof}

The recovery of the expected value of the Hultman numbers,
previously obtained by
\citet{bona-average},
is now an easy
task.

\begin{theorem}\label{mean-of-unsigned-hultman-numbers}
 \cite{bona-average} For all $n\in\mathbb{N}_0$, the expected number of cycles in the breakpoint graph of a uniform random unsigned
 permutation $\pi$ of $n$ elements is
 $$\mathbb{E}(c(BG(\pi)))= H_n + \frac{1}{\left\lfloor (n+2)/2\right\rfloor}.$$
\end{theorem}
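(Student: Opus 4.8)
The plan is to obtain the expectation as the logarithmic derivative of the generating function $F$ evaluated at $x=1$. Since there are exactly $n!$ unsigned permutations of $n$ elements, the probability that a uniform random such permutation $\pi$ satisfies $c(BG(\pi))=k$ is $\hultmannumber{n}{k}/n!$, so that
$$
\mathbb{E}(c(BG(\pi))) = \sum_{k=0}^{n+1} k\,\frac{\hultmannumber{n}{k}}{n!} = \frac{F'(1)}{F(1)}.
$$
Both quantities on the right are furnished by \Cref{derivatives-F}, with $F(1)=n!$, so the entire computation reduces to dividing the stated expression for $F'(1)$ by $n!$ and simplifying.

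First I would substitute the value of $F'(1)$ and use the two identities $2\binom{n+2}{2}=(n+1)(n+2)$ and $(n+2)!=(n+1)(n+2)\,n!$. These let the factor $(n+2)!\,H_{n+2}$ collapse, so that after dividing by $2\binom{n+2}{2}\,n!=(n+1)(n+2)\,n!$ one obtains
$$
\mathbb{E}(c(BG(\pi))) = H_{n+2} + \frac{(-1)^{n-1}}{(n+1)(n+2)}.
$$
Next I would peel off the last two terms of the harmonic number, writing $H_{n+2}=H_n+\frac{1}{n+1}+\frac{1}{n+2}$, and combine the three remaining fractions over the common denominator $(n+1)(n+2)$, giving
$$
\mathbb{E}(c(BG(\pi))) = H_n + \frac{2n+3+(-1)^{n-1}}{(n+1)(n+2)}.
$$

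The only genuine remaining step is a parity case analysis to match the floor appearing in the statement. When $n$ is even, $(-1)^{n-1}=-1$ and the numerator becomes $2(n+1)$, so the fraction simplifies to $\frac{2}{n+2}=\frac{1}{(n+2)/2}=\frac{1}{\lfloor(n+2)/2\rfloor}$; when $n$ is odd, $(-1)^{n-1}=1$ and the numerator becomes $2(n+2)$, so the fraction simplifies to $\frac{2}{n+1}=\frac{1}{(n+1)/2}=\frac{1}{\lfloor(n+2)/2\rfloor}$, the last equality holding because $n+2$ is then odd. In both cases we recover $H_n+\frac{1}{\lfloor(n+2)/2\rfloor}$, which completes the proof. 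I expect no real obstacle here: once \Cref{derivatives-F} is in hand the argument is purely a simplification, and the only bookkeeping is the parity split needed to rewrite $\frac{(-1)^{n-1}}{(n+1)(n+2)}$ as a single floor term.
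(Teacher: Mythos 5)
Your proposal is correct and follows exactly the paper's route: compute $F'(1)/F(1)$ from \Cref{derivatives-F} to get $H_{n+2}+\frac{(-1)^{n-1}}{(n+1)(n+2)}$, then simplify to the floor expression. The only difference is that you spell out the final parity-split simplification, which the paper leaves as ``easily seen''; your algebra there checks out.
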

\begin{proof}
As is well-known (see e.g.\ \citet{wilf-generatingfunctionology}),
the expected value
can be
obtained from the generating function
$F(x)$
by
the formula $F'(1)/F(1)$.
Using the formulas for $F(1)$ and $F'(1)$
obtained
in
\Cref{derivatives-F}, we obtain that the expected value of the
Hultman numbers equals
$$
 \frac{F'(1)}{F(1)} = H_{n+2}+\frac{(-1)^{n-1}}{(n+1)(n+2)},
$$
which is easily seen to be equivalent to the expression in the
statement.
\end{proof}

Furthermore,
knowing
the generating function
also
allows us to
compute the variance of the Hultman numbers. We prove the
following result.

\begin{theorem}\label{variance-of-unsigned-hultman-numbers}
For all $n\in\mathbb{N}_0$, the variance of the number of cycles
in the breakpoint graph of a uniform random unsigned permutation
$\pi$ of $n$ elements is
$$\mathrm{Var}(c(BG(\pi))) = H_{n+2}  -
\sum_{k=1}^{n+2}\frac{1}{k^2} +
\frac{(-1)^n(2H_{n+2}+2H_n-3)}{(n+1)(n+2)}-
\frac{1}{((n+1)(n+2))^2}.$$
\end{theorem}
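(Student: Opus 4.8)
The plan is to derive the variance from the generating function $F(x)=\sum_{k=0}^{n+1}\hultmannumber{n}{k}x^k$ in exactly the same spirit as the expectation was obtained in \Cref{mean-of-unsigned-hultman-numbers}. Writing $X=c(BG(\pi))$ for the number of cycles of a uniform random unsigned permutation, we have $\mathbb{P}(X=k)=\hultmannumber{n}{k}/F(1)$, so the first factorial moment is $\mathbb{E}(X)=F'(1)/F(1)$ and the second factorial moment is $\mathbb{E}(X(X-1))=F''(1)/F(1)$. The variance is then recovered through the identity $\mathrm{Var}(X)=\mathbb{E}(X(X-1))+\mathbb{E}(X)-\mathbb{E}(X)^2$, that is,
$$\mathrm{Var}(X)=\frac{F''(1)}{F(1)}+\frac{F'(1)}{F(1)}-\left(\frac{F'(1)}{F(1)}\right)^2,$$
so the whole computation reduces to inserting the three values $F(1)$, $F'(1)$ and $F''(1)$ already established in \Cref{derivatives-F}.

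First I would simplify $F''(1)/F(1)$. Since $F(1)=n!$ and $(n+2)!/n!=(n+1)(n+2)=2\binom{n+2}{2}$, the factor $2\binom{n+2}{2}$ in the denominator of $F''(1)$ cancels against the $(n+2)!$ appearing in its leading term, leaving
$$\frac{F''(1)}{F(1)}=H_{n+2}^2-\sum_{k=1}^{n+2}\frac{1}{k^2}+\frac{2(-1)^n(H_n-1)}{(n+1)(n+2)}.$$
For the remaining two summands I would reuse the already simplified mean $\mathbb{E}(X)=F'(1)/F(1)=H_{n+2}+\frac{(-1)^{n-1}}{(n+1)(n+2)}$ from \Cref{mean-of-unsigned-hultman-numbers} and square it; since $\bigl((-1)^{n-1}\bigr)^2=1$, the squared term contributes a clean $\frac{1}{((n+1)(n+2))^2}$, while the cross term contributes a multiple of $(-1)^{n-1}/((n+1)(n+2))$.

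The final step is pure bookkeeping: on adding $F''(1)/F(1)$, $F'(1)/F(1)$ and $-(F'(1)/F(1))^2$, the two copies of $H_{n+2}^2$ cancel, the term $-\sum_{k=1}^{n+2}1/k^2$ survives, and a single $H_{n+2}$ survives from the mean. After rewriting $(-1)^{n-1}=-(-1)^n$, every remaining contribution carries a factor $(-1)^n/((n+1)(n+2))$, and collecting the bracketed parts $2(H_n-1)-1+2H_{n+2}$ yields exactly $2H_{n+2}+2H_n-3$, leaving the leftover $-1/((n+1)(n+2))^2$ from the squared mean. This reproduces the stated formula. I do not expect any genuine obstacle; the only point requiring care is tracking the signs of the $(-1)^n$ contributions when combining $F'(1)/F(1)$, its square, and the cross term, since a single sign slip in the coefficient of $(-1)^n/((n+1)(n+2))$ would corrupt the $2H_{n+2}+2H_n-3$ factor.
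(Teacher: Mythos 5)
Your proposal is correct and follows exactly the paper's route: the paper likewise computes $\mathrm{Var}(c(BG(\pi)))$ as $F'(1)/F(1)+F''(1)/F(1)-(F'(1)/F(1))^2$ using the values of $F(1)$, $F'(1)$ and $F''(1)$ from \Cref{derivatives-F}, and the simplifications you describe (cancellation of $H_{n+2}^2$, the $(-1)^n$ bookkeeping yielding $2H_{n+2}+2H_n-3$, and the leftover $-1/((n+1)(n+2))^2$) all check out.
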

\begin{proof}
The variance
can be obtained from the generating function $F(x)$ by
the following formula (see e.g.\
\citet{wilf-generatingfunctionology}):
$$(\log F)'(1)+(\log F)''(1) = \frac{F'(1)}{F(1)} + \frac{F''(1)}{F(1)}-
 \left(\frac{F'(1)}{F(1)}\right)^2.$$

Using the formulas for $F(1)$, $F'(1)$ and $F''(1)$ obtained in
\Cref{derivatives-F}, we obtain that the variance of the Hultman
numbers equals
\begin{eqnarray*}
&&\frac{F'(1)}{F(1)} + \frac{F''(1)}{F(1)} -
 \left(\frac{F'(1)}{F(1)}\right)^2\\
&=& H_{n+2} + \frac{(-1)^{n-1}}{(n+1)(n+2)} + H_{n+2}^2 -
\sum_{k=1}^{n+2}\frac{1}{k^2} +
\frac{2(-1)^n(H_n-1)}{(n+1)(n+2)}\\
&& - \left(H_{n+2} + \frac{(-1)^{n-1}}{(n+1)(n+2)}\right)^2\\
&=& H_{n+2}  - \sum_{k=1}^{n+2}\frac{1}{k^2} +
\frac{(-1)^n(2H_{n+2}+2H_n-3)}{(n+1)(n+2)}-
\frac{1}{((n+1)(n+2))^2}.
\end{eqnarray*}
\end{proof}

It is interesting to see how the mean and variance behave for
large $n$.
\begin{remark} The expected value and variance of the number of cycles in the breakpoint
graph of a uniform random unsigned permutation $\pi$ in $S_n$ have the
following asymptotical behaviour when $n \to \infty$:
\begin{eqnarray*}
\mathbb{E}(c(BG(\pi))) &=& \log(n) + \gamma + o(1) ,\\
\mathrm{Var}(c(BG(\pi))) & =& \log(n) + \gamma - \frac{\pi^2}{6} +
o(1).
\end{eqnarray*}
\end{remark}

\begin{proof}
For the expected value, the result simply follows from the fact
that $\mathbb{E}(c(BG(\pi)))= H_n + o(1)$ and $H_n = \log(n) +
\gamma + o(1)$.

For the variance, first note that $\mathrm{Var}(c(BG(\pi))) =
H_{n+2}  - \sum_{k=1}^{n+2}\frac{1}{k^2} + o(1).$\\ By further
using the fact that $\log(n+2) = \log(n) + o(1)$ and the
well-known result $\sum_{k=1}^{\infty}\frac{1}{k^2}=
\frac{\pi^2}{6}$, the stated asymptotic formula follows.
\end{proof}

Interestingly, we recover exactly the same asymptotical behaviour
as for the number of cycles in the usual disjoint cycle
decomposition (recall Equations~\eqref{eqn:mean-cycles-asymptotic} and~\eqref{eqn:var-cycles-asymptotic}).

\subsection{The signed case}

We now turn to the problem of computing the expected value and the
variance of the signed Hultman numbers. As in the unsigned case,
we start with the computation of the generating function for the
signed Hultman numbers.
\begin{lemma}\label{gen_fct}
We have
$$G(x)=\sum_{k=1}^{n+1} S_{H}^{\pm}(n,k)x^k=\sum_{\lambda}c_{\lambda}(2)F_{\lambda}(x)F'_{\lambda}(0),
$$
where $\lambda$ is subject to the same restrictions as in
\Cref{formula-for-signed-hultman-numbers} page~\pageref{formula-for-signed-hultman-numbers} and $F_\lambda$ is
defined as in \Cref{F_lambda} page~\pageref{F_lambda}.
\end{lemma}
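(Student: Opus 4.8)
The plan is to start from \Cref{form} in the proof of \Cref{formula-for-signed-hultman-numbers}, which expresses each signed Hultman number as $\signedhultmannumber{n}{k}=\sum_{\lambda}c_{\lambda}(2)\,[h^{k}]F_{\lambda}(h)\,[\ell]F_{\lambda}(\ell)$, the sum ranging over the admissible partitions $\lambda$. Multiplying both sides by $x^{k}$ and summing over $k\in\{1,2,\ldots,n+1\}$, then interchanging the two finite summations (which is trivially legitimate), I would obtain $G(x)=\sum_{\lambda}c_{\lambda}(2)\,[\ell]F_{\lambda}(\ell)\sum_{k=1}^{n+1}x^{k}[h^{k}]F_{\lambda}(h)$. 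The whole argument then reduces to identifying the two $\lambda$-dependent factors with $F'_{\lambda}(0)$ and $F_{\lambda}(x)$ respectively.

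The factor $[\ell]F_{\lambda}(\ell)$ is immediately $F'_{\lambda}(0)$, since the coefficient of the linear monomial of any polynomial equals the value of its derivative at the origin. The inner sum $\sum_{k=1}^{n+1}x^{k}[h^{k}]F_{\lambda}(h)$ is where the only real work lies: I claim it equals $F_{\lambda}(x)$. For this I first check that $F_{\lambda}$, with the modification $n\mapsto n+1$ prescribed in \Cref{formula-for-signed-hultman-numbers}, is a polynomial of degree exactly $n+1$; inspecting \Cref{F_lambda}, the first falling factorial contributes degree $a-b$ and the second degree $(n+1)-a+b$, which sum to $n+1$, so truncating the coefficient sum at $k=n+1$ loses nothing. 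I then check that the constant term $F_{\lambda}(0)$ vanishes, which is precisely what lets the sum start at $k=1$ rather than at $k=0$; granted this, the inner sum is exactly the expansion of $F_{\lambda}(x)$ about $0$, i.e.\ $F_{\lambda}(x)$ itself.

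Verifying $F_{\lambda}(0)=0$ is the main (albeit mild) obstacle, and it splits along the two admissible shapes of $\lambda$. For $\lambda=(a,b,1^{n-a-b+1})$ with $a\geq b\geq 1$, the second factor $\fallingfactorial{(x+2b-2)}{(n+1)-a+b}$ runs through the values $x+2b-2,\,x+2b-3,\,\ldots,\,x+a+b-n-2$; since $b\geq 1$ makes the top constant part nonnegative while the partition constraint $a+b\leq n+1$ makes the bottom constant part at most $-1$, the consecutive constant parts must pass through $0$, so one factor is exactly $x$ and hence $F_{\lambda}(0)=0$. For the single partition $\lambda=(n+1)$ (that is $a=n+1$, $b=0$), \Cref{F_lambda} collapses to $F_{(n+1)}(x)=2^{n+1}\fallingfactorial{(x/2+n)}{n+1}$, whose last factor is $x/2$, again giving $F_{(n+1)}(0)=0$. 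With the constant term gone and the degree bounded by $n+1$, the inner sum reconstructs $F_{\lambda}(x)$ and the stated formula follows at once. Since these coefficient checks mirror the extraction already performed in the derivation of \Cref{coeff}, no genuinely new computation is needed.
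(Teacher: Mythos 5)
Your proof is correct, but it inverts the order of operations relative to the paper, and that inversion is what forces the one piece of extra work you identify. The paper never touches the coefficient formula~\eqref{form}: it goes back to \Cref{rem}, observes that $G(x)$ is by definition the coefficient of $\ell$ in the two-variable polynomial $Q_{n+1}^{\mathbb{R}}(x,\ell)$, writes this as $\left.\frac{\partial}{\partial \ell}Q_{n+1}^{\mathbb{R}}(x,\ell)\right|_{\ell=0}$, and reads the result directly off the expansion~\eqref{2}; since $[\ell]\bigl(F_{\lambda}(x)F_{\lambda}(\ell)\bigr)=F_{\lambda}(x)F'_{\lambda}(0)$ term by term, nothing further is needed. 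You instead start from the already coefficient-extracted identity~\eqref{form}, multiply by $x^k$ and re-sum, which obliges you to reassemble $F_{\lambda}(x)$ from its coefficients $[h^k]F_{\lambda}(h)$ for $k=1,\ldots,n+1$ only; this is legitimate precisely because $\deg F_{\lambda}=n+1$ and $F_{\lambda}(0)=0$, and your verification of both facts (the constant parts of $\fallingfactorial{(x+2b-2)}{n+1-a+b}$ sweep through $0$ when $a\geq b\geq 1$ and $a+b\leq n+1$; the factor $x/2$ appears when $\lambda=(n+1)$) is sound and indeed already implicit in the computations surrounding \Cref{coeff}. The paper's route is shorter because extracting the coefficient of $\ell$ before expanding in $h$ sidesteps the vanishing-constant-term issue entirely; yours has the minor virtue of making explicit the structural facts about $F_{\lambda}$ that the generating-function manipulation silently exploits.
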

\begin{proof}
Recall (\Cref{rem} page~\pageref{rem}) that $S_{H}^{\pm}(n,k)$ is
the coefficient of the monomial $h^k\ell$ in the polynomial
$Q_{n+1}^{\mathbb{R}}(h,\ell)$. If we take now $h=x$ and consider
$Q_{n+1}^{\mathbb{R}}(x,\ell)$ as a polynomial only in the
variable $\ell$, we note that the coefficient of the monomial
$\ell$ is obtained by summing up all the terms
$S_{H}^{\pm}(n,k)x^k$, for $k =1,\ldots,n+1$. Therefore, $G(x)$
equals the coefficient of $\ell$ in
$Q_{n+1}^{\mathbb{R}}(x,\ell)$, and hence
\[G(x)=\left.\frac{\partial}{\partial \ell}
Q_{n+1}^{\mathbb{R}}(x,\ell)\right|_{\ell=0}.\]
The formula in the statement easily follows from \Cref{2} page~\pageref{2}.
\end{proof}

In order to compute the expected value and the variance of the
signed Hultman numbers, we will need the following preliminary
lemma.

\begin{lemma}\label{derivatives-F-lambda}
Let $n \geq 1$ and $\lambda$ a partition of $n+1$ of the form
$(a,b,1^{n-a-b+1})$.
\begin{enumerate}
 \item In the case where $a\geq b\geq1$, we have:
\begin{eqnarray*}
F_{\lambda}'(0) &=& \frac{(-1)^{n-a-b}2^{a-b}(a-1)!(2b-2)!(n-a-b+2)!}{(b-1)!},\\
F_{\lambda}'(1) &=& \frac{(-1)^{n-a-b+1}(2a-1)!(b-1)!(n-a-b+1)!}{2^{a-b}(a-1)!},\\
F_{\lambda}''(1) &=&F'_{\lambda}(1)\left\{2H_{2a-1}-2H_{n-a-b+1}-H_{a-1}+H_{b-1}\right\}.
\end{eqnarray*}
\item In the case where $\lambda=(n+1)$, we have:
\begin{eqnarray*}
F_{(n+1)}'(0) &=& 2^nn!,\\
F_{(n+1)}'(1) &=& \frac{(2n+1)!}{2^nn!}(H_{2n+1}-H_{n}/2),\\
F_{(n+1)}''(1) &=&
\frac{(2n+1)!}{2^nn!}\left\{\left(H_{2n+1}-\frac{H_{n}}{2}\right)^2-\sum_{k=0}^n
\frac{1}{(2k+1)^2}\right\}.
\end{eqnarray*}
\end{enumerate}
\end{lemma}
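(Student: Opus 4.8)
The plan is to exploit the fact that, by \eqref{F_lambda} (with $n$ replaced by $n+1$ as prescribed in \Cref{formula-for-signed-hultman-numbers}), each $F_\lambda$ is a product of linear factors in $x$, so that all three quantities can be read off by logarithmic differentiation once the order of vanishing at the evaluation point is understood. The values $F_\lambda'(0)$ require no new work: since $F_\lambda'(0)=[\ell]F_\lambda(\ell)$ is just the coefficient of the linear term of any polynomial, the first formula in each case is exactly what was already computed as \eqref{coeff} (respectively the value $2^n n!$) in the proof of \Cref{formula-for-signed-hultman-numbers}, and it only remains to rewrite \eqref{coeff} by cancelling $\tfrac{(2b)!}{(2b-1)\,b!}=\tfrac{2(2b-2)!}{(b-1)!}$ to match the stated closed form.

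The crux is the case $a\ge b\ge 1$ at $x=1$. Expanding $\fallingfactorial{(x+2b-2)}{n-a+b+1}$ one sees it contains both the factor $x$ and the factor $(x-1)$, so $F_\lambda(1)=0$ with a simple zero; consequently the naive identity $F_\lambda'(1)=F_\lambda(1)(\log F_\lambda)'(1)$ degenerates to $0$ and is useless. This is the main obstacle, and I would handle it by peeling off the vanishing factor: set $F_\lambda(x)=(x-1)H(x)$, whence $F_\lambda'(1)=H(1)$ and $F_\lambda''(1)=2H'(1)=2H(1)(\log H)'(1)$, the last equality being legitimate since $H(1)\ne 0$. The value $H(1)$ is then a plain product of the surviving linear factors at $x=1$: the prefactor $2^{a-b}$, the value $\prod_{m=b}^{a-1}\tfrac{2m+1}{2}$ of $\fallingfactorial{(x/2+a-1)}{a-b}$, the value $(2b-1)!$ of the ascending part of the second falling factorial, the isolated factor $x=1$, and the value $(-1)^{n-a-b+1}(n-a-b+1)!$ of its truncated descending part; collapsing the odd product via $\prod_{m=b}^{a-1}(2m+1)=\tfrac{(2a)!\,b!}{2^{a-b}a!\,(2b)!}$ and simplifying yields exactly the stated $F_\lambda'(1)$.

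For $F_\lambda''(1)$ I would compute $(\log H)'(1)$ as the sum of the reciprocals of those same factors at $x=1$: the factors $(x/2+m)$ contribute $\sum_{m=b}^{a-1}\tfrac{1}{2m+1}$, the ascending part contributes $H_{2b-1}-1$, the isolated $x$ contributes $1$, and the descending part contributes $-H_{n-a-b+1}$. The only nonroutine identity needed is the conversion of the odd-reciprocal sum, namely $\sum_{m=0}^{N-1}\tfrac{1}{2m+1}=H_{2N-1}-\tfrac12 H_{N-1}$, which telescopes the range $m=b,\dots,a-1$ into $(H_{2a-1}-\tfrac12H_{a-1})-(H_{2b-1}-\tfrac12H_{b-1})$; adding the remaining pieces makes the $H_{2b-1}$ terms cancel and leaves $(\log H)'(1)=H_{2a-1}-\tfrac12 H_{a-1}+\tfrac12 H_{b-1}-H_{n-a-b+1}$, so that $F_\lambda''(1)=2F_\lambda'(1)(\log H)'(1)$ is precisely the bracketed expression in the statement.

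The case $\lambda=(n+1)$ is easier because the obstruction above disappears. Here $b=0$ and the second falling factorial is empty, so $F_{(n+1)}(x)=2^{n+1}\fallingfactorial{(x/2+n)}{n+1}=\prod_{k=0}^{n}(x+2k)$, whose only root among $\{0,1\}$ is $0$; thus $F_{(n+1)}(1)=\prod_{k=0}^n(2k+1)=\tfrac{(2n+1)!}{2^n n!}\ne 0$ and I can differentiate logarithmically at $1$ directly. From $(\log F_{(n+1)})'(x)=\sum_{k=0}^n\tfrac{1}{x+2k}$ together with the standard relation $F''/F=((\log F)')^2+(\log F)''$, one obtains $F_{(n+1)}'(1)=F_{(n+1)}(1)\sum_{k=0}^n\tfrac{1}{2k+1}$ and $F_{(n+1)}''(1)=F_{(n+1)}(1)\big[\big(\sum_{k=0}^n\tfrac{1}{2k+1}\big)^2-\sum_{k=0}^n\tfrac{1}{(2k+1)^2}\big]$; applying the same odd-reciprocal identity to rewrite $\sum_{k=0}^n\tfrac{1}{2k+1}=H_{2n+1}-\tfrac12 H_n$ gives exactly the three formulas claimed for $\lambda=(n+1)$.
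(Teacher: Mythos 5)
Your proposal is correct and follows essentially the same route as the paper: both peel off the simple zero at $x=1$ by writing $F_\lambda(x)=(x-1)g(x)$ (so that $F_\lambda'(1)=g(1)$ and $F_\lambda''(1)=2g'(1)$), evaluate the surviving linear factors, and use logarithmic differentiation plus the odd-harmonic identity $\sum_{m=0}^{N-1}\tfrac{1}{2m+1}=H_{2N-1}-\tfrac12H_{N-1}$; the $\lambda=(n+1)$ case is likewise handled by direct differentiation of $\prod_{k=0}^n(x+2k)$. The only cosmetic difference is that you obtain $F_\lambda'(0)$ by reusing the coefficient computation \eqref{coeff} from \Cref{formula-for-signed-hultman-numbers}, whereas the paper recomputes it as $-h_\lambda(0)$ from the factorisation $F_\lambda(x)=x(x-1)h_\lambda(x)$; both give the same value.
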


\begin{proof}
We handle both cases separately. 
\begin{enumerate}
 \item Let us first examine the case where $\lambda=(a,b,1^{n+1-a-b})$ and $a \geq b \geq 1$.
In order to simplify the proof, we write $F_{\lambda}(x)=x(x-1)h_{\lambda}(x)$, where
$h_{\lambda}(x)$ is obtained and defined as follows:
\begin{eqnarray*}
 F_{\lambda}(x)&=&2^{a-b}\fallingfactorial{(x/2+a-1)}{a-b}\fallingfactorial{(x+2b-2)}{n+1-a+b}\quad\mbox{(see definition\footnotemark\ page~\pageref{F_lambda})}\\
 &=&2^{a-b}\fallingfactorial{(x/2+a-1)}{a-b}(x+2b-2)(x+2b-1)\cdots (x+1)x (x-1)\\
&&\times (x-2)(x-3)\cdots (x-2+b-n+a)\\
 &=&x(x-1)\underbrace{2^{a-b}\fallingfactorial{(x/2+a-1)}{a-b}\fallingfactorial{(x+2b-2)}{2b-2}\fallingfactorial{(x-2)}{n-a-b+1}}_{=h_{\lambda}(x)}.
\end{eqnarray*}
\footnotetext{Recall, as explained in the statement of \Cref{formula-for-signed-hultman-numbers} page~\pageref{formula-for-signed-hultman-numbers}, that we must replace $n$ with $n+1$.}
\begin{enumerate}
 \item Using the above notation, we have
$$F_{\lambda}'(0) =-h_{\lambda}(0)=
(-1)2^{a-b}(a-1)^{\underline{a-b}}(2b-2)!(-2)^{\underline{n-a-b+1}},$$
from which we easily obtain the wanted expression.

\item We also have
\begin{eqnarray*}
F_{\lambda}'(1) =h_{\lambda}(1)&=& 2^{a-b}\left(a-1/2\right)^{\underline{a-b}}(2b-1)^{\underline{2b-2}}(-1)^{\underline{n-a-b+1}}\\
&=& 2^{a-b}\left(a-1/2\right)^{\underline{a-b}}(2b)!(-1)^{\underline{n-a-b+1}},
\end{eqnarray*}
and obtaining the formula for $F_{\lambda}'(1)$ given in the statement is a simple matter, using the fact that
\begin{eqnarray*}
 \left(a-1/2\right)^{\underline{a-b}}&=&\frac{(2a-1)(2a-3)\cdots (2b+1)}{2^{a-b}}\\
&=&\frac{1}{2^{a-b}}\frac{(2a-1)!}{(a-1)!2^{a-1}}\frac{(b-1)!2^{b-1}}{(2b-1)!}\\
&=&\frac{(2a-1)!b!}{2^{a-b-1}(a-1)!2^{a-b}(2b)!}.
\end{eqnarray*}

\item
In order to simplify the computation of the second derivative, we
will write $F_{\lambda}(x)=(x-1)g_{\lambda}(x)$, where
$$g_{\lambda}(x)=\underbrace{2^{a-b}\left(x/2+a-1
\right)^{\underline{a-b}}}_{=\alpha_{\lambda}(x)}\underbrace{(x+2b-2)^{\underline{2b-1}}}_{=\beta_{\lambda}(x)}\underbrace{(x-2)^{\underline{n-a-b+1}}}_{=\gamma_{\lambda}(x)}.$$
With this notations,
it is easy to see
that
$F_{\lambda}''(1)=2g'_{\lambda}(1)$, with
$$g'_{\lambda}(1)=\alpha'_{\lambda}(1)\beta_{\lambda}(1)\gamma_{\lambda}(1)+\alpha_{\lambda}(1)\beta'_{\lambda}(1)\gamma_{\lambda}(1)+\alpha_{\lambda}(1)\beta_{\lambda}(1)\gamma'_{\lambda}(1).$$
Note that
\begin{eqnarray*}
\alpha'_{\lambda}(1)
&=&\alpha_{\lambda}(1)\left(\frac{1}{2a-1}+\frac{1}{2a-3}+\cdots +
\frac{1}{2b+1}\right)\\
&=&\alpha_{\lambda}(1)\{H_{2a-1}-H_{2b}-(H_{a-1}-H_b)/2\},\\
\beta'_{\lambda}(1) &=&\beta_{\lambda}(1)\sum_{k=1}^{2b-1}
\frac{1}{k}= \beta_{\lambda}(1)H_{2b-1},\\
\gamma'_{\lambda}(1) &=& -\gamma_{\lambda}(1)\sum_{k=1}^{n-a-b+1}
\frac{1}{k} = -\gamma_{\lambda}(1)H_{n-a-b+1},
\end{eqnarray*}
and
\begin{eqnarray*}
\alpha_{\lambda}(1) &=&\frac{(2a-1)!b!}{(2b)!2^{a-b-1}(a-1)!},\\
\beta_{\lambda}(1) &=& (2b-1)!,\\
\gamma_{\lambda}(1) &=& (-1)^{n-a-b+1}(n-a-b+1)!.
\end{eqnarray*}
Combining all of the above, we obtain:
\begin{eqnarray*}
g'_{\lambda}(1)
&=&\alpha_{\lambda}(1)\beta_{\lambda}(1)\gamma_{\lambda}(1)\\
&&\times\{H_{2a-1}-H_{2b}-(H_{a-1}-H_b)/2
+H_{2b-1}-H_{n-a-b+1}\}\\
&=&\frac{(-1)^{n-a-b+1}(2a-1)!(b-1)!(n-a-b+1)!}{2^{a-b}(a-1)!}\\
& &\times \{H_{2a-1}-H_{n-a-b+1}-(H_{a-1}-H_{b-1})/2\}
\end{eqnarray*}
and we finally deduce the formula in the statement.
\end{enumerate}

\item We now turn to the case where $\lambda=(n+1)$, i.e.\ $a=n+1$ and $b=0$.
\begin{enumerate}
\item
Following the definition\footnote{Again, we replace $n$ with $n+1$
in the definition.} of $F_{\lambda}(x)$ given on
page~\pageref{F_lambda}, we have
$$F_{(n+1)}(x)=2^{n+1}\left(x/2+n\right)^{\underline{n+1}}=x \prod_{k=1}^n (x+2k).$$
We thus obtain
$$F_{(n+1)}'(x) = \prod_{k=1}^n (x+2k) + F_{(n+1)}(x) \sum_{k=1}^n \frac{1}{x+2k},$$
which easily gives the wanted expressions when evaluated at $x=0$
and $x=1$.

\item For the second derivative, we obtain
$$F_{(n+1)}''(x) = F_{(n+1)}(x) \sum_{0 \leq i \neq j \leq n}
\frac{1}{(x+2i)(x+2j)},$$ hence
$$F_{(n+1)}''(1) =
\frac{(2n+1)!}{2^nn!}\left\{\left(\sum_{k=0}^n\frac{1}{2k+1}\right)^2-\sum_{k=0}^{n}\frac{1}{(2k+1)^2}\right\},$$
and the formula in the statement follows.
\end{enumerate}
\end{enumerate}
\end{proof}

Knowing
the generating function $G$, we can
easily obtain the expected value of the number of cycles in the
breakpoint graph of a random signed permutation of $n$ elements.

\begin{theorem}\label{mean_signed}
The expected value of the number of cycles in the breakpoint
graph of a uniform random signed permutation $\pi^\pm$ of $n$ elements
is
$$
\mathbb{E}(c(BG(\pi^\pm)))=
H_{2n+1}-\frac{H_{n}}{2} - \sum_{(a,b) \in \mathcal{A}_n}\!\! r_n(a,b),$$
where
$\mathcal{A}_n
=
\{(a,b) \in \mathbb{N}^2: \ a
\geq b \geq 1, \ a+b \leq n+1\}$ and
$$r_n(a,b)
= \frac{(-1)^{n+a-b}(n+1)(2a-2b+1)(a-1)!(2b-2)!(n-a-b+2)!}{2^{n-a+b-1}n!(b-1)!(n+a-b+2)^{\underline{2}}
(n-a+b+1)^{\underline{2}} }.
$$
\end{theorem}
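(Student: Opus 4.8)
The plan is to apply the standard generating-function identity $\mathbb{E}(c(BG(\pi^\pm))) = G'(1)/G(1)$, where $G$ is the generating function of the signed Hultman numbers obtained in \Cref{gen_fct}. First I would record the normalising constant: since $G(1) = \sum_{k} \signedhultmannumber{n}{k}$ counts every signed permutation of $n$ elements exactly once, we have $G(1) = 2^n n!$. It then remains to evaluate $G'(1)$. Differentiating the expression of \Cref{gen_fct} termwise gives
$$G'(1) = \sum_{\lambda} c_{\lambda}(2)\, F'_{\lambda}(1)\, F'_{\lambda}(0),$$
where $\lambda$ ranges over the partitions of $n+1$ described in \Cref{formula-for-signed-hultman-numbers}; every ingredient on the right is then supplied either by \Cref{c_lambda1,c_lambda2} or by the preliminary \Cref{derivatives-F-lambda}.

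The next step is to isolate the contribution of the single partition $\lambda=(n+1)$. Substituting $c_{(n+1)}(2)$ from \Cref{c_lambda2} (with $n$ replaced by $n+1$) together with the values $F'_{(n+1)}(0) = 2^n n!$ and $F'_{(n+1)}(1) = \frac{(2n+1)!}{2^n n!}(H_{2n+1} - H_n/2)$ from \Cref{derivatives-F-lambda}, the factorials collapse and this single term equals exactly $2^n n!\,(H_{2n+1} - H_n/2)$. After division by $G(1)$ this reproduces the leading part $H_{2n+1} - H_n/2$ of the claimed formula, so the remaining summand $-\sum r_n(a,b)$ must come entirely from the generic partitions.

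For the generic partitions $\lambda = (a,b,1^{\,n-a-b+1})$ with $a \geq b \geq 1$, I would compute the product $c_{\lambda}(2)\,F'_{\lambda}(0)\,F'_{\lambda}(1)$ by inserting \Cref{c_lambda1} and the two first-case formulas of \Cref{derivatives-F-lambda}. The key simplification is that $F'_{\lambda}(0)\,F'_{\lambda}(1)$ telescopes: the powers of two, the factor $(a-1)!$ and the factor $(b-1)!$ cancel between the two derivatives, leaving $-(2a-1)!\,(2b-2)!\,(n-a-b+1)!\,(n-a-b+2)!$. Multiplying by $c_{\lambda}(2)$ then cancels the remaining $(2a-1)!$, $(b-1)!$ and $(n-a-b+1)!$, and a short sign bookkeeping (using $(-1)^{\,n+a-b} = (-1)^{\,n-a-b}$, together with the factor $(-1)^{\,n+a-b}$ that $c_\lambda(2)$ carries after the shift $n \mapsto n+1$) shows that each term equals exactly $-2^n n!\, r_n(a,b)$. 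The requirement that $(a,b,1^{\,n-a-b+1})$ be a genuine partition forces $n-a-b+1 \geq 0$, i.e.\ $a+b \leq n+1$, which is precisely the index set $\mathcal{A}_n$.

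Collecting both contributions gives $G'(1) = 2^n n!\bigl(H_{2n+1} - H_n/2 - \sum_{(a,b)\in\mathcal{A}_n} r_n(a,b)\bigr)$, and dividing by $G(1) = 2^n n!$ yields the statement. I expect the main obstacle to be purely computational: matching the long products of falling factorials and factorials appearing in $c_{\lambda}(2)$, $F'_{\lambda}(0)$ and $F'_{\lambda}(1)$, and tracking the two sign factors carefully, so that the product reduces cleanly to $-2^n n!\,r_n(a,b)$ with no leftover terms. No conceptual difficulty arises beyond this bookkeeping, since \Cref{gen_fct,derivatives-F-lambda} already package every quantity we need.
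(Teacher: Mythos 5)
Your proposal is correct and follows essentially the same route as the paper: compute $\mathbb{E}(c(BG(\pi^\pm)))=G'(1)/G(1)$ with $G(1)=2^nn!$, expand $G'(1)=\sum_\lambda c_\lambda(2)F'_\lambda(1)F'_\lambda(0)$ via \Cref{gen_fct}, and substitute the values from \Cref{derivatives-F-lambda} and \Cref{c_lambda1,c_lambda2}. The only difference is that you spell out the cancellation showing the $\lambda=(n+1)$ term yields $H_{2n+1}-H_n/2$ and each generic partition yields $-r_n(a,b)$, a computation the paper leaves implicit; your bookkeeping checks out.
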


\begin{proof}
As recalled in the proof of
\Cref{mean-of-unsigned-hultman-numbers}, we have $\mathbb{E}(c(BG(\pi^\pm))) =
G'(1)/G(1)$. Note that, by definition,
$G(1)=\sum_{k=1}^{n+1} S_{H}^{\pm}(n,k)$,
which equals the
number of signed permutations of $n$ elements, i.e.\ $2^nn!$. By
\Cref{gen_fct}, the expected number of cycles in the breakpoint
graph of a random signed permutation is
\begin{equation*}
\mathbb{E}(c(BG(\pi^\pm)))=\frac{1}{2^nn!}\sum_{\lambda}
c_{\lambda}(2)F'_{\lambda}(1)F'_{\lambda}(0).
\end{equation*}

Using the formulas for $F'_{\lambda}(1)$ and $F'_{\lambda}(0)$
derived in \Cref{derivatives-F-lambda} and the expression for the
coefficients\footnote{Again, we replace $n$ with $n+1$
in the definitions.} $c_{\lambda}(2)$ given in \Cref{c_lambda1,c_lambda2} page~\pageref{c_lambda1},
the formula in the statement follows.
\end{proof}

The generating function $G$
allows us also to compute the
variance of the signed Hultman numbers.
\begin{theorem}
The variance of the number of cycles in the breakpoint graph of a
uniform random signed permutation $\pi^\pm$ of $n$ elements is
\begin{align*}
& \mathrm{Var} (c(BG(\pi^\pm)))
 = \ H_{2n+1}-\frac{H_{n}}{2}
-\sum_{k=0}^n\frac{1}{(2k+1)^2} - \left(\sum_{(a,b) \in \mathcal{A}_n}\!\! r_n(a,b) \right)^2\\
& + \sum_{(a,b) \in \mathcal{A}_n}\!\! r_n(a,b)\{2H_{2n+1}-H_n -
2H_{2a-1} +2H_{n-a-b+1}+H_{a-1}-H_{b-1} - 1\},
\end{align*}
where $\mathcal{A}_n$ and the coefficients $r_n(a,b)$ are as defined
in \Cref{{mean_signed}}.
\end{theorem}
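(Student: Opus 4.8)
The plan is to reproduce, in the signed setting, the generating-function computation used for the unsigned variance (\Cref{variance-of-unsigned-hultman-numbers}), this time starting from the function $G$ of \Cref{gen_fct}. Recall that for a probability generating function one has $\mathrm{Var} = G'(1)/G(1) + G''(1)/G(1) - \left(G'(1)/G(1)\right)^2$ (see e.g.\ \citet{wilf-generatingfunctionology}). The first and last summands are already available: \Cref{mean_signed} gives $G'(1)/G(1) = \mathbb{E}(c(BG(\pi^\pm))) = H_{2n+1} - H_n/2 - \sum_{(a,b)\in\mathcal{A}_n} r_n(a,b)$, and squaring this is routine. Hence the only genuinely new quantity to evaluate is $G''(1)/G(1)$.

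Since $F'_\lambda(0)$ is a constant, differentiating $G$ twice gives $G''(1) = \sum_\lambda c_\lambda(2)F''_\lambda(1)F'_\lambda(0)$, and using $G(1) = 2^n n!$ (the total number of signed permutations, exactly as in the proof of \Cref{mean_signed}) I would write $G''(1)/G(1) = \frac{1}{2^n n!}\sum_\lambda c_\lambda(2)F''_\lambda(1)F'_\lambda(0)$. I then split this sum into the isolated partition $\lambda = (n+1)$ and the partitions $\lambda = (a,b,1^{n-a-b+1})$ with $a \geq b \geq 1$, i.e.\ the pairs $(a,b)\in\mathcal{A}_n$. The key bookkeeping fact, read off from the proof of \Cref{mean_signed}, is the relation $\frac{1}{2^n n!}c_\lambda(2)F'_\lambda(1)F'_\lambda(0) = -r_n(a,b)$ for each such $(a,b)$; this is precisely the term whose negative sum assembled the mean.

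For the isolated term $\lambda = (n+1)$, I would substitute the expressions of case (2) of \Cref{derivatives-F-lambda} together with $c_{(n+1)}(2) = 2^{n+1}(n+1)!/(2n+2)!$ (from \Cref{c_lambda2}, with $n$ replaced by $n+1$). Because $F''_{(n+1)}(1)$ carries the bracketed factor $\left(H_{2n+1}-H_n/2\right)^2 - \sum_{k=0}^n 1/(2k+1)^2$ where $F'_{(n+1)}(1)$ carried the linear factor $H_{2n+1}-H_n/2$, the very same cancellations that turned the mean's $(n+1)$-term into $H_{2n+1}-H_n/2$ now turn its contribution to $G''(1)/G(1)$ into $\left(H_{2n+1}-H_n/2\right)^2 - \sum_{k=0}^n 1/(2k+1)^2$. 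For the remaining terms I would invoke the identity $F''_\lambda(1) = F'_\lambda(1)\{2H_{2a-1} - 2H_{n-a-b+1} - H_{a-1} + H_{b-1}\}$ from case (1) of \Cref{derivatives-F-lambda}; combined with $\frac{1}{2^n n!}c_\lambda(2)F'_\lambda(1)F'_\lambda(0) = -r_n(a,b)$, this rewrites the $\mathcal{A}_n$-part of $G''(1)/G(1)$ as $-\sum_{(a,b)} r_n(a,b)\{2H_{2a-1} - 2H_{n-a-b+1} - H_{a-1} + H_{b-1}\}$.

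It remains only to assemble the three pieces. Writing $M = H_{2n+1}-H_n/2$ and $R = \sum_{(a,b)\in\mathcal{A}_n} r_n(a,b)$, I have $G'(1)/G(1) = M - R$ and $G''(1)/G(1) = M^2 - \sum_{k=0}^n 1/(2k+1)^2 - \sum_{(a,b)} r_n(a,b)\{2H_{2a-1} - 2H_{n-a-b+1} - H_{a-1} + H_{b-1}\}$. Substituting into $\mathrm{Var} = (M-R) + G''(1)/G(1) - (M-R)^2$ and expanding $(M-R)^2 = M^2 - 2MR + R^2$, the $M^2$ contributions cancel, the term $-R^2 = -\left(\sum r_n(a,b)\right)^2$ survives, and the cross term $2MR$ merges with the stray $-R$ and with the explicit harmonic sum over $(a,b)$ into a single sum $\sum_{(a,b)} r_n(a,b)\{2H_{2n+1} - H_n - 2H_{2a-1} + 2H_{n-a-b+1} + H_{a-1} - H_{b-1} - 1\}$, which is exactly the bracket in the statement; the leftover pieces $M - \sum_{k=0}^n 1/(2k+1)^2 - R^2$ supply the remaining three terms. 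The main obstacle is purely organisational: one must keep the sign conventions and the $n \mapsto n+1$ substitutions consistent across \Cref{gen_fct,derivatives-F-lambda,mean_signed}, and in particular verify the relation $\frac{1}{2^n n!}c_\lambda(2)F'_\lambda(1)F'_\lambda(0) = -r_n(a,b)$ so that the harmonic-number factor coming from $F''_\lambda(1)$ attaches to $r_n(a,b)$ with the correct sign.
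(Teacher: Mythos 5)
Your proposal is correct and follows essentially the same route as the paper: both compute $\mathrm{Var} = G'(1)/G(1) + G''(1)/G(1) - (G'(1)/G(1))^2$ from the generating function of \Cref{gen_fct}, substitute the derivative formulas of \Cref{derivatives-F-lambda} split over $\lambda=(n+1)$ versus $(a,b)\in\mathcal{A}_n$, and fold in the mean from \Cref{mean_signed}; your key bookkeeping relation $\frac{1}{2^n n!}c_\lambda(2)F'_\lambda(1)F'_\lambda(0) = -r_n(a,b)$ is exactly what the paper's proof uses implicitly. The only difference is organisational (you evaluate $G''(1)/G(1)$ separately rather than combining $G'(1)+G''(1)$ into one sum), which does not change the argument.
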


\begin{proof}
As recalled in the proof of
\Cref{variance-of-unsigned-hultman-numbers}, the variance
can be
obtained from the generating function $G$ by evaluating the
function $(\log G)'(x)+(\log G)''(x)$ at $x=1$. Therefore, the
variance of the number of cycles in the breakpoint graph of a
random signed permutation equals
\begin{eqnarray*}
&&\frac{G'(1)}{G(1)}+\frac{G''(1)}{G(1)}-\left(
\frac{G'(1)}{G(1)}\right)^2\\
&=&\frac{G'(1)+G''(1)}{G(1)}-(\mathbb{E}(c(BG(\pi^\pm))))^2\\
&=&\frac{1}{2^nn!}\sum_{\lambda}
c_{\lambda}(2)(F'_{\lambda}(1)+F''_{\lambda}(1))F'_{\lambda}(0)-(\mathbb{E}(c(BG(\pi^\pm))))^2.\quad\mbox{(using \Cref{gen_fct})}
\end{eqnarray*}

Using
the formulas for $F'_{\lambda}(1),
F''_{\lambda}(1)$ and $F'_{\lambda}(0)$ given in
\Cref{derivatives-F-lambda}, we obtain that the variance equals
\begin{align*}
& H_{2n+1}-\frac{H_{n}}{2} -\sum_{k=0}^n\frac{1}{(2k+1)^2}+
\left(H_{2n+1}-\frac{H_{n}}{2}\right)^2 - (\mathbb{E}(c(BG(\pi^\pm))))^2\\
& - \sum_{(a,b) \in \mathcal{A}_n}\!\!
r_n(a,b)\left\{2H_{2a-1}-2H_{n-a-b+1}-H_{a-1}+H_{b-1} + 1\right\},
\end{align*}
which equals the wanted expression once
$\mathbb{E}(c(BG(\pi^\pm)))$ is replaced with the value derived in \Cref{mean_signed}.
\end{proof}

As in the unsigned case, we will study the behaviour of the mean
and variance for large
values of
 $n$. To that end, we will first prove
the following lemma.

\begin{lemma} As $n \to \infty$, we have
$$\sum_{(a,b) \in \mathcal{A}_n}\!\! |r_n(a,b)| =
\frac{1}{\log(n)}\times o(1).$$
\end{lemma}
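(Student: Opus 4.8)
The plan is to prove the stronger estimate $\sum_{(a,b)\in\mathcal{A}_n}|r_n(a,b)| = O(1/n)$, which suffices since $\log(n)/n \to 0$. The exponential factor $2^{-(n-a+b-1)}$ in $r_n(a,b)$ suggests reparametrising the sum by the quantity $j := n-a+b$, which measures how far $(a,b)$ lies from the corner $a+b=n+1$ of the index triangle. For fixed $j$ and $b$ the value $a=b+n-j$ is determined, and the constraints defining $\mathcal{A}_n$ become $b\geq 1$, $j\geq 2b-1$ and $j\leq n$. First I would bound the purely polynomial part of $|r_n(a,b)|$ by a constant: a direct computation gives $\frac{(n+1)(2a-2b+1)}{(n+a-b+2)(n+a-b+1)} < 2$, while the last two denominator factors are exactly $(n-a+b+1)(n-a+b)=j(j+1)$ and the exponential factor is $2^{-(j-1)}$.

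The crucial step is to rewrite the factorial part in a form that exposes its genuine decay in $n$. Using $a-1=n-(j-b+1)$ and $n-a-b+2=j-2b+2$, one obtains the exact identity
$$\frac{(a-1)!\,(n-a-b+2)!}{n!} = \frac{1}{\fallingfactorial{n}{b-1}\,\binom{n-b+1}{\,j-2b+2\,}},$$
by splitting the falling factorial $\fallingfactorial{n}{j-b+1}$ into a block of $b-1$ factors and a block of $j-2b+2$ factors. Together with the elementary estimates $\fallingfactorial{n}{b-1}\geq (n/2)^{b-1}$ and $\binom{n-b+1}{\,j-2b+2\,}\geq n-b+1\geq n/2$ (both valid because $b\leq (n+1)/2$ on $\mathcal{A}_n$, the second holding whenever $j-2b+2\leq (n-b+1)/2$), this yields a bound of the shape $|r_n(a,b)| \leq C\,\frac{(2b-2)!}{(b-1)!}\left(\frac{2}{n}\right)^{b-1}\frac{2^{-j}}{n\,j(j+1)}$ up to a harmless absolute constant $C$. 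The few terms with $j-2b+2>(n-b+1)/2$, where the binomial estimate degrades, have $j\gtrsim n/2$ and are crushed by the factor $2^{-j}$, contributing $o(1/n)$ in total.

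Summing this bound, I would first fix $b$ and sum over the admissible $j\geq 2b-1$; the geometric factor $2^{-j}$ makes each inner sum converge (bounded using $\sum_{j\geq 2b-1}2^{-j}/\!\big(j(j+1)\big)\leq 2^{2-2b}/\!\big((2b-1)2b\big)$), and the extra $1/n$ from the binomial gives that the level-$b$ contribution is $O(n^{-b})$ with an explicit $b$-dependent constant. Separating $b=1$ from $b\geq 2$ then finishes the argument: the $b=1$ level contributes $\frac{1}{n}(1+o(1))$, the single dominant term being $(a,b)=(n,1)$, i.e.\ $j=1$, whereas a ratio estimate on the resulting $b$-series (consecutive terms shrink by a factor $\approx 2b/n<1$ for $b<n/2$) shows $\sum_{b\geq 2}O(n^{-b})=O(n^{-2})$, so the whole sum is $O(1/n)$. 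The main obstacle is precisely the temptation to bound the factorials and the geometric series independently: doing so decouples the factorial decay from the exponential weight and produces a spurious $O(1)$ bound, because the factorial part is largest exactly where $j$ (hence $2^{-j}$) is smallest. The reparametrisation by $j$ and the binomial identity above are what keep this correlation intact and deliver the true $n^{-b}$ decay at each level.
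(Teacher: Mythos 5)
Your argument is correct and, like the paper's, actually establishes the stronger bound $O(1/n)$; the overall strategy is the same (group the terms by $a-b$ --- your $j=n-a+b$ is just $n-(a-b)$ --- so as to isolate the geometric weight $2^{-(j-1)}$, and recognise the factorial ratio in $r_n(a,b)$ as a reciprocal of binomial coefficients), but the two proofs execute the double sum differently. The paper sums over $b$ \emph{first}: after discarding $\binom{n-k}{2b-2}\geq 1$ it collapses $\sum_b\binom{k+b-1}{k}$ with the hockey-stick identity and finishes with the single inequality $\binom{k+\lfloor(n-k+1)/2\rfloor}{k+1}\leq\binom{n}{k+1}$, leaving only the geometric series $\sum_k 2^{k-n+1}/(n+k+2)\leq 2(1-2^{-n})/(n+2)$. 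You instead sum the geometric series over $j$ first and control the remaining $b$-sum through the pointwise estimate $\fallingfactorial{n}{b-1}\,\binom{n-b+1}{j-2b+2}\gtrsim (n/2)^{b}$, which converts the growth of $(2b-2)!/(b-1)!$ into an explicit $n^{-b}$ decay per level. The paper's route is shorter because the hockey-stick step does the bookkeeping for all $b$ at once; yours is more hands-on and makes visible where the mass of the sum sits (level $b=1$, small $j$). Your closing remark about the danger of decoupling the factorial part from the geometric weight is well taken --- the factorial ratio alone equals $1$ at $(a,b)=(1,1)$, where $j=n$ --- and the paper avoids the same trap through the division by $\binom{n}{k+1}$.

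A few loose ends, none fatal. The lower bound $\binom{n-b+1}{j-2b+2}\geq n-b+1$ in fact holds whenever $1\leq j-2b+2\leq n-b$, i.e.\ everywhere on $\mathcal{A}_n$ except the single point $(a,b)=(1,1)$, so your side condition and the separate handling of the ``degraded'' terms reduce to that one term, which is $O(2^{-n}n^{-3})$. The assertion that the $b=1$ level contributes $\frac{1}{n}(1+o(1))$ with dominant term $(a,b)=(n,1)$ is numerically off (that term is $\sim 1/(4n)$), but only an upper bound is needed. Finally, the ratio of consecutive level-$b$ contributions is not uniformly bounded away from $1$ as $b$ approaches $n/2$, so to get $\sum_{b\geq 2}=O(n^{-2})$ you should lean on the term-by-term bound you already have (the level-$b$ total is at most a constant times $b^{-2}n^{-1}\bigl(4(b-1)/(en)\bigr)^{b-1}$, and this last factor is maximised over $2\leq b\leq (n+3)/2$ at $b=2$) rather than on a geometric-series comparison.
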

\begin{proof}
If we denote $k=a-b$, the above sum becomes
\begin{eqnarray*}
&& \sum_{k=0}^{n-1}
\frac{2^{k-n+1}(n+1)(2k+1)}{n!(n+k+2)^{\underline{2}}
(n-k+1)^{\underline{2}}}\!\! \sum_{b=1}^{\lfloor(n-k+1)/2\rfloor}
\frac{(k+b-1)!(2b-2)!(n-k-2b+2)!}{(b-1)!}\\
&=& \sum_{k=0}^{n-1}
\frac{2^{k-n+1}(n+1)(2k+1)}{(n+k+2)^{\underline{2}}
(n-k+1)(k+1)\binom{n}{k+1}}\!\!
\sum_{b=1}^{\lfloor(n-k+1)/2\rfloor}
\frac{\binom{k+b-1}{k}}{\binom{n-k}{2b-2}}\\
&\leq& \sum_{k=0}^{n-1} \frac{2^{k-n+1}}{(n+k+2)\binom{n}{k+1}}\!\!
\sum_{b=1}^{\lfloor(n-k+1)/2\rfloor}\binom{k+b-1}{k}\\
&=& \sum_{k=0}^{n-1}
\frac{2^{k-n+1}\binom{k+\lfloor(n-k+1)/2\rfloor}{k+1}}{(n+k+2)\binom{n}{k+1}}.\quad\quad\quad\quad\quad\mbox{(using $\sum_{j=k}^{n} \binom{j}{k}= \binom{n+1}{k+1}$)}
\end{eqnarray*}
We further
observe
that
$$\sum_{(a,b) \in \mathcal{A}_n}\!\! |r_n(a,b)| \leq \sum_{k=0}^{n-1}
\frac{2^{k-n+1}}{n+k+2} \leq
2\left(1-\frac{1}{2^n}\right)\frac{1}{n+2},$$ and the result in
the statement easily follows.
\end{proof}

Based on this lemma, we can now obtain the following.
\begin{remark} When $n \to \infty$, the expected value and variance of
the number of cycles in the breakpoint graph of a uniform random
signed permutation $\pi^\pm$ of $n$ elements have the following
asymptotical behaviour:
\begin{align*}
\mathbb{E}(c(BG(\pi^\pm))) &= \frac{\log(n)}{2} + \frac{\gamma}{2} + \log(2) + o(1) ,\\
\mathrm{Var}(c(BG(\pi^\pm))) & = \frac{\log(n)}{2} + \frac{\gamma}{2}
+ \log(2) - \frac{\pi^2}{8}+ o(1).
\end{align*}
\end{remark}

Note that, in the limit when $n \to \infty$, the mean and variance
in the signed case are of the same order $(\log(n))$ as in the
unsigned case, but they differ by a factor of $1/2$.

\section{Applications: Distributions of rearrangement distances}\label{sec:distributions}

As stated in the introduction of this paper, the breakpoint graph and its cycles are used in a lot of variants of genome rearrangement problems to compute evolutionary distances -- either exactly or approximately. In this section, we are interested in exploring to what extent we can rely on those cycles in order to approximate the distribution of several distances that have been studied in the field of genome rearrangements, so as to obtain a better idea of how tight a particular bound on a distance is, or whether it is worth computing a distance exactly in cases where this requires solving an \NP-hard problem. By ``distribution of a distance'', we mean the number of (possibly signed) permutations of $n$ elements whose distance equals $k$, for all possible values of $k$.

We will not say much about rearrangement distances or how to compute them, except for the fact that, as already stated earlier in this paper, they are based on a set $S$ of operations that generate $S_n$ (resp.\ $S^\pm_n$). In the following, what we mean by expressions like ``the $S$ distance of $\pi$'' is the minimum number of operations from $S$ needed to transform a given permutation $\pi$ into the identity permutation $\iota$; a few examples of such operations that we will consider here are summarised informally in \Cref{tab:distances}. The reader should bear in mind that the discussion presented in this section focuses on experiments with relatively small amounts of data (mainly because many interesting distances are hard to compute, and because the number of (signed) permutations grows much too fast to generate the full distributions for large values of $n$), which is why we refrain from making any bold conjecture or actually proving any result.
We will also restrict ourselves to comparing distributions for one fixed value of $n$, namely, the largest value for which we could obtain the distribution of the particular distance we are interested in; similar-looking plots can however be obtained for any value. We generated the distributions based on cycles of the breakpoint graph ourselves, but the distributions of the distances we consider here were computed by \citet{galvao-db}.

\subsection{Unsigned distances}

\begin{table}
\centering
\scalebox{0.79}{
  \begin{tabular}{|l|l|l|l|}
  \hline
  & \textbf{Distance} & \textbf{Operation} & \textbf{Description of the operation} \\
  \hline
\multirow{5}{*}{\rotatebox{90}{\scalebox{0.7}{unsigned}}}  & $bid$ & block-interchange & exchanges two non-necessarily adjacent segments \\
 & $td$ & transposition & exchanges two adjacent segments \\
 & $ptd$ & prefix transposition & transposition involving $\pi_1,\pi_2,\ldots, \pi_k$ for some $k$ \\
 & $rd$ & reversal & reverses a segment \\
 & $prd$ & prefix reversal & reversal involving $\pi_1,\pi_2,\ldots, \pi_k$ for some $k$ \\
  \hline
\multirow{2}{*}{\rotatebox{90}{\scalebox{0.7}{signed}}}  & $srd$ & signed reversal & reverses a segment and flips the signs in that segment\\
 & $psrd$ & prefix signed reversal & signed reversal involving $\pi_1,\pi_2,\ldots, \pi_k$ for some $k$ \\
  \hline
  \end{tabular}
}
\caption{Some abbreviations and informal definitions used throughout this section.}
\label{tab:distances}
\end{table}

A few distances between unsigned permutations have been considered in the field of genome rearrangements~\cite{fertin-combinatorics}. \citet{doignon-hultman} already observed that $\hultmannumber{n}{n+1-2k}$ is exactly the number of permutations $\pi$ in $S_n$ whose \emph{block-interchange distance} $bid(\pi)$ equals $k$, an immediate consequence of the following result.
\begin{theorem}\label{thm:bid}
\cite{christie-block} For all $\pi$ in $S_n$, we have $bid(\pi)=(n+1-c(BG(\pi)))/2$.
\end{theorem}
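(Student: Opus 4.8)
The plan is to prove the identity by sandwiching $bid(\pi)$ between two matching bounds, using the number of cycles $c(BG(\pi))$ as a potential function. First I would record the base case. Viewing $\iota$ as the all-positive signed permutation, its associated sequence $\pi'$ is the identity on $\{0,1,\ldots,2n+1\}$, so every black edge $\{\pi'_{2i},\pi'_{2i+1}\}=\{2i,2i+1\}$ coincides with a grey edge; hence $BG(\iota)$ decomposes into $n+1$ length-one cycles and $c(BG(\iota))=n+1$, the maximum allowed by \Cref{def:BG}. Sorting $\pi$ by block-interchanges is then the same as transforming $BG(\pi)$ into $BG(\iota)$, and the whole argument reduces to controlling how a single block-interchange changes $c(BG(\cdot))$.

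For the lower bound I would establish the key lemma that one block-interchange raises the number of cycles by at most $2$. A block-interchange only rearranges segments of $\pi$ and leaves the target ordering alone, so in the breakpoint graph it deletes the black edges sitting at the (at most four) boundaries of the two swapped blocks and reinserts new black edges reconnecting the pieces, while the grey matching is untouched. Since every grey edge and every unaffected black edge is preserved, only the cycles passing through the rerouted region can be created or destroyed; tracking how these black ends reconnect shows that the effect on the cycle structure is that of composing the associated permutation with a product of two transpositions, and a transposition changes a permutation's cycle count by exactly one. Hence $c(BG(\cdot))$ changes by an even amount of absolute value at most $2$ (the parity being consistent with $c(BG(\pi))\equiv n+1 \pmod 2$, as already noted after \Cref{eqn:hultman-numbers-bona-flynn}). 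Consequently at least $(n+1-c(BG(\pi)))/2$ block-interchanges are required, giving $bid(\pi)\geq(n+1-c(BG(\pi)))/2$.

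For the upper bound I would argue by induction on $n+1-c(BG(\pi))$. If $\pi\neq\iota$ then $BG(\pi)$ contains a cycle of length at least $2$, that is, a cycle traversing at least two distinct black edges. The heart of the argument, and the step I expect to be the main obstacle, is to show that one can always choose positions inside such a nontrivial cycle at which a single block-interchange splits it and increases $c(BG(\pi))$ by exactly $2$. This calls for a careful local case analysis of how the grey edges interleave the chosen black edges along the cycle, so as to guarantee that the reconnection produced by the block-interchange genuinely separates one cycle into more cycles rather than merely reshuffling edges within it. Granting such a move, the induction hypothesis applies to the resulting permutation (whose cycle count is larger by $2$), yielding $bid(\pi)\leq(n+1-c(BG(\pi)))/2$. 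Combining the two inequalities gives the stated identity, and the even parity of $n+1-c(BG(\pi))$ guarantees that the right-hand side is an integer, so nothing is lost.
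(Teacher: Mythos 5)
First, a point of reference: the paper does not prove \Cref{thm:bid} at all --- it imports the result from \citet{christie-block} --- so your proposal can only be measured against the standard proof in that reference, which follows exactly the sandwich strategy you describe. Your overall architecture is the right one: $c(BG(\iota))=n+1$, a lower bound from the fact that a single block-interchange changes $c(BG(\cdot))$ by at most $2$, and an upper bound from the existence of a block-interchange that increases it by exactly $2$, with the parity observation guaranteeing integrality.

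The genuine gap is that you have explicitly ``granted'' the one step that carries essentially all of the content of the theorem. In the upper bound, the assertion that for every $\pi\neq\iota$ there exist cut points defining a \emph{valid} block-interchange which splits cycles twice (raising $c(BG(\pi))$ by exactly $2$) is not a deferrable local case analysis: it is the heart of Christie's proof, and without it you only obtain $bid(\pi)\geq(n+1-c(BG(\pi)))/2$. The difficulty is that the four new black edges are not free parameters --- they are forced by the choice of the two exchanged blocks --- so one must exhibit a concrete choice (Christie does this by locating the smallest element not yet in place and the position it must move to) and then verify that both induced transpositions act within a single cycle of the current decomposition. A secondary, smaller gap: in the lower bound you assert that rerouting the at most four boundary black edges amounts to composing the associated permutation with a product of two transpositions; this is true, and it is precisely what rules out a change of $\pm 3$ or $\pm 4$ when four black edges are replaced at once (it can be made precise through the factorisation viewpoint of \Cref{unsigned-hultman-counts-factorisations}), but as written it is asserted rather than checked, and it deserves the short verification.
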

Whereas sorting by block-interchanges and computing $bid(\pi)$ can be achieved in polynomial time~\cite{christie-block}, this is not the case for any of the other unsigned operations listed in \Cref{tab:distances}: sorting by transpositions and sorting by reversals, as well as computing the related distances, are \NP-hard problems (see \citet{bulteau-sbt-is-hard} and \citet{caprara-sorting}, respectively); the same problems in the context of prefix reversals are also \NP-hard~\cite{bulteau-pf-is-hard}, while their complexity in the case of prefix transpositions is open.

However, since transpositions are but a particular case of block-interchanges, the expression given in \Cref{thm:bid} for computing $bid(\pi)$ is also a lower bound on the \emph{transposition distance} $td(\pi)$. Additionnally, a tighter lower bound on the transposition distance was proved by \citet{bafna-transpositions}.

\begin{theorem}\label{thm:td-lb}
\cite{bafna-transpositions} For all $\pi$ in $S_n$, we have $td(\pi)\geq(n+1-c_{odd}(BG(\pi)))/2$, where $c_{odd}(BG(\pi))$ is the number of cycles of odd length in $BG(\pi)$.
\end{theorem}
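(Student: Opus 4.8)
The plan is to prove this lower bound by a potential argument: I will track how the quantity $c_{odd}(BG(\pi))$ evolves as transpositions are applied, show that it can increase by at most $2$ per transposition, and then compare its value at $\pi$ with its value at $\iota$, at which point the inequality falls out. First I would record the base case. For the identity, the doubling of \Cref{def:BG} produces $\pi'=\langle 0\ 1\ 2\ \cdots\ 2n+1\rangle$, so every black edge $\{2i,2i+1\}$ is parallel to the grey edge $\{2i,2i+1\}$; thus $BG(\iota)$ is a disjoint union of $n+1$ cycles each of length $1$. Since $1$ is odd, $c_{odd}(BG(\iota))=n+1$, which is moreover the maximum possible number of cycles.

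The heart of the argument is the per-move estimate: for any $\pi$ in $S_n$ and any transposition $\tau$, one has $c_{odd}(BG(\tau\circ\pi))\le c_{odd}(BG(\pi))+2$. I would first recall the standard fact that a transposition acts on the breakpoint graph by deleting exactly three black edges and inserting three new ones, while leaving every grey edge untouched. Let $U$ be the union of the (one, two, or three) cycles carrying those three black edges, and let $L$ be the sum of their lengths. Two observations then suffice. First, the rewiring preserves the vertex set and the edge count of $U$ (three black edges out, three in, grey edges fixed), so the total length $L$ covering $U$ is unchanged; since the number of odd parts of any partition of $L$ is congruent to $L$ modulo $2$, the number of odd cycles inside $U$ has the same parity before and after, and therefore the global change $\Delta c_{odd}$ is \emph{even}. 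Second, deleting the three black edges turns $U$ into exactly three arcs (a cycle losing $r\ge 1$ of the edges splits into $r$ paths, and the losses sum to three) with six distinct endpoints, and the three inserted black edges form a perfect matching on those same six endpoints; hence $U$ is reassembled into a $2$-regular graph on the same vertices, which decomposes into at most three cycles. Consequently $U$ carries at most three odd cycles afterwards, so $\Delta c_{odd}\le 3$, and combined with the parity observation this forces $\Delta c_{odd}\le 2$.

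Finally I would assemble the pieces. Let $t=td(\pi)$ and fix an optimal sorting sequence $\pi=\pi^{(0)},\pi^{(1)},\ldots,\pi^{(t)}=\iota$ of transpositions. Along this sequence $c_{odd}$ starts at $c_{odd}(BG(\pi))$, ends at $c_{odd}(BG(\iota))=n+1$, and grows by at most $2$ at each step, so $n+1\le c_{odd}(BG(\pi))+2t$, which rearranges to $td(\pi)\ge(n+1-c_{odd}(BG(\pi)))/2$. The main obstacle is the per-move estimate, and within it the claim that a transposition modifies \emph{precisely} three black edges of the graph defined here: verifying this requires unwinding how a swap of adjacent blocks in $\pi$ translates into the ordered sequence $\pi'$ and checking that exactly the three affected adjacencies are replaced. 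Once that is in place, the parity argument via partitions of $L$ is what makes the bound $2$ (rather than the naive $3$) drop out cleanly, sparing us the lengthy case analysis on the cyclic interleaving of the three edges that a direct approach would otherwise demand.
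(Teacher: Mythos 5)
Your argument is correct. Note that the paper itself offers no proof of this statement --- it is quoted verbatim from \citet{bafna-transpositions} --- so there is nothing internal to compare against; what you have reconstructed is essentially the classical Bafna--Pevzner argument: a transposition replaces exactly three black edges while fixing the grey ones, hence acts on the union $U$ of at most three cycles; the resulting $2$-regular graph on $V(U)$ has at most three cycles, so the number of odd cycles increases by at most $3$; and since the cycle lengths of $U$ partition the conserved total length $L$, the number of odd cycles is congruent to $L$ modulo $2$ both before and after, forcing the increase to be even and hence at most $2$. Combined with $c_{odd}(BG(\iota))=n+1$ and telescoping along an optimal sorting sequence, the bound follows. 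The parity observation is exactly what the original source uses (there phrased as $\Delta c_{odd}\in\{-2,0,2\}$), and it does, as you say, spare the case analysis on how the three black edges interleave; the one step you rightly flag as needing verification --- that a transposition alters precisely three black edges of $BG(\pi)$ and no grey edges --- is routine from \Cref{def:BG}, since the three block boundaries fall between positions $\pi'_{2i}$ and $\pi'_{2i+1}$ and the six endpoints involved are pairwise distinct.
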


Consequently, it makes sense to try to approximate the distribution of the transposition distance using $\hultmannumber{n}{n+1-2k}$ (because of \Cref{thm:bid}) and what could be called the \emph{odd} Hultman numbers $S_H^{odd}(n,n+1-2k)$, i.e.\ the number of permutations of $n$ elements whose breakpoint graph contains $n+1-2k$ cycles of odd length (because of \Cref{thm:td-lb}). \Cref{plot:td-versus-shn}$(a)$ compares all three distributions for $n=13$. To the best of our knowledge, there is no known formula for computing odd Hultman numbers.

\citet{dias-prefix} initiated the study of \emph{prefix transpositions}, which are transpositions that can only be applied to an initial segment of the permutation to sort. To the best of our knowledge, the complexity of sorting by prefix transpositions or computing the corresponding distance is still open.
However, a lower bound on the prefix transposition distance based on the breakpoint graph is known.

\begin{theorem}\label{thm:my-ptd-lower-bound-i}
\cite{labarre-edit} For any $\pi$ in $S_n$, we have
\begin{eqnarray}\label{eqn:my-ptd-lower-bound-i}
ptd(\pi)\geq\frac{n+1+c(BG(\pi))}{2}-c_1(BG(\pi))-\left\{
\begin{array}{ll}
0 & \mbox{if } \pi_1= 1, \\
1 & \mbox{otherwise},
\end{array}
\right.
\end{eqnarray}
where $c_1(BG(\pi))$ is the number of cycles of length $1$ in $BG(\pi)$.
\end{theorem}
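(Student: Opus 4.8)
The inequality is quoted from \cite{labarre-edit}; the plan is to reprove it with the standard potential-function technique for rearrangement lower bounds. Write $\Phi(\pi)$ for the right-hand side of \eqref{eqn:my-ptd-lower-bound-i}, i.e.
$$\Phi(\pi)=\frac{n+1+c(BG(\pi))}{2}-c_1(BG(\pi))-\mathbf{1}[\pi_1\neq 1].$$
The goal is to establish (a) $\Phi(\iota)=0$ and (b) that no single prefix transposition decreases $\Phi$ by more than $1$. Given these, if $\pi=\pi^{(0)},\pi^{(1)},\dots,\pi^{(d)}=\iota$ is an optimal sorting sequence with $d=ptd(\pi)$, then $0=\Phi(\iota)\geq\Phi(\pi)-d$, so $ptd(\pi)\geq\Phi(\pi)$, which is exactly the claim. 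Part (a) is immediate: $BG(\iota)$ is a disjoint union of $n+1$ one-cycles, so $c(BG(\iota))=c_1(BG(\iota))=n+1$ and $\iota_1=1$, whence $\Phi(\iota)=(n+1)-(n+1)-0=0$.

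The substance is part (b). First I would recall the structural action of a prefix transposition on the breakpoint graph: it modifies only a bounded set of black edges and leaves the grey matching untouched. Because every prefix transposition is in particular a block-interchange, \Cref{thm:bid} forces $|\Delta c|\leq 2$, where $\Delta c:=c(BG(\pi t))-c(BG(\pi))$, and the same formula shows $c(BG(\pi))\equiv n+1\pmod{2}$, so $\Delta c\in\{-2,0,2\}$ and $\Delta c/2\in\{-1,0,1\}$. A one-cycle, meanwhile, can be created or destroyed only at a black edge that the move actually touches, so $\Delta c_1$ and the indicator change are individually bounded. The feature that distinguishes a \emph{prefix} transposition from an arbitrary one — and that the indicator term is designed to capture — is that such a move is forced to act at the front of the permutation, so it always disturbs the black edge incident to the leftmost vertex $0$, whose status as an adjacency is precisely whether $\pi_1=1$.

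With this in hand, (b) amounts to proving that $\Delta\Phi=\tfrac{\Delta c}{2}-\Delta c_1-\Delta\mathbf{1}\geq -1$ for every prefix transposition. I would carry this out by a case analysis organised according to $\Delta c\in\{-2,0,2\}$ and according to how many of the reconnected black edges coincide with a grey edge (thereby becoming one-cycles), keeping separate the subcase $\pi_1=1$ (where the move is forced to destroy the leading adjacency, so $\Delta\mathbf{1}=+1$) from the subcase $\pi_1\neq 1$. Throughout, the argument hinges on the observation that creating several new one-cycles forces the reconnected edges into a rigid local configuration, which in turn pins down $\Delta c$; this coupling is what prevents the three terms from all moving favourably at once.

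The main obstacle is exactly this coupling. Taken in isolation the terms are far too loose: one could fear a move with $\Delta c=2$ and $\Delta c_1=3$ (giving $\Delta\Phi=-2$), or with $\Delta c=-2$ together with a large gain in $c_1$ and $\Delta\mathbf{1}=+1$. The heart of the proof is to show that such combinations are geometrically impossible for a transposition constrained to act at the front — essentially, that a gain of two cycles cannot be accompanied by three fresh adjacencies, and that any favourable change in $c_1$ at the front is paid for by the indicator term. I expect the cleanest route is to track the single cycle of $BG(\pi)$ containing the leftmost black edge and to record how the reconnection splits or merges it, rather than to enumerate permutation-level cases.
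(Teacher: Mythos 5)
The paper does not prove this statement at all: \Cref{thm:my-ptd-lower-bound-i} is imported verbatim from \citet{labarre-edit}, where it is established by an algebraic route (associating to each sorting sequence a factorisation of an even permutation linked to $BG(\pi)$ and bounding the number of factors, with the prefix constraint handled through the point $0$), not by the breakpoint-graph potential argument you propose. So there is no in-paper proof to match yours against; your proposal has to stand on its own.

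On its own it has a genuine gap. Your part (a) and the reduction ``$\Phi(\iota)=0$ plus $\Delta\Phi\geq -1$ per move implies $ptd(\pi)\geq\Phi(\pi)$'' are correct, and the parity argument giving $\Delta c\in\{-2,0,2\}$ is fine. But part (b), which you yourself identify as ``the substance'' and ``the heart of the proof'', is never carried out: you observe that the term-by-term bounds only give $\Delta\Phi\geq -5$ or so, assert that the bad combinations (e.g.\ $\Delta c=2$ with $\Delta c_1=3$ and $\Delta\mathbf{1}=0$) are ``geometrically impossible'', and stop. That assertion is the entire content of the theorem. To close it you need two concrete structural facts and then the actual case analysis. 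First, a new black edge $\{u,v\}$ can form a $1$-cycle only if $\{u,v\}$ is a grey edge, and that grey edge connected the two \emph{old} black edges incident to $u$ and to $v$; since a transposition reconnects its three black edges cyclically, two new non-front $1$-cycles force all three old black edges into a single old cycle, which pins down $m$ and $m'$ in $\Delta c = m'-m$ and kills the dangerous case $\Delta c=0$, $\Delta c_1=+2$ with the front edge uninvolved. Second, the front new black edge $\{0,x\}$ is a $1$-cycle iff $x=1$ iff the new first element is $1$, and a prefix transposition always replaces the first element; this is what ties the front contribution to $\Delta c_1$ rigidly to $\Delta\mathbf{1}$ (creation of a front adjacency forces $\Delta\mathbf{1}=-1$ when $\pi_1\neq 1$; if $\pi_1=1$ the move necessarily destroys the front $1$-cycle, so the $+1$ in $\Delta\mathbf{1}$ is cancelled by a $-1$ in $\Delta c_1$). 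With these two lemmas the finitely many cases indexed by $(m,m',\text{number of new/destroyed adjacencies})$ do close, but none of this is in your write-up, so as submitted the proposal is a plan rather than a proof.
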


\Cref{plot:td-versus-shn}$(b)$ shows the distribution of the prefix transposition distance, together with some function of the Hultman numbers and the distribution of the number of permutations in $S_n$ for which lower bound~\eqref{eqn:my-ptd-lower-bound-i} equals $k$ for $n=13$. On this particular plot and the forthcoming ones, we find the offset $m$ in $\hultmannumber{n}{n+1-k+m}$ experimentally by shifting the distribution of $\hultmannumber{n}{n+1-k}$ so that it best fits the distribution of the distance we are interested in.

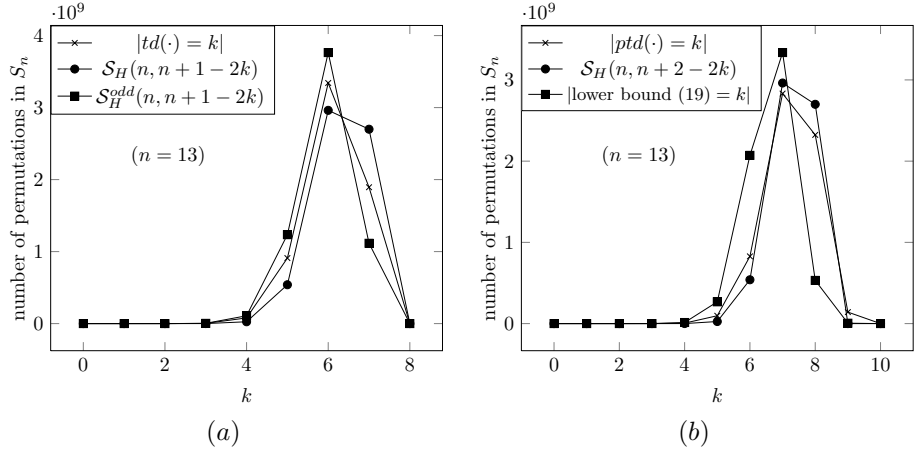
\begin{figure}[htbp]
\pgfplotsset{every axis legend/.append style={
at={(0,1)},
anchor=north west}}
\centering
\begin{tabular}{cc}
\begin{tikzpicture}[scale=0.76]
\pgfplotsset{every axis/.append style={
extra description/.code={
\node at (0.3,0.6) {$(n=13)$};
}}}
\begin{axis}[
  xlabel=$k$,
  ylabel=number of permutations in $S_n$,
  y label style={at={(0.1,0.5)}}]
\pgfplotstableread{td-versus-cycles-13.dat.tex}\table
\addplot[mark=x,color = black]       table[x=n,y=td] from \table;
\addplot[mark=*,color = black]       table[x=n,y=inv] from \table;
\addplot[mark=square*,color = black] table[x=n,y=odd_inv] from \table;
\legend{$|td(\cdot)=k|$,$\hultmannumber{n}{n+1-2k}$,$\oddhultmannumber{n}{ n+1-2k }$}
\end{axis}
\end{tikzpicture}
&
\begin{tikzpicture}[scale=0.76]
\pgfplotsset{every axis/.append style={
extra description/.code={
\node at (0.3,0.6) {$(n=13)$};
}}}
\begin{axis}[
  xlabel=$k$,
  ylabel=number of permutations in $S_n$,
  y label style={at={(0.1,0.5)}}  ]
\pgfplotstableread{ptd-versus-cycles-13.dat.tex}\table
\addplot[mark=x,color = black]        table[x=n,y=ptd] from \table;
\addplot[mark=*,color = black]        table[x=n,y=inv] from \table;
\addplot[mark=square*,color = black]  table[x=n,y=labarre_ptd_lb] from \table;
\legend{$|ptd(\cdot)=k|$,$\hultmannumber{n}{n+2-2k}$,\small $|\mbox{lower
bound~\eqref{eqn:my-ptd-lower-bound-i}}=k|$}
\end{axis}
\end{tikzpicture}
\\
$(a)$ & $(b)$
\end{tabular}
\caption{$(a)$ How the distributions of the unsigned and odd Hultman numbers relate to the distribution of the transposition distance, for $n=13$; $(b)$ how the distributions of the unsigned Hultman numbers and the number of permutations for which lower bound~\eqref{eqn:my-ptd-lower-bound-i} equals $k$ relate to the distribution of the prefix transposition distance, for $n=13$.}
\label{plot:td-versus-shn}
\end{figure}

Two other distances that have received a considerable amount of attention are the \emph{reversal distance}, where a reversal reverses the order of the elements contained in the segment of the permutation on which it acts, and the \emph{prefix reversal distance}, where prefix reversals have the same effect as reversals but may only be applied to an initial segment of the permutation. \citet{caprara-sorting} showed that computing the former is \NP-hard, while \citet{bulteau-pf-is-hard} proved that computing the latter is \NP-hard. Again, we find it interesting to examine how the distribution of the number of cycles in the breakpoint graph relates to those distances, which we do in \Cref{fig:reversals-and-hultman-numbers}. We warn the reader familiar with breakpoint graphs, however, that the breakpoint graph used in our paper differs from the structure traditionally used for the study of these two distances, which admits more than one
cycle decomposition; the graph we use can be
seen as the result
of selecting one particular decomposition among all possible decompositions. In this setting, there is a much larger difference between the distributions of both distances and of the unsigned Hultman numbers than what we have observed for transpositions in \Cref{plot:td-versus-shn}, which confirms that using only (our version of) the breakpoint graph in this case is not enough.

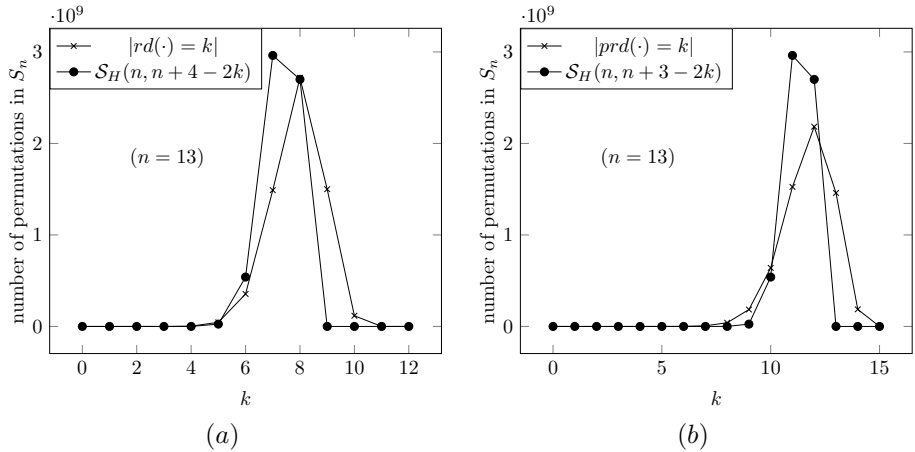
\begin{figure}[htbp]
\pgfplotsset{every axis legend/.append style={
at={(0,1)},
anchor=north west}}
\pgfplotsset{every axis/.append style={
extra description/.code={
\node at (0.3,0.6) {$(n=13)$};
}}}
\centering
\begin{tabular}{cc}
\begin{tikzpicture}[scale=0.76]
\begin{axis}[
  xlabel=$k$,
  ylabel=number of permutations in $S_n$,
  y label style={at={(0.1,0.5)}}  ]
\pgfplotstableread{rd-versus-cycles-13.dat.tex}\table
\addplot[mark=x,color = black]  table[x=n,y=rd] from \table;
\addplot[mark=*,color = black]  table[x=n,y=inv] from \table;
\legend{$|rd(\cdot)=k|$,$\hultmannumber{n}{n+4-2k}$}
\end{axis}
\end{tikzpicture}
&
\begin{tikzpicture}[scale=0.76]
\begin{axis}[
  xlabel=$k$,
  ylabel=number of permutations in $S_n$,
  y label style={at={(0.1,0.5)}}  ]
\pgfplotstableread{prd-versus-cycles-13.dat.tex}\table
\addplot[mark=x,color = black]  table[x=n,y=prd] from \table;
\addplot[mark=*,color = black]  table[x=n,y=inv] from \table;
\legend{$|prd(\cdot)=k|$,$\hultmannumber{n}{n+3-2k}$}
\end{axis}
\end{tikzpicture}
\\
$(a)$ & $(b)$
\end{tabular}
\caption{How the distribution of the unsigned Hultman numbers relates to the distribution of $(a)$ the reversal distance and $(b)$ the prefix reversal distance, for $n=13$.}
\label{fig:reversals-and-hultman-numbers}
\end{figure}

\subsection{Signed distances}

A number of well-studied and biologically relevant distances between signed permutations are also based on the breakpoint graph. These include the \emph{double cut-and-join (DCJ) distance}, introduced by \citet{yancopoulos-efficient}, who showed that its value could be computed using the formula $dcj(\pi)=n+1-c(BG(\pi))$. As a consequence, the number of signed permutations of $n$ elements with DCJ distance $k$ is exactly $\signedhultmannumber{n}{n+1-k}$.

Another distance whose distribution can be well approximated using the signed Hultman numbers is the \emph{signed reversal distance} (see \Cref{tab:distances} for an informal definition of signed reversals).
\citet{hannenhalli-transforming} proved the following formula for
computing the signed reversal distance of any permutation $\pi$, denoted by $srd(\pi)$.

\begin{theorem}
\cite{hannenhalli-transforming} For any $\pi$ in $S_n^\pm$, the signed reversal distance of $\pi$ is
$$srd(\pi)=n+1-c(BG(\pi))+h(\pi)+f(\pi),$$
where $h(\pi)$ is the number of ``hurdles'' of $\pi$ and $f(\pi)=1$ if $\pi$ is a ``fortress'', and $0$ otherwise.
\end{theorem}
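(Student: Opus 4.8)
The plan is to prove the formula by establishing matching lower and upper bounds on $srd(\pi)$, interpreting reversals as operations on the breakpoint graph $BG(\pi)$. First I would record the elementary fact that sorting $\pi$ is equivalent to transforming $BG(\pi)$ into the breakpoint graph of $\iota$, which consists of $n+1$ trivial cycles, and that a single signed reversal acts on $BG(\pi)$ by cutting two black edges and reconnecting their endpoints while leaving every grey edge fixed. A case analysis on whether the two affected black edges lie in the same cycle or in two distinct cycles shows that the number of cycles changes by exactly one of $-1$, $0$, or $+1$. Since the identity attains $n+1$ cycles, this immediately yields the baseline bound $srd(\pi) \ge (n+1) - c(BG(\pi))$.

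The core of the argument is explaining the corrective term $h(\pi)+f(\pi)$, and for this I would introduce the interleaving (overlap) structure of the cycles. Call a reversal \emph{oriented} if it increases the cycle count, and call a cycle oriented if some reversal on a pair of its black edges is oriented; the unoriented cycles group into connected components of the overlap graph, and the hurdles are precisely those unoriented components that are extremal in the circular nesting order (an innermost unoriented component, or the unique outermost one), with a fortress occurring when an odd number of hurdles are all ``super-hurdles''. I would set $d(\pi) = (n+1) - c(BG(\pi)) + h(\pi) + f(\pi)$ and prove the lower bound $srd(\pi) \ge d(\pi)$ by showing that no single reversal can decrease $d$ by more than one. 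Writing $\Delta d = -\Delta c + \Delta h + \Delta f$ with $\Delta c \in \{-1,0,+1\}$, this reduces to a bookkeeping argument tracking exactly how hurdles can merge, split, or be destroyed under each type of reversal; since $d(\iota)=0$, iterating the inequality gives the bound.

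For the matching upper bound I would exhibit a sorting algorithm that realises $d(\pi)$ reversals. While an oriented cycle exists, apply an oriented reversal chosen to be \emph{safe}, meaning it increases $c$ by one while creating no new hurdle, so that $d$ drops by exactly one; when no oriented cycle remains, all surviving cycles form hurdles, which I would eliminate by \emph{cutting} a single hurdle (a reversal that orients it with $\Delta c = 0$, lowering $h$ by one) or \emph{merging} two hurdles (a reversal that fuses two cycles, so $\Delta c = -1$, while lowering $h$ by two). In both moves $\Delta d = -1$, and a parity analysis over the number of hurdles shows that the total cost of clearing them all is exactly $h(\pi)+f(\pi)$, the extra unit appearing precisely in the fortress case where the hurdles cannot be paired off without regenerating a hurdle.

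The step I expect to be the main obstacle is the \emph{safe reversal lemma}: guaranteeing that whenever an oriented cycle is present, one can select an oriented reversal that does not inadvertently convert oriented material into fresh unoriented components, which would make $h$ jump and prevent $d$ from decreasing. This is the delicate, historically error-prone heart of the Hannenhalli--Pevzner analysis. I would handle it by ranking the candidate oriented reversals and arguing that a suitable extremal choice never produces a new unoriented component, and I would complement this with a careful treatment of the interplay between super-hurdles and the fortress definition in order to pin down the exact $+f(\pi)$ correction.
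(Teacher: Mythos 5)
This theorem is not proved in the paper at all: it is imported verbatim from Hannenhalli and Pevzner, cited as \cite{hannenhalli-transforming}, and the authors explicitly decline even to define ``hurdle'' and ``fortress''. So there is no in-paper proof to compare against; what you have written is an outline of the original Hannenhalli--Pevzner argument, and as an outline it has the right architecture (lower bound via $\Delta c\in\{-1,0,+1\}$, the overlap/interleaving graph, oriented versus unoriented components, hurdles, super-hurdles, the fortress parity correction, and a matching algorithmic upper bound via safe reversals and hurdle cutting/merging).

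As a proof, however, it has genuine gaps, and they sit exactly where you say they do. First, the lower bound $srd(\pi)\geq d(\pi)$ requires showing that \emph{every} reversal satisfies $\Delta h - \Delta c + \Delta f \geq -1$; this is not mere bookkeeping, because a single reversal can in principle destroy several unoriented components at once or turn an oriented component unoriented, and one must prove that any reversal that decreases $h$ by two necessarily has $\Delta c = -1$ (it must merge two cycles from distinct hurdles), and that the fortress indicator cannot drop together with a hurdle in a way that saves a step. You assert this reduction but do not carry it out. Second, the safe reversal lemma --- that an oriented component always admits an oriented reversal creating no new unoriented component --- is the crux of the entire theorem, and ``ranking the candidate oriented reversals and arguing that a suitable extremal choice works'' is a statement of intent, not an argument; the standard proofs (Hannenhalli--Pevzner's original one, or the later simplifications by Kaplan--Shamir--Tarjan and Bergeron) each require a page or more of careful combinatorics on the overlap graph precisely here. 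Third, your definition of hurdles (``innermost or the unique outermost unoriented component'') is loose: the correct notion is an unoriented component that does not \emph{separate} two other unoriented components on the circle, and getting this wrong breaks the count $h(\pi)$. Since the paper treats the theorem as a black box, the appropriate resolution is either to cite it as the paper does or to supply the full Hannenhalli--Pevzner machinery; the present sketch does neither.
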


We will not give more details on the terms ``hurdles'' and ``fortress'' (see \citet{hannenhalli-transforming} for definitions), except for the fact that hurdles are particular collections of cycles in $BG(\pi)$, and that a permutation cannot be a fortress unless $h(\pi)>0$. Our point here is that the following lower bound, first proved by
\citet{bafna-genome-journal}, is extremely tight:
\begin{equation}\label{lower_bound}
\forall\ \pi\in S^\pm_n: srd(\pi) \geq n+1-c(BG(\pi)).
\end{equation}

This claim is supported by \citeauthor{caprara}'s proof~\cite{caprara} of the fact that the probability that a permutation $\pi\in S^\pm_n$ is \emph{not} tight with respect to \Cref{lower_bound} is $\Theta(n^{-2})$, and by \citeauthor{swenson}'s proof~\cite{swenson} that the probability that $\pi$ is a fortress is $\Theta(n^{-15})$.
Therefore, \Cref{lower_bound} provides a very good approximation
of the signed reversal distance, and the distribution of $\signedhultmannumber{n}{n+1-k}$ closely matches that of the signed reversal distance. \Cref{plot:srd-versus-shn} illustrates the situation for the case $n=10$.

\begin{figure}[htbp]
\centering
\begin{tikzpicture}
\pgfplotsset{every axis/.append style={
extra description/.code={
\node at (0.5,0.7) {$(n=10)$};
}}}
\begin{axis}[
  xlabel=$k$,
  ylabel=number of permutations in $S^\pm_n$,
  y label style={at={(0.05,0.5)}}  ]
\pgfplotstableread{srd-versus-cycles-10.dat.tex}\table
\addplot[ycomb,mark=x]  table[x=n,y=srd] from \table;
\addplot[ycomb,mark=triangle]  table[x=n,y=inv] from \table;
\pgfplotsset{every axis legend/.append style={
at={(0,1)},
anchor=north west}}
\legend{$|srd(\cdot)=k|$,$\signedhultmannumber{n}{n+1-k}$}
\end{axis}
\end{tikzpicture}
\caption{The distributions of the signed reversal distance and of the signed Hultman numbers, for $n=10$.}
\label{plot:srd-versus-shn}
\end{figure}
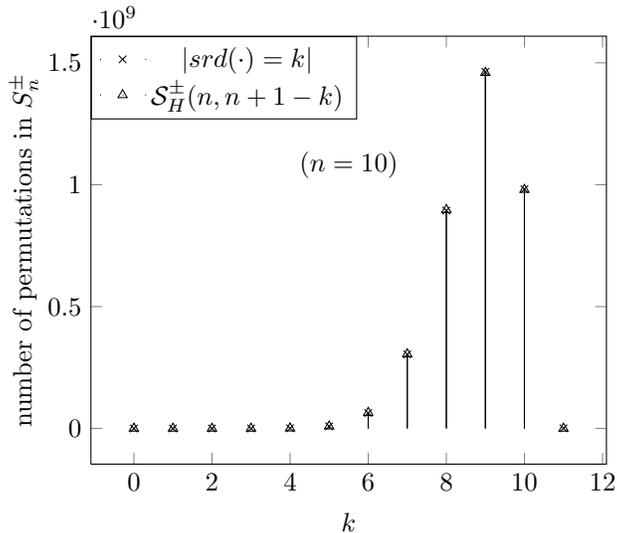

Other distances have not been studied with that level of detail, which is why we find it interesting to try to relate their distribution to that of the Hultman numbers. A particular restriction of the signed reversal distance is the \emph{prefix} signed reversal distance, denoted by $psrd(\cdot)$, whose definition follows that of the signed reversal distance except that reversals can only act on an initial segment of the permutation. No formula is known for computing that distance, and the computational complexity of the problem has remained open since the first works on the subject~\cite{cohen-burnt}. However, a lower bound based on the breakpoint graph was recently obtained by \citet{labarre-burnt-pancakes}, which naturally prompts us to wonder how exactly we can rely on the breakpoint graph to approximate that distance.

\begin{theorem}\label{thm:lower-bound-on-psrd}
\cite{labarre-burnt-pancakes} For any $\pi$ in $S^\pm_n$, we have
\begin{eqnarray}\label{eqn:lower-bound-on-psrd}
 psrd(\pi)\geq
n+1+c(BG(\pi))-2c_1(BG(\pi))-\left\{
\begin{array}{ll}
0 & \mbox{if } \pi_1= 1, \\
2 & \mbox{otherwise}.
\end{array}
\right.
\end{eqnarray}
\end{theorem}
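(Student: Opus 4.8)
The plan is to prove the bound by a potential-function (discharging) argument on the breakpoint graph. Define
$$\Phi(\pi)=n+1+c(BG(\pi))-2c_1(BG(\pi))-\epsilon(\pi),$$
where $\epsilon(\pi)=0$ if $\pi_1=1$ and $\epsilon(\pi)=2$ otherwise, so that the claim is exactly $psrd(\pi)\ge\Phi(\pi)$. First I would check the base case: for the identity we have $c(BG(\iota))=c_1(BG(\iota))=n+1$ and $\pi_1=1$, so $\Phi(\iota)=(n+1)+(n+1)-2(n+1)-0=0$. Since an optimal sorting sequence produces $\pi=\sigma_0,\sigma_1,\ldots,\sigma_{psrd(\pi)}=\iota$ where each $\sigma_{t+1}$ is obtained from $\sigma_t$ by a prefix signed reversal, it then suffices to prove the key lemma that no single prefix signed reversal decreases $\Phi$ by more than $1$; telescoping along the sequence immediately yields $\Phi(\pi)\le psrd(\pi)$.

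To prove the key lemma I would translate a prefix signed reversal into the language of the breakpoint graph. Passing to the doubled permutation $\pi'$, a prefix signed reversal acting on $\pi_1,\ldots,\pi_k$ reverses the initial block $\pi'_1,\ldots,\pi'_{2k}$; since reversing a contiguous block preserves the set of internal black edges, this is a $2$-break cutting exactly two black edges — the leftmost edge $e_0$ incident to vertex $0$ and the edge $e_j=\{\pi'_{2k},\pi'_{2k+1}\}$ — and reconnecting them. Writing $\Delta c$, $\Delta c_1$, $\Delta\epsilon$ for the differences $X(\rho\pi)-X(\pi)$, I would record the standard facts that any such $2$-break satisfies $\Delta c\in\{-1,0,+1\}$, with $\Delta c=-1$ exactly when $e_0,e_j$ lie in different cycles and $\Delta c\in\{0,+1\}$ when they lie in the same cycle, and that $1$-cycles can only be created or destroyed at the two reconnected edges, so $\Delta c_1\in\{-2,-1,0,+1,+2\}$. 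The target inequality is then $\Delta\Phi=\Delta c-2\Delta c_1-\Delta\epsilon\ge-1$.

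The heart of the argument — and the step I expect to be the main obstacle — is exploiting the coupling between the leftmost edge $e_0$ and the correction term $\epsilon$, which is precisely what prevents $\Phi$ from dropping by $2$ or more. The crucial observations are: (i) an adjacency at $e_0$ is equivalent to $\pi_1=1$, so a $1$-cycle is created at $e_0$ exactly when $\epsilon$ drops from $2$ to $0$ (so $\Delta\epsilon=-2$) and destroyed at $e_0$ exactly when $\epsilon$ rises from $0$ to $2$; and (ii) if $\pi_1=1$ before the move then $e_0$ is itself a $1$-cycle, which forces $e_0,e_j$ into different cycles and hence $\Delta c=-1$ together with $\Delta c_1\le-1$. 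I would then run the case analysis on whether $e_0,e_j$ share a cycle and on the sign of $\Delta\epsilon$. The extremal configuration attaining $\Delta\Phi=-1$ is a prefix reversal splitting a single $2$-cycle containing both cut edges into two $1$-cycles, which necessarily leaves $\pi_1=1$ afterwards: there $\Delta c=+1$, $\Delta c_1=+2$, $\Delta\epsilon=-2$, giving $\Delta\Phi=1-4+2=-1$. Checking that every other configuration yields $\Delta\Phi\ge-1$ — in particular that any large positive $\Delta c_1$ is always paid for by a compensating $\Delta\epsilon$ of the correct sign, and that no prefix reversal with $k\ge1$ can leave $\pi_1=1$ both before and after (for $k=1$ the sign of $\pi_1=1$ flips, and for $k\ge2$ position $1$ receives an element of absolute value $\neq1$) — completes the proof.
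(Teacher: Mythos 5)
The paper does not prove this statement: it is quoted from \citet{labarre-burnt-pancakes} and used as a black box, so there is no internal proof to compare against. Judged on its own, your potential-function plan is sound and I was able to verify the key lemma along the lines you propose. Your translation of a prefix signed reversal into a $2$-break on the black edges $e_0=\{0,\pi'_1\}$ and $e_j=\{\pi'_{2k},\pi'_{2k+1}\}$ is correct, the base case $\Phi(\iota)=0$ and the telescoping are correct, and your observations (i) and (ii) hold (for (ii) one also needs that the new edge $\{\pi'_1,\pi'_{2k+1}\}=\{1,\pi'_{2k+1}\}$ can never be an adjacency, since that would force $\pi'_{2k+1}=0$).

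Two remarks that would tighten the write-up. First, the case analysis collapses more than you expect: writing $[A],[B],[A'],[B']$ for the indicators that the old and new black edges at position $0$ and at position $2k$ are adjacencies, one has $\Delta c_1=[A']+[B']-[A]-[B]$ and $\Delta\epsilon=2([A]-[A'])$, so the $e_0$-terms cancel exactly and
\begin{equation*}
\Delta\Phi=\Delta c+2\bigl([B]-[B']\bigr).
\end{equation*}
Hence the only configuration requiring any work is $[B']=1$, $[B]=0$. Second, in that configuration your stated heuristic (``any large positive $\Delta c_1$ is always paid for by a compensating $\Delta\epsilon$'') is not the right accounting: when a new adjacency is created only at $e_j'$ (so $\Delta c_1=+1$ and $\Delta\epsilon=0$), the deficit is paid for by $\Delta c=+1$, not by $\Delta\epsilon$. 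The fact that saves you is that if $\{\pi'_1,\pi'_{2k+1}\}$ is a grey edge of $BG(\pi)$, then $e_0$ and $e_j$ already lie on a common cycle through that grey edge, and the prescribed reconnection necessarily splits off the $1$-cycle on $\{\pi'_1,\pi'_{2k+1}\}$, forcing $\Delta c=+1$. This is exactly the kind of fact your planned case analysis on ``whether $e_0,e_j$ share a cycle'' would surface, but it should be stated explicitly, since without it the bound $\Delta\Phi\ge-1$ does not follow from the inequalities $\Delta c\ge-1$ and $\Delta c_1\le 2$ alone.
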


\Cref{plot:psrd-versus-shn} shows a plot with the distribution of the prefix signed reversal distance and that of the signed Hultman numbers, as well as of the distribution of lower bound~\eqref{eqn:lower-bound-on-psrd} for $n=10$. It can be seen on that graph that the
latter is quite far off from the distribution of the prefix signed reversal distance, hinting that additional work seems needed to reduce the gap between the lower bound and the actual distance.

\begin{figure}[htbp]
\pgfplotsset{every axis legend/.append style={
at={(1.05,1)},
anchor=north west}}
\pgfplotsset{every axis/.append style={
extra description/.code={
\node at (0.3,0.6) {$(n=10)$};
}}}
\centering
\begin{tikzpicture}
\begin{axis}[
  xlabel=$k$,
  ylabel=number of permutations in $S^\pm_n$,
  y label style={at={(0.05,0.5)}}  ]
\pgfplotstableread{psrd-versus-cycles-10.dat.tex}\table
\addplot[mark=x,color = black]        table[x=n,y=psrd] from \table;
\addplot[mark=*,color = black]        table[x=n,y=inv] from \table;
\addplot[mark=square*,color = black]  table[x=n,y=labarre_cibulka_lb] from
\table;
\legend{$|psrd(\cdot)=k|$,$\signedhultmannumber{n}{n+4-k}$,$|$lower bound~\eqref{eqn:lower-bound-on-psrd} $=k|$}
\end{axis}
\end{tikzpicture}
\caption{The distributions of the prefix signed reversal distance, of the signed Hultman numbers, and of the number of permutations for which lower bound~\eqref{eqn:lower-bound-on-psrd} equals $k$, for $n=10$.}
\label{plot:psrd-versus-shn}
\end{figure}
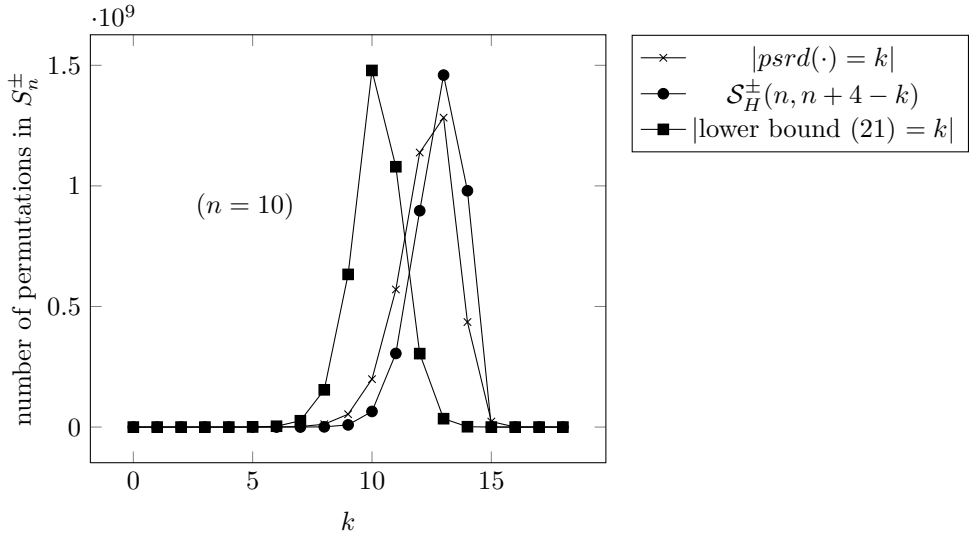

\section{Conclusions}

In this paper, we proved the first explicit formula for
enumerating signed permutations whose breakpoint graph contains a
given number of cycles, and proved simpler expressions for
particular cases. We also obtained a new expression for
enumerating unsigned permutations whose breakpoint graph contains
a given number of cycles, and used both formulas to derive simpler
proofs of some other previously known results. Getting more
insight into breakpoint graphs and their cycle decomposition is
particularly relevant to edit distances used in the field of
genome rearrangements, and we hope that our results can help shed
light on their distributions, expected values and variances. There
are several interesting directions in which our work could be
extended, which we outline and motivate below.

Just like one can define conjugacy classes in the symmetric and hyperoctahedral groups, we could investigate conjugacy classes with respect to the breakpoint graph. This was already initiated by \citet{doignon-hultman}, who referred to them as ``Hultman classes'' and provided explicit formulas for enumerating those classes in the case of unsigned permutations. More work remains to be done in the unsigned case: indeed, the work done by \citet{bona-average} provides us with a very nice formula for computing the distribution of cycles, but no simpler expression than the complicated ones obtained by \citet{doignon-hultman} is yet known for enumerating Hultman classes or their cardinalities. Moreover, no work so far has been done in order to enumerate Hultman classes in the signed setting, and obtaining an expression for enumerating the so-called ``simple permutations'',
which are defined in this context as permutations
whose breakpoint graph contains no cycle of length greater than $2$, seems especially interesting (for more information about the importance
of those permutations in genome rearrangements, see \citet{hannenhalli-transforming} and \citet{labarre-burnt-pancakes}).

The expression we obtained for the signed Hultman numbers is quite useful in practice, since it allows us to obtain the distribution of those numbers for large values of $n$.
Unfortunately, it does not seem easy to use in order to gain insights and have an intuitive interpretation of the shape of the distribution,
which would be useful in order to know how this distribution can be approximated or how it grows as $n$ increases.
Finding simpler generating functions, recurrence relations
or nicer
formulas would be useful in that regard and in order to obtain
more information on the properties of this distribution.

The connection between the cycle structure of breakpoint graphs
and factorisations of even permutations
(\Cref{unsigned-hultman-counts-factorisations},
page~\pageref{unsigned-hultman-counts-factorisations}) proved
useful not only in characterising the distribution of those cycles
and of the related cycle types, but also provided the foundations
of a simple and generic method for obtaining lower bounds on
\emph{any} ``revertible'' edit distance between unsigned
permutations (see \citet{labarre-edit} for more details). Is there
any way to use the results and connections obtained in
\Cref{sec:formula-signed-hultman} in order to obtain similar
results for signed permutations?

Finally, recall that permutations are just one way of modelling genomes. One natural direction would be to investigate the distribution of cycles in the breakpoint graph of other structures, like set systems or ``fragmented'' permutations (see again \citet{fertin-combinatorics} for an overview of existing models).

\section{Acknowledgements}

The first author was partially supported by the ANR MAEV under
contract ANR-06-BLAN-0113. Both authors also wish to thank the
group ``Evolution Biologique et Mod\'elisation'', LATP, Universit\'e
de Provence, where part of this research was performed, as well as Mathilde
Bouvel for bringing reference~\cite{kwak-genus} to their attention.

\bibliographystyle{elsarticle-num-names}
\bibliography{signed-hultman-numbers}

\end{document}